\newcommand{\hc}[1]{%
    \tikz[baseline=(tocancel.base)]{
        \node[inner sep=0pt,outer sep=0pt] (tocancel) {#1};
        \draw[black] (tocancel.south west) -- (tocancel.north east);
    }%
}
\newtheorem{theorem}{Theorem}
\newtheorem{corollary}{Corollary}
\newtheorem{proposition}{Proposition} 
\newtheorem{conjecture}{Conjecture}
\newtheorem{definition}{Definition}
\newtheorem{remark}{Remark}
\newtheorem{lemma}{Lemma }
\newtheorem{notation}{Notation}
\newcommand{\wt}{\widetilde}
\newcommand{\wh}{\widehat}
\newcommand{\lra}{\leftrightarrow}
\newcommand{\mb}{\mathbf}
\newcommand{\mf}{ }
\newcommand{\si}{Sc}
\newcommand{\tr}{Tr}
\newcommand{\mse}{\ensuremath{\textsc{MSE}}}
\newcommand{\RDSI}{\ensuremath{\textsc{RDSI}}}
\newcommand{\RDTr}{\ensuremath{\textsc{RDTr}}}
\newcommand{\RDmse}{\ensuremath{\textsc{RDmse}}}
\newcommand{\ELB}{\ensuremath{\textsc{E}^2\textsc{LB}}}
\newcommand{\DSM}{\ensuremath{\textsc{ELB}}}
\newcommand{\MLB}{\ensuremath{\textsc{MLB}}}
\newcommand{\mLB}{\ensuremath{\textsc{mLB}}}
\DeclareMathOperator{\Tr}{Tr}
\begin{document}

\sloppy

\title{Vector Gaussian Rate-Distortion \\ with Variable Side Information} 

\author{Sinem Unal and Aaron B. Wagner~
\thanks{S.~Unal and A.~B.~Wagner are with  Cornell University, School of Electrical \& Computer Engineering, Ithaca, NY 14853 USA (e-mail: su62@cornell.edu, wagner@cornell.edu). This paper was presented in parts at the IEEE Int. Symposium on Information Theory (ISIT), Hawaii, July 2014, the Information Theory and Applications Workshop, Jan./Feb. 2016, and the Conference on Information Sciences and Systems (CISS), Princeton, March 2016.}
}

\maketitle

\begin{abstract}
We consider rate-distortion with two decoders, each
with distinct side information. This problem is well understood 
when the side information at the decoders satisfies a 
certain degradedness condition. We consider cases in which 
this degradedness condition is violated but the source and the 
side information consist of jointly Gaussian vectors. We provide
a hierarchy of four lower bounds on the optimal rate. These
bounds are then used to determine the optimal rate for several
classes of instances.
\end{abstract}

\section{Introduction}

We consider a rate-distortion problem with multiple decoders,
each with potentially different side information, as shown in 
Fig.~\ref{fig:setup}. 
This problem, which is sometimes called the \emph{Heegard-Berger
problem}, is one of the most basic network information
theory problems that is not well understood, and it can
arise in a couple of ways.
First, the side information may represent reconstructions of
the source due to prior transmissions. If the decoders have
received different transmissions in the past, either due to
channel loss or because they are not always listening to the transmitter,
then their side information will be different.
Second, it can be viewed as an instance of a ``compound Wyner-Ziv,''
problem, in which there is in reality only one decoder, but the
encoder is uncertain about its side information. The side
information associated with different decoders in the problem
could then represent the transmitter's view of the set of possible
side information configurations at the decoder. The
transmitter then seeks to construct a message that would work
reasonably well for any of these possible side information
configurations. 
\begin{figure}[t]
  \begin{center}
	 \includegraphics[scale=0.6]{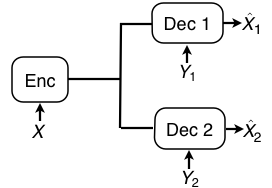}
	\caption{Problem Setup}
   \label{fig:setup}
 \end{center}
\end{figure}

The problem is well understood when the problem is \emph{degraded},
i.e., the side information at one of the decoders is stochastically 
degraded with respect to the other's~\cite{berger}. Recently, the
following class of instances, schemes for which are called
\emph{index coding}, has received particular attention~\cite{birk1, birk}.
The source at each time step is a vector of
independent and identically distributed (i.i.d.) bits,
each side information variable is a subset of these bits,
and the goal of each decoder is to losslessly reproduce
a subset of the bits that are not contained in its side information.
Treating the source as an i.i.d. vector of uniform bits is appropriate if the source is first compressed by an optimal rate-distortion encoder. Thus index coding implicitly assumes a separation-based architecture in which lossy compression is performed first and then the broadcasting  with side information is performed at the bit level. Ideally, one would consider both types of coding jointly. In previous work \cite{sinem_index}, we studied the index coding problem using tools from network 
information theory, in contrast to most past work on index coding which used techniques from network coding and graph theory. One of the advantages of using network 
information-theoretic tools, which was not pursued in the previous work, is that it allows one to consider the problems of lossy compression and coding for side information together, by allowing for a richer class of source models and distortion constraints. Our goal in this paper is to study systems that involve both lossy compression and coding for side information.

We shall focus on the case in which the source and the side
information at the decoders are all jointly Gaussian vectors.
This class of instances is important in applications, since
vector Gaussian sources are natural stepping stones on the
path from discrete memoryless sources to more sophisticated 
models of multimedia. The vector Gaussian setup can also 
be motivated theoretically since, like index coding, it is
one of the simplest classes of instances that are not degraded 
in general. We shall focus on the case of two decoders; unlike
index coding, for vector Gaussian problems even the two-decoder
case is nontrivial.

We provide a hierarchy of four lower bounds on the optimal
rate. For three separate special cases, we show that at least
one of the lower bounds matches the best-known achievable
rate~\cite{timo,berger}, thereby determining the optimal rate.
The four lower bounds are all obtained using variations on the
following argument. Since the rate-distortion function is known
when the side information is degraded~\cite{berger}, a natural approach to proving lower bounds is to \emph{enhance}
the side information of one encoder or the other in order to
make the problem degraded. The optimal rate for the newly-obtained instance is thus known and provides a lower bound on the optimal rate for the original instance. This idea can be applied several ways, leading to lower bounds of varying strength and usability. The weakest of these bounds is quite weak but also quite simple. The strongest, on the other hand, is quite strong but also difficult to apply. The intermediate bounds attempt to provide the best attributes of both.

We consider three different distortion constraints, all phrased
as constraints on the error covariance matrices, averaged over the block, at the two decoders. The first stipulates an
upper bound on the mean square error of the reproduction of
each component of the source; this can be viewed as constraints
on the diagonal elements of the time-average error covariance 
matrix. The second requires that the average error covariance
matrix itself must be dominated, in a positive definite sense,
by a given scaled identity matrix. In the final case, we 
require the trace of the average error covariance matrix
to be upper bounded by a constant. For each of the
three distortion measures, we solve a class of instances
using the lower bounds developed in the paper.
The necessary achievability arguments are standard, although
our analysis does provide insight into how the auxiliary
random variables therein should be chosen. Specifically,
we show how to divide the signal space into ``regions,''
in which the side information at one decoder is
``stronger'' than that of the other. We then show that
it is optimal for certain auxiliary random variables to
live in certain of these regions.  

The balance of the paper is organized as follows. 
The next two sections provide the problem formulation
and the four lower bounds,
respectively. Section \ref{sec:main_results} 
contains the statements of our optimality results for
all three cases described above. The achievability
analysis for these problems is presented in 
Section \ref{sec:ach_scheme}.  Section~\ref{sec:converse} 
shows how the lower bounds can be used to prove the
converse half of the optimality results.
Section~\ref{sec:conclusion} contains a
brief epilogue describing a conjectured difference 
among the lower bounds.

\section{Problem Definition}
\label{section:prob_defn}
Let $\mb{X}, \mb{Y_1}, \mb{Y_2}$ \footnote{We use bold letters to denote vectors.} be correlated vector Gaussian sources
\footnote{Unless otherwise is stated, we assume that all Gaussian random variables are zero mean.} of size $k\times1$, $k_1\times1$ and $k_2\times1$ respectively
 where  $\mb{X}$  is the source to be compressed at the encoder and  $\mb{Y_1}$
and $\mb{Y_2}$ comprise the side information at Decoder $1$ and Decoder $2$, respectively. We assume that the conditional covariance matrix of $\mb{X}$ given $\mb{Y_i}$,
 $K_{\mb{X}| \mb{Y_i}}$, $i \in \{1,2\}$ is invertible. Both Decoder $1$ and  $2$ wish to reconstruct $\mb{X}$ subject to given distortion constraints. 
The objective is to characterize the rate distortion function for this setting. The following definitions are used to  formulate the problem.

\begin{definition}
\label{defn:gamma}
$\Gamma_i$, $i \in \{1,2\}$ is defined as a mapping from the set of all $k \times k$ positive semi-definite (PSD) matrices to the set of $k_0 \times k_0$ PSD matrices such that
\\
1) $\Gamma_i(\cdot)$ is linear,
\\
2) $A \preceq B$\footnote{$A \preceq B$ means that $B-A$ is a positive semidefinite matrix.} implies that $\Gamma_i(A) \preceq \Gamma_i(B)$.
\end{definition}
\begin{definition}
An $(n, M, D_1, D_2)$ \emph{code} where $D_1$ and $D_2$ are positive
definite matrices,
is composed of 
\begin{itemize}
\item an \emph{encoding function}
\begin{align*}
f  : \mathbb{R}^{kn}  \rightarrow  \{ 1, ... , M \}
\end{align*}
\item and \emph{decoding functions}
\begin{align*}
g_1 & : \{ 1, ... , M \}  \times  \mathbb{R}^{k_1n}  \rightarrow \mathbb{R}^{kn} \\
g_2 & : \{ 1, ... , M \}  \times  \mathbb{R}^{k_2n}  \rightarrow \mathbb{R}^{kn} 
\end{align*}
\end{itemize}
satisfying the distortion constraints
\begin{align*}
 & E\left[\frac{1}{n} 
\sum_{k = 1}^{n} \Gamma_i\left((\mb{X}_k - \wh{\mb{X}}_{\mb{i}k})(\mb{X}_k - \wh{\mb{X}}_{\mb{i}k})^T\right)\right]  \preceq D_i  ,  \ \  i \in \{1,2\}
\end{align*}
where $\mb{\wh{X}^n_1} = g_1(f(\mb{X^n}), \mb{Y_1^n})$, and $\mb{\wh{X}^n_2} = g_2(f(\mb{X^n}), \mb{Y^n_2})$.
We call $n$ the \emph{block length} and $M$ the \emph{message size} of
the code.
\end{definition}
  
\begin{definition}
A rate $R$ is $(D_1, D_2)$-\emph{achievable} if for every $\epsilon > 0$, there exists an 
$(n,M, D_1+\epsilon I, D_2 + \epsilon I)$ \emph{code} such that $n^{-1}\log{M} \le R + \epsilon$ . 
\end{definition}

\begin{definition}
The \emph{rate-distortion function} is defined as
\begin{align*}
R(\mb{D}) = \inf\{R : R  \textrm{ is } \mb{D} \textrm{-achievable} \}, 
\end{align*}
where $\mb{D}=(D_1, D_2)$.
\end{definition}

We shall prove our lower bounds for arbitrary distortion measures $\Gamma$ satisfying the requirements of Definition~\ref{defn:gamma}. 
We conclude this section by introducing the following notations used in rest of the paper.
\begin{notation}
Let $\mb{X}$ be a $k\times 1$ vector where $k =l_1 +l_2$. Then
$(\mb{X})_{l_1}$ denotes the $l_1\times 1$ vector consisting of the first $l_1\times 1$ components of $\mb{X}$ and $[\mb{X}]_{l_2}$ denotes the remaining part of $\mb{X}$.
\end{notation}

\begin{notation}
Let $E$ be a $p\times p$ matrix. Then
$(E)_{ij}$ denotes the element of $E$ which is in the $i^{th}$ row and $j^{th}$ column of $E$.
\end{notation}

\begin{notation}
\label{not:l1}
Let $E$ and $F$ be $p\times p$ and $r\times r$ matrices where $p \geq l_1$ and $r \geq l_2$. Then
$(E)_{l_1}$ denotes the upper-left $l_1\times l_1$ submatrix of $E$ and 
 $[F]_{l_2}$ denotes the lower-right $l_2\times l_2$ submatrix of $F$. 
\end{notation}

\begin{notation}
\label{not:diag}
Let $E$ and $F$ be $p\times p$ and $r\times r$ matrices where $p \geq l_1$ and $r \geq l_2$. Then $(E)_{diag}$ denotes the $p\times p$ diagonal matrix whose diagonal elements are the same as  that of  $E$. Also,
$(E)_{l_1diag}$ denotes the $l_1\times l_1$ diagonal matrix whose diagonal elements are the same as  that of upper-left $l_1\times l_1$ submatrix of $E$ and 
 $[F]_{l_2diag}$ denotes the $l_2\times l_2$ diagonal matrix whose diagonal elements are the same as  that of lower-right $l_2\times l_2$ submatrix of $F$. 
\end{notation}

\begin{notation}
 Let $E$ and $F$ be $p\times p$ diagonal matrices. Then  
$\min\{E,F\}$ denotes the $p\times p$ diagonal matrix whose each diagonal entry is the minimum of corresponding diagonal entries of $E$ and $F$.
\end{notation}

\begin{notation}
Let $(\mb{X},\mb{\mb{Y}},\mb{Z})$ be a random vector. Then $\mb{X} \perp \mb{\mb{Y}} |\mb{Z}$ denotes that $\mb{X}$ and $\mb{\mb{Y}}$ are independent given $\mb{Z}$, $\mb{X} \lra \mb{\mb{Y}} \lra \mb{Z}$ denotes that $\mb{X}$, $\mb{\mb{Y}}$ and $\mb{Z}$ forms a Markov chain, and $K_{\mb{X}}$ denotes the covariance matrix of $\mb{X}$.
\end{notation}

\section{Lower Bounds}
\label{sec:lower}

We turn to lower bounds on the optimal rate. We shall provide 
four such bounds. In order of strongest (largest) to weakest (smallest),
these are
\begin{enumerate}
\item The \emph{Minimax bound} ($\mLB$);
\item The \emph{Maximin bound} ($\MLB$);
\item The \emph{Enhanced-Enhancement bound} (Enhanced-$\DSM$ or \ELB);
\item The \emph{Enhancement bound} ($\DSM$).
\end{enumerate}

Although the Maximin bound, the Enhanced-Enhancement bound, and the Enhancement bound
are never larger than the Minimax bound, they are useful in that
they are simpler to work with in some respects.  We begin
with the simplest, and weakest, of the bounds. This bound is 
folklore, and it turns out to be quite weak indeed.

\subsection{Enhancement Lower Bound}

If the side information at the decoders is \textit{degraded}, meaning
that we can find a joint distribution of $(\mb{X}, \mb{Y_1}, \mb{Y_2})$ 
such that
\begin{align}
\mb{X} \lra \mb{Y_{\sigma(1)}} \lra \mb{Y_{\sigma(2)}}
\end{align}
for some permutation $\sigma(.)$, then the rate distortion function is
known~\cite{berger, timo}.
Hence a natural way to obtain a lower bound to $R(\mb{D})$ is to 
create degraded problems by providing extra side information to one 
decoder or the other.
We call this lower bound \textit{enhancement lower bound}, abbreviated as $\DSM$, due to its similarity to the converse results for broadcast channels \cite{ vishwanath}. Proposition \ref{prop:DSM} states this lower bound.

\begin{proposition}
\label{prop:DSM}
The rate distortion function $R(\mb{D})$ is lower bounded by
 \begin{align}
 \label{eq:DSM}
R_{\DSM}(\mb{D}) = \max\{\sup_{S_G}\inf_{\wt{C}_{l1}(\mb{D})} R_{lo1}, \sup_{S_G} \inf_{\wt{C}_{l2}(\mb{D})} R_{lo2}\}, 
\end{align}
 where 
\begin{align}
\label{rlo1}
R_{lo1} &= I(\mb{X};\mb{W},\mb{U}|\mb{Y_1}) + I(\mb{X};\mb{V}|\mb{W},\mb{U},\mb{Y}), \\
\label{rlo2}
R_{lo2} &= I(\mb{X};\mb{W},\mb{V}|\mb{Y_2}) + I(\mb{X};\mb{U}|\mb{W},\mb{V},\mb{Y}),
\end{align} 
 $S_G = \{\mb{Y} \mbox{ jointly Gaussian with } (\mb{X},\mb{Y_1},\mb{Y_2}) | \mb{X} \lra \mb{Y} \lra (\mb{Y_1}, \mb{Y_2})   \}$,
 and
\begin{align*} 
&\wt{C}_{l1}(\mb{D})  \mbox{ is the set of } (\mb{W},\mb{U},\mb{V}) \mbox{ such that } &\mbox{ } & \wt{C}_{l2}(\mb{D}) 
\mbox{ is the set of } (\mb{W},\mb{U},\mb{V}) \mbox{ such that }
\\
& (\mb{W},\mb{U},\mb{V}) \lra \mb{X} \lra (\mb{Y},\mb{Y_1}, \mb{Y_2})  
&\mbox{ }
& (\mb{W},\mb{U},\mb{V}) \lra \mb{X} \lra (\mb{Y},\mb{Y_1}, \mb{Y_2}) 
 \\
& \Gamma_1\left(K_{\mb{X}|\mb{W},\mb{U},\mb{Y_1}}\right) \preceq D_1, \mbox{ } \Gamma_2\left(K_{\mb{X}|\mb{W},\mb{U},\mb{V},\mb{Y}} \right)\preceq D_2  
&\mbox{ }
& \Gamma_1\left(K_{\mb{X}|\mb{W},\mb{U},\mb{V} ,\mb{Y}}\right) \preceq D_1, \mbox{ }  \Gamma_2\left(K_{\mb{X}|\mb{W},\mb{V},\mb{Y_2}}\right) \preceq D_2. 
\end{align*}
\end{proposition}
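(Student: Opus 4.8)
The plan is to bound the two arguments of the maximum separately, and by the symmetry between the two decoders it suffices to prove $R(\mb{D}) \ge \sup_{S_G}\inf_{\wt{C}_{l1}(\mb{D})} R_{lo1}$. For this it is enough to fix an arbitrary $\mb{Y} \in S_G$ and show $R(\mb{D}) \ge \inf_{\wt{C}_{l1}(\mb{D})} R_{lo1}$ for that $\mb{Y}$; taking the supremum over all such $\mb{Y}$ then gives the claim.

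First I would introduce the \emph{enhanced} problem $\mathcal{P}_{\mb{Y}}$ obtained from the original one by giving Decoder $2$ the additional side information $\mb{Y}$, so that its side information becomes $(\mb{Y},\mb{Y_2})$, with Decoder $1$ left unchanged. Any $(n,M,D_1+\epsilon I, D_2+\epsilon I)$ code for the original problem is such a code for $\mathcal{P}_{\mb{Y}}$ as well---Decoder $2$ simply discards $\mb{Y}$---so, writing $R_{\mb{Y}}(\mb{D})$ for the rate--distortion function of $\mathcal{P}_{\mb{Y}}$, we have $R(\mb{D}) \ge R_{\mb{Y}}(\mb{D})$. The point of adding $\mb{Y}$ is that $\mathcal{P}_{\mb{Y}}$ is \emph{degraded}: the defining Markov chain $\mb{X} \lra \mb{Y} \lra (\mb{Y_1},\mb{Y_2})$ of $S_G$ gives $p(\mb{y_1}\mid\mb{x},\mb{y},\mb{y_2}) = p(\mb{y_1}\mid\mb{y},\mb{y_2})$, i.e. $\mb{X}\lra (\mb{Y},\mb{Y_2}) \lra \mb{Y_1}$, so Decoder $1$'s side information is stochastically degraded with respect to Decoder $2$'s. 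Hence $R_{\mb{Y}}(\mb{D})$ is given by the Heegard--Berger characterization~\cite{berger,timo}, with Decoder $1$ in the role of the ``weak'' decoder and Decoder $2$ in the role of the ``strong'' one.

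The remaining work is to put that characterization into the stated form. For a degraded two--decoder problem with weak side information $\mb{Y_1}$ and strong side information $(\mb{Y},\mb{Y_2})$, the optimal rate equals
\begin{align*}
\min\Big[\, I(\mb{X};\mb{A}\mid\mb{Y_1}) + I(\mb{X};\mb{B}\mid\mb{A},\mb{Y},\mb{Y_2}) \,\Big],
\end{align*}
the minimum being over $(\mb{A},\mb{B})$ with $(\mb{A},\mb{B})\lra\mb{X}\lra(\mb{Y},\mb{Y_1},\mb{Y_2})$ for which the weak decoder can reconstruct $\mb{X}$ from $(\mb{A},\mb{Y_1})$ and the strong decoder from $(\mb{A},\mb{B},\mb{Y},\mb{Y_2})$ within the prescribed distortions. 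Because $\Gamma_i$ is linear and monotone and the error covariance is minimized in the PSD order by the conditional mean, the averaged matrix distortion constraints are equivalent to $\Gamma_1(K_{\mb{X}\mid\mb{A},\mb{Y_1}})\preceq D_1$ and $\Gamma_2(K_{\mb{X}\mid\mb{A},\mb{B},\mb{Y},\mb{Y_2}})\preceq D_2$. Finally, combining $(\mb{A},\mb{B})\lra\mb{X}\lra(\mb{Y},\mb{Y_1},\mb{Y_2})$ with $\mb{X}\lra\mb{Y}\lra(\mb{Y_1},\mb{Y_2})$ yields $\mb{Y_2}\lra\mb{Y}\lra(\mb{X},\mb{A},\mb{B})$, and therefore $K_{\mb{X}\mid\mb{A},\mb{B},\mb{Y},\mb{Y_2}} = K_{\mb{X}\mid\mb{A},\mb{B},\mb{Y}}$ and $I(\mb{X};\mb{B}\mid\mb{A},\mb{Y},\mb{Y_2}) = I(\mb{X};\mb{B}\mid\mb{A},\mb{Y})$. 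Relabelling $\mb{A}=(\mb{W},\mb{U})$ (a single auxiliary may be viewed as a pair, and a pair merged into one) and $\mb{B}=\mb{V}$, the displayed minimum becomes precisely $\inf_{\wt{C}_{l1}(\mb{D})} R_{lo1}$. Thus $R(\mb{D}) \ge R_{\mb{Y}}(\mb{D}) = \inf_{\wt{C}_{l1}(\mb{D})} R_{lo1}$; taking the supremum over $\mb{Y}\in S_G$ and then combining with the analogous bound for Decoder $2$ gives $R(\mb{D}) \ge R_{\DSM}(\mb{D})$.

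The Markov--chain identities and the reduction of the averaged matrix distortion constraint to a constraint on a single conditional covariance are routine. The one step deserving care---and the main obstacle---is invoking the degraded Heegard--Berger rate--distortion function at the right level of generality: here the source and side information are continuous (jointly Gaussian), the auxiliaries $\mb{W},\mb{U},\mb{V}$ are arbitrary, and the distortion is the matrix--valued constraint rather than a single bounded scalar distortion. One handles this either by appealing to a version of~\cite{timo} that already accommodates this setting, or by noting that $E[\Gamma_i(\cdot)]\preceq D_i$ is equivalent to the family of scalar constraints $\Tr(\Theta\,E[\Gamma_i(\cdot)])\le\Tr(\Theta D_i)$ for $\Theta\succeq 0$ and checking that the degraded characterization goes through. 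Once this is granted, the identification with $R_{lo1}$ and $\wt{C}_{l1}(\mb{D})$ is bookkeeping.
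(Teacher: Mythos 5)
Your proof is correct and follows exactly the route the paper intends: the paper presents Proposition~\ref{prop:DSM} as a folklore bound obtained by handing one decoder the enhanced side information $\mb{Y}$, observing that $\mb{X}\lra\mb{Y}\lra(\mb{Y_1},\mb{Y_2})$ makes the resulting problem degraded, and invoking the known degraded Heegard--Berger characterization; the paper does not write out a separate proof. Your reduction of the distortion constraint and the elimination of $\mb{Y_2}$ via the Markov chain, as well as your caveat about the level of generality at which the degraded characterization is invoked, match the paper's (implicit) treatment.
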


The $\DSM$ is quite weak. Consider, for example, what is arguably
the simplest nontrivial instance of the problem: the source $\mb{X}$ is bivariate,  $K_{\mb{X}}$, $K_{\mb{X}|\mb{Y_1}}$, 
and $K_{\mb{X}|\mb{Y}_2}$ are all diagonal, and the reconstructions at decoders are subject to component-wise MSE distortion constraints.
This is essentially the
parallel scalar Gaussian version of the problem. If the overall
problem is degraded then the $\DSM$ is of course tight. But if one
of the two components is degraded in one direction and the other
component is degraded in the other, then Watanabe~\cite{watanabe}
has shown that the $\DSM$ is not tight, at least for the natural
choice of $\mb{Y}$ that has
$$
K_{\mb{X}|\mb{Y}} = \min\left(K_{\mb{X}|\mb{Y_1}}, K_{\mb{X}|\mb{Y}_2}\right).
$$

Comparing the $\DSM$ against the achievable bound in 
Theorem~\ref{thm:achievable} to follow, one sees
several potential sources of looseness. We shall see that the culprit is that the distortion constraints
\begin{align*}
\Gamma_1\left(K_{\mb{X}|\mb{W},\mb{U},\mb{Y_1}}\right) & \preceq D_1
\\
\Gamma_2\left(K_{\mb{X}|\mb{W},\mb{V},\mb{Y_2}} \right) & \preceq D_2  
\end{align*}
in the achievable bound in Theorem  \ref{thm:achievable} have been weakened to
\begin{align*}
\Gamma_1\left(K_{\mb{X}|\mb{W},\mb{U},\mb{V} ,\mb{Y}}\right) 
& \preceq D_1
\\
\Gamma_2\left(K_{\mb{X}|\mb{W},\mb{U},\mb{V},\mb{Y}} \right) 
   & \preceq D_2  
\end{align*}
here.
Weakening the constraints in this way allows less informative
$(\mb{W},\mb{U},\mb{V})$ to be feasible, because one can
make use of the enhanced side information $\mb{Y}$ for
estimation purposes. We shall make this intuition precise by 
showing that the Maximin and Enhanced-Enhancement lower bound, which differ from the $\DSM$  only in the distortion constraints, are tight for this problem. For reasons of expeditiousness, we shall state and prove the Minimax lower bound first, and then weaken it to obtain the Maximin and  Enhanced-Enhancement lower bound.

\subsection{Minimax Lower Bound}
\label{subsec:minimax}

Theorem \ref{thm:minmax} states the \textit{Minimax lower bound}, abbreviated as $\mLB$, to the rate distortion problem.
\begin{theorem}
\label{thm:minmax}
 The rate distortion function, $R(\mb{D})$, is lower bounded by
 \begin{align}
 \label{eq:mLB}
R_{\mLB}(\mb{D}) = \sup_{S } \inf_{C_l(\mb{D})} \max \{R_{lo1}, R_{lo2} \} 
\end{align}
 where $R_{lo1}$ and $R_{lo2}$ are as in (\ref{rlo1}) and (\ref{rlo2}), and 
\begin{align*}
&S = \{\mb{Y} | \mb{X} \lra \mb{Y} \lra(\mb{Y_1}, \mb{Y_2})\}
\\
&C_l(\mb{D}) \mbox{: the set of } (\mb{W},\mb{U},\mb{V}) \mbox{ such that }
\\
& (\mb{W},\mb{U},\mb{V}) \lra \mb{X} \lra (\mb{Y_1}, \mb{Y_2}, \mb{Y})  
 \\
&\Gamma_1\left( K_{\mb{X}|\mb{W},\mb{U},\mb{Y_1}}\right)  \preceq D_1, \Gamma_2\left(K_{\mb{X}|\mb{W},\mb{V},\mb{Y_2}} \right) \preceq D_2.
\end{align*}
\end{theorem}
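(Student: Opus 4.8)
The plan is to run the standard converse argument for a $(n,M,D_1,D_2)$ code, but to do it twice — once "favoring" Decoder $1$ and once "favoring" Decoder $2$ — and then to take the max of the two bounds, which is legitimate since both are genuine lower bounds on $n^{-1}\log M$. The key observation that forces the $\max\{R_{lo1},R_{lo2}\}$ inside the infimum (rather than a max of two separate $\sup\inf$'s as in the $\DSM$) is that a single code must produce a single message $J=f(\mb{X}^n)$, from which one constructs a single triple of auxiliaries $(\mb{W},\mb{U},\mb{V})$ that is simultaneously good for both decoders. So: fix a code, fix any $\mb{Y}^n$ with $\mb{X}^n\lra\mb{Y}^n\lra(\mb{Y_1}^n,\mb{Y_2}^n)$ drawn i.i.d.\ across the block (this is the $\sup_{S}$; note we allow $\mb{Y}$ to be correlated with $(\mb{Y_1},\mb{Y_2})$ only through $\mb{X}$, matching the definition of $S$). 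The natural single-letter auxiliary choices are the ones that appear in the Heegard--Berger / Timo--Grover converse: $\mb{W}_t$ a "common" variable capturing $J$ together with past/future source and side-information symbols, $\mb{U}_t$ a "Decoder-$1$ private" variable, $\mb{V}_t$ a "Decoder-$2$ private" variable, all obtained from $J$, $\mb{X}^{n}$, and appropriate prefixes/suffixes so that the Markov chain $(\mb{W}_t,\mb{U}_t,\mb{V}_t)\lra\mb{X}_t\lra(\mb{Y_1}_t,\mb{Y_2}_t,\mb{Y}_t)$ holds for each $t$.

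Next I would lower-bound $\log M \ge H(J)$ in the two ways. Favoring Decoder $1$: write $H(J)\ge I(J;\mb{X}^n\mid \mb{Y_1}^n)$ and peel off the chain-rule terms, introducing $\mb{U}$ to absorb Decoder $1$'s extra decoding information and $\mb{V}$ (conditioned further on the enhanced $\mb{Y}$) for Decoder $2$; single-letterizing with the usual time-sharing index gives $n^{-1}\log M \ge I(\mb{X};\mb{W},\mb{U}\mid\mb{Y_1}) + I(\mb{X};\mb{V}\mid\mb{W},\mb{U},\mb{Y}) = R_{lo1}$. Favoring Decoder $2$ symmetrically yields $n^{-1}\log M \ge R_{lo2}$. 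Hence $n^{-1}\log M \ge \max\{R_{lo1},R_{lo2}\}$ for this one tuple $(\mb{W},\mb{U},\mb{V})$. The distortion analysis then shows this tuple lies in $C_l(\mb{D})$: from Decoder $1$'s reconstruction being a function of $(J,\mb{Y_1}^n)$ and the distortion constraint, together with the data-processing / MMSE property that conditioning cannot increase the error covariance, one gets $\Gamma_1(K_{\mb{X}\mid\mb{W},\mb{U},\mb{Y_1}})\preceq D_1$ (in the time-averaged sense, then passing to the single-letter version via the time-sharing variable and linearity/monotonicity of $\Gamma_1$ from Definition~\ref{defn:gamma}); symmetrically $\Gamma_2(K_{\mb{X}\mid\mb{W},\mb{V},\mb{Y_2}})\preceq D_2$. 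Therefore $n^{-1}\log M \ge \inf_{C_l(\mb{D})}\max\{R_{lo1},R_{lo2}\}$, and since $\mb{Y}$ was an arbitrary element of $S$ we may take the supremum, giving $R(\mb{D})\ge R_{\mLB}(\mb{D})$ after letting $\epsilon\downarrow 0$.

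I would expect the main obstacle to be the bookkeeping in the single-letterization: one must choose the prefixes and suffixes entering $\mb{W}_t,\mb{U}_t,\mb{V}_t$ so that (i) the conditional-independence structure $(\mb{W},\mb{U},\mb{V})\lra\mb{X}\lra(\mb{Y_1},\mb{Y_2},\mb{Y})$ survives single-letterization — this relies on the block being i.i.d.\ and on $\mb{Y}^n$ being generated memorylessly from $\mb{X}^n$ — and (ii) the \emph{same} triple simultaneously produces the bound $R_{lo1}$ in the Decoder-$1$ expansion and $R_{lo2}$ in the Decoder-$2$ expansion. Concretely, $\mb{W}$ needs to contain enough to make both of Decoder $i$'s "leftover" mutual-information terms telescope correctly; the asymmetry between the roles of $\mb{U}$ and $\mb{V}$ in $R_{lo1}$ versus $R_{lo2}$ is what makes this delicate, and is also exactly the place where the enhanced side information $\mb{Y}$ is used (it appears only in the "second" term of each $R_{lo}$, and only in the distortion constraint of the \emph{other} decoder is the raw $\mb{Y_i}$ retained — which is precisely what makes $\mLB$ stronger than the Maximin and $\DSM$ bounds). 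A secondary technical point is the matrix-valued distortion argument: I would handle the averaged covariance constraint by noting that $\Gamma_i$ commutes with averages and respects $\preceq$, so the per-block constraint on $E[\frac1n\sum_t\Gamma_i(\cdot)]$ transfers to a constraint on $\Gamma_i$ of the time-averaged conditional covariance, and then to the single-letter conditional covariance via the time-sharing variable absorbed into $\mb{W}$.
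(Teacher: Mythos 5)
Your proposal is correct and follows essentially the same route as the paper: a single converse chain $H(J)\ge I(\mb{X}^n,\mb{Y}_1^n,\mb{Y}^n;J)$ run twice (once per decoder) with one common tuple of auxiliaries (the paper takes $\mb{W}'_i=J$, $\mb{U}'_i=\mb{Y}_{\mb{1}\hc{i}}$, $\mb{V}'_i=\mb{Y}_{\mb{2}\hc{i}}$ plus a time-sharing index), which is exactly what pushes the $\max$ inside the $\inf$, followed by the observation that $(\mb{W}'_i,\mb{U}'_i,\mb{Y}_{\mb{1}i})=(J,\mb{Y}_1^n)$ to verify the distortion constraints. The only step you gloss over is the final $\epsilon\downarrow 0$ limit: since the feasible set is $C_l(\mb{D}+\epsilon(I,I))$, the paper justifies this by proving the bound is convex (hence continuous) in $\mb{D}$ via a time-sharing argument (Lemma~\ref{lemma:convexity}).
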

\begin{proof}[Proof of Theorem \ref{thm:minmax} ]
By definition, for any $\mb{D}-$achievable rate, $R$, and for all $\epsilon > 0$, we can find a   $(n,2^{n(R + \epsilon)}, \mb{D} + \epsilon (I,I))$ code. Let $\epsilon > 0$ be given and $J$ denote the output of the encoder. Also let $\mb{Y}$ be an auxiliary source in  $S$.  Then, we can write
\begin{align}
n(R + \epsilon) &\ge H(J)  
\notag \\
&\ge I(\mb{X}^n, \mb{Y}^n_\mb{1}, \mb{Y}^n; J) 
 \notag \\
&\overset{a}{=} I(\mb{Y}^n_\mb{1}; J) + I(\mb{Y}^n; J| \mb{Y}^n_\mb{1}) + I(\mb{X}^n; J| \mb{Y}^n_1,\mb{Y}^n) 
 \notag \\
&\ge I(\mb{Y}^n; J| \mb{Y}^n_\mb{1}) + I(\mb{X}^n; J| \mb{Y}^n_\mb{1},\mb{Y}^n) 
\notag \\
\label{ineq_1}
&\overset{b}{\ge} \sum^{n}_{i=1} \big{[} I( \mb{Y}_{i}; J,\mb{Y}_{\mb{1}\hc{i}}|\mb{Y}_{\mb{1}i})+  I( \mb{X}_{i}; J,\mb{Y}_{\mb{1}\hc{i}},\mb{Y}_{\hc{i}}|\mb{Y}_{\mb{1}i},\mb{Y}_i) \big{]}
\end{align}
where $\mb{Y}_{\mb{1}\hc{i}}$ denotes all $\mb{Y}^n_\mb{1}$ except $\mb{Y}_{1i}$ and $a$  is due to the chain rule, and $b$ is due to the chain rule and that conditioning reduces entropy.
Then if we apply the chain rule to the last term above, the right hand side of (\ref{ineq_1}) equals
\begin{align}
& \sum^{n}_{i=1} \big{[} I( \mb{Y}_{i}; J,\mb{Y}_{\mb{1}\hc{i}}|\mb{Y}_{\mb{1}i})+  I( \mb{X}_{i}; J,\mb{Y}_{\mb{1}\hc{i}}|\mb{Y}_{\mb{1}i},\mb{Y}_i) 
+  I( \mb{X}_{i}; \mb{Y}_{\hc{i}}|J,\mb{Y}_{\mb{1}\hc{i}}, \mb{Y}_{\mb{1}i},\mb{Y}_i) \big{]}
\\ \notag
%
& = \sum^{n}_{i=1} \big{[} I( \mb{X}_i,\mb{Y}_{i}; J,\mb{Y}_{\mb{1}\hc{i}}|\mb{Y}_{\mb{1}i})
+  I( \mb{X}_{i}; \mb{Y}_{\hc{i}}|J,\mb{Y}_{\mb{1}\hc{i}}, \mb{Y}_{\mb{1}i},\mb{Y}_i) \big{]}
\notag \\
\label{lb:ineq_2}
&\ge  \sum^{n}_{i=1} \big{[} I( \mb{X}_i; J,\mb{Y}_{\mb{1}\hc{i}}|\mb{Y}_{1i})
+  I( \mb{X}_{i}; \mb{Y}_{\hc{i}}|J,\mb{Y}_{\mb{1}\hc{i}}, \mb{Y}_{\mb{1}i},\mb{Y}_i) \big{]}.
\end{align}

Also, since $\mb{X} \lra \mb{Y} \lra (\mb{Y_1},\mb{Y_2})$ the right hand side of (\ref{lb:ineq_2}) is equal to
\begin{align}
& \sum^{n}_{i=1} \big{[} I( \mb{X}_i; J,\mb{Y}_{\mb{1}\hc{i}}|\mb{Y}_{\mb{1}i})
+  I( \mb{X}_{i}; \mb{Y}_{\hc{i}}|J,\mb{Y}_{\mb{1}\hc{i}},\mb{Y}_i) \big{]}
\notag \\
&{\ge}\sum^{n}_{i=1} \big{[} I( \mb{X}_i; J,\mb{Y}_{\mb{1}\hc{i}}|\mb{Y}_{\mb{1}i})
+  I( \mb{X}_{i}; \mb{Y}_{\mb{2}{\hc{i}}}|J,\mb{Y}_{\mb{1}\hc{i}},\mb{Y}_i) \big{]}
\notag \\
\label{lb:ineq_3}
&{=} \sum^{n}_{i=1} \big{[} I( \mb{X}_i; \mb{W'}_i,\mb{U'}_i|\mb{Y}_{\mb{1}i})
+  I( \mb{X}_{i}; \mb{V'}_i|\mb{W'}_i,\mb{U'}_i,\mb{Y}_i) \big{]}
\end{align}
where $\mb{W'}_i = J$,  $\mb{U'}_i = \mb{Y}_{\mb{1}\hc{i}}$ and $\mb{V'}_i = \mb{Y}_{\mb{2}\hc{i}}$.
Note that $(\mb{W'}_i, \mb{U'}_i, \mb{V'}_i) \lra \mb{X}_i \lra (\mb{Y}_{\mb{1}i}, \mb{Y}_{\mb{2}i}, \mb{Y}_{i})$ for all $i \in [n]$. 
Let $T$ be a random variable uniformly distributed on $[n]$ and independent of the source, side information and  all $(\mb{W}'_i, \mb{U}'_i,\mb{V}'_i)$, $i \in [n]$. Then we can write the right hand side of (\ref{lb:ineq_3}) as
\begin{align}
& \sum^{n}_{i=1} \big{[} I( \mb{X}_i; \mb{W'}_i,\mb{U'}_i|\mb{Y}_{\mb{1}i}, T=i)
+  I( \mb{X}_{i}; \mb{V'}_i|\mb{W'}_i,\mb{U'}_i,\mb{Y}_i, T=i) \big{]}
\notag \\
&= n \big{[} I( \mb{X}; \mb{W'},\mb{U'}, T|\mb{Y}_\mb{1})
+  I( \mb{X}; \mb{V'}, T|\mb{W'},\mb{U'},T,\mb{Y}) \big{]}
\notag \\
&= nR_{lo1}, \mbox{ by denoting $(\mb{W'},T)$, $(\mb{U'},T)$, $(\mb{V'},T)$ as $\mb{W}$, $\mb{U}$, $\mb{V}$ respectively.}
\end{align}

If we swap the role of $\mb{Y_1}$ and $\mb{Y_2}$ and apply the same procedure above, we can get

\begin{align}
R + \epsilon &\ge I( \mb{X}; \mb{W},\mb{V}|\mb{Y}_{2})
+  I( \mb{X}; \mb{U}|\mb{W},\mb{V},\mb{Y}) \notag \\
&= R_{lo2}.
\end{align}

Note that since $(\mb{W'}_i, \mb{U'}_i, \mb{V'}_i) \lra \mb{X}_i \lra (\mb{Y}_{\mb{1}i}, \mb{Y}_{\mb{2}i}, \mb{Y}_{i})$ for all $i \in [n]$, we have $(\mb{W}, \mb{U}, \mb{V}) \lra \mb{X} \lra (\mb{Y_1}, \mb{Y_2}, \mb{Y})$.  Moreover since $(\mb{W'}_i,\mb{U'}_i, \mb{Y}_{\mb{1}i}) = (J, \mb{Y}^n_\mb{1})$ and $(\mb{W'}_i,\mb{V'}_i, \mb{Y}_{\mb{2
}i}) = (J, \mb{Y}^n_\mb{2})$,  given $(\mb{W'}_i,\mb{U'}_i, \mb{Y}_{\mb{1}i})$
Decoder $1$ can reconstruct the source, $\mb{X}_i$, subject to its distortion constraint. Similarly,  Decoder $2$ can reconstruct the source, $\mb{X}_i$ given $(\mb{W'}_i,\mb{V'}_i, \mb{Y}_{\mb{2}i})$.
Hence, $(\mb{W},\mb{U},\mb{V}) \in C_{l}(\mb{D} +  \epsilon(I,I))$ and we have 
\begin{align}
\label{ineq:minmaxlb_epsilon}
&R(\mb{D}) \ge \inf_{C_l(\mb{D} +  \epsilon(I,I))} \max \{R_{lo1}, R_{lo2} \} - \epsilon.
\end{align}

Let $R'_{lo}(\mb{D} + \epsilon(I,I), \mb{Y})$ denote the right hand side of (\ref{ineq:minmaxlb_epsilon}). Note that (\ref{ineq:minmaxlb_epsilon}) holds for any $\mb{Y} \in S$,  where $S$ as in Theorem \ref{thm:minmax}.
Hence we can write
\begin{align}
\label{ineq:minmaxlb_wepsilon}
&R(\mb{D}) \ge \sup_{S} R'_{lo}(\mb{D} + \epsilon(I,I), \mb{Y})  - \epsilon.
\end{align}

Note that from  Lemma \ref{lemma:convexity} in Appendix \ref{app:convexity}, $R'_{lo}(\mb{D},  \mb{Y})$ is  convex in $\mb{D}$. Since $0 \prec D_i$, $i \in \{1,2\}$ we can find $\delta(D_1,D_2) > 0$   such that $0 \prec D_i - \delta(D_1,D_2) I$ for $i \in \{1,2\}$. Hence $R'_{lo}(\mb{D} + \gamma(I,I),  \mb{Y})$ is also convex in $\gamma$, where $\gamma \ge -\delta(D_1,D_2)$. Note that  $\sup_{S} R'_{lo}(\mb{D} +\epsilon(I,I), \mb{Y})$ is also convex since supremum of convex functions is convex. Then, we can conclude that $\sup_{S} R'_{lo}(\mb{D} +\epsilon(I,I), \mb{Y})$ is continuous at $\epsilon = 0$ since a convex function on an open set is continuous.  Lastly, since $\epsilon$ was arbitrary, letting $\epsilon \rightarrow 0$ gives the result. 
\end{proof}

It is worth noting that one can prove a  bound similar to $\mLB$ for non-Gaussian 
sources and general additive distortion constraints.
Although the $\mLB$ is quite powerful, it can
be difficult to apply. In particular, it is not clear that
it is sufficient to consider $(\mb{W},\mb{U},\mb{V})$
that are jointly Gaussian with $(\mb{X},\mb{Y}_1,\mb{Y}_2)$.
Similarly, when considering the analogous form of this bound
for discrete memoryless sources, it it not clear how to
obtain cardinality bounds on the auxiliary random variables
$(\mb{W},\mb{U},\mb{V})$. As such, it is not clear how to
compute this bound in general. We shall therefore consider
a slightly weakened form of the bound that is easier to apply.
It turns out that simply swapping the $\min$ and the $\max$
in the objective and adding that $\mb{Y}$ is jointly Gaussian with $(\mb{X}, \mb{Y_1},\mb{Y_2})$ to $S$ yields a bound that is significantly more tractable.

\subsection{Maximin Lower Bound}
\label{subsec:maximin}
The next proposition gives the \textit{Maximin lower bound}, abbreviated as $\MLB$. 
\begin{proposition}
\label{prop:maximin}
The rate distortion function, $R(\mb{D})$, is lower bounded by
 \begin{align}
 \label{eq:MLB}
R_{MLB}(\mb{D}) = \max\{ \sup_{S_G}\inf_{C_{l1}(\mb{D})} R_{lo1}, \sup_{S_G}\inf_{C_{l2}(\mb{D})} R_{lo2}\}, 
\end{align}
 where $R_{lo1}$ and $R_{lo2}$ are as in (\ref{rlo1}) and (\ref{rlo2}) respectively,
 $S_G$ as in Proposition \ref{prop:DSM}, and
\begin{align*} 
&C_{l1}(\mb{D}) : \mbox{ the set of } (\mb{W},\mb{U},\mb{V}) \mbox{ such that } &\mbox{ } & C_{l2}(\mb{D}) :  \mbox{ the set of } (\mb{W},\mb{U},\mb{V}) \mbox{ such that }
\\
& (\mb{W},\mb{U},\mb{V}) \lra \mb{X} \lra (\mb{Y_1}, \mb{Y_2},\mb{Y})  
&\mbox{ }
& (\mb{W},\mb{U},\mb{V}) \lra \mb{X} \lra (\mb{Y_1}, \mb{Y_2}, \mb{Y}) 
 \\
& \Gamma_1\left(K_{\mb{X}|\mb{W},\mb{U},\mb{Y_1}}\right) \preceq D_1, \mbox{ } \Gamma_2\left(K_{\mb{X}|\mb{W},\mb{V},\mb{Y_2}} \right)\preceq D_2  
&\mbox{ }
& \Gamma_1\left(K_{\mb{X}|\mb{W},\mb{U}, \mb{Y_1}}\right) \preceq D_1, \mbox{ }  \Gamma_2\left(K_{\mb{X}|\mb{W},\mb{V},\mb{Y_2}}\right) \preceq D_2. 
\end{align*}
\end{proposition}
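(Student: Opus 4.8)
The plan is to deduce the Maximin bound directly from the Minimax bound of Theorem~\ref{thm:minmax}, with no new information-theoretic argument: the entire content of the proof is to verify that two relaxations can only decrease the objective. Theorem~\ref{thm:minmax} already supplies $R(\mb{D}) \ge R_{\mLB}(\mb{D}) = \sup_{S}\inf_{C_l(\mb{D})}\max\{R_{lo1},R_{lo2}\}$, so it suffices to show that the right-hand side of (\ref{eq:MLB}) is no larger than this quantity.

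First I would restrict the outer supremum. Every $\mb{Y}$ that is jointly Gaussian with $(\mb{X},\mb{Y_1},\mb{Y_2})$ and satisfies $\mb{X}\lra\mb{Y}\lra(\mb{Y_1},\mb{Y_2})$ automatically lies in the larger family $S$ of Theorem~\ref{thm:minmax}, so $S_G\subseteq S$ and $\sup_{S}\inf_{C_l(\mb{D})}\max\{R_{lo1},R_{lo2}\}\ge\sup_{S_G}\inf_{C_l(\mb{D})}\max\{R_{lo1},R_{lo2}\}$. Next I would carry out the $\min$--$\max$ swap, first recording that the feasible sets $C_l(\mb{D})$, $C_{l1}(\mb{D})$ and $C_{l2}(\mb{D})$ are literally the same collection of triples $(\mb{W},\mb{U},\mb{V})$. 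For each fixed $\mb{Y}\in S_G$ and each feasible triple one has $\max\{R_{lo1},R_{lo2}\}\ge R_{lo1}$, so $\inf_{C_l(\mb{D})}\max\{R_{lo1},R_{lo2}\}\ge\inf_{C_{l1}(\mb{D})}R_{lo1}$; applying $\sup_{S_G}$ to both sides gives $\sup_{S_G}\inf_{C_l(\mb{D})}\max\{R_{lo1},R_{lo2}\}\ge\sup_{S_G}\inf_{C_{l1}(\mb{D})}R_{lo1}$, and the mirror-image argument with $R_{lo2}$ and $C_{l2}(\mb{D})$ yields the companion inequality. Taking the larger of the two companion bounds gives $\sup_{S_G}\inf_{C_l(\mb{D})}\max\{R_{lo1},R_{lo2}\}\ge R_{MLB}(\mb{D})$, and chaining the three displays produces $R(\mb{D})\ge R_{\mLB}(\mb{D})\ge R_{MLB}(\mb{D})$, which is the claim. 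I would also note in passing that the same weakening template --- relaxing the distortion constraints so that they may exploit the enhanced side information $\mb{Y}$, as in the feasible sets $\wt{C}_{l1}(\mb{D})$, $\wt{C}_{l2}(\mb{D})$ of Proposition~\ref{prop:DSM} --- underlies the remaining bounds in the hierarchy.

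I do not anticipate a genuine obstacle here; the only point that needs care is the bookkeeping of inequality directions --- verifying that \emph{both} relaxations (shrinking the admissible family of $\mb{Y}$ from $S$ down to $S_G$, and exchanging $\inf\max$ for $\max\inf$) move the objective downward, so that the resulting quantity is still a valid lower bound on $R(\mb{D})$. It is worth stating explicitly that the passage from the single pooled infimum appearing in $R_{\mLB}$ to the two separate infima appearing in $R_{MLB}$ introduces nothing beyond the elementary relation $\inf\max\ge\max\inf$, precisely because the three constraint sets coincide. The reason this weaker form is worth isolating, as the surrounding discussion indicates, is entirely one of usability: the two decoders are now decoupled and $\mb{Y}$ is confined to the jointly Gaussian family, which is exactly what renders the bound amenable to explicit computation in the instances treated in later sections.
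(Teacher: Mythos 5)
Your proposal is correct and follows exactly the paper's route: the paper's own proof is a one-line remark that the Maximin bound follows from Theorem~\ref{thm:minmax} by moving the infimum inside the maximization and replacing $S$ with $S_G$, and your write-up simply makes explicit the two monotonicity checks ($S_G\subseteq S$ and $\inf\max\ge\max\inf$ over the coinciding constraint sets) that justify this. No gap.
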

\begin{proof}
This follows directly from the $\mLB$, Theorem~\ref{thm:minmax}, by
moving the $\inf$ in the objective inside the maximization over the bounds in (\ref{rlo1}) and (\ref{rlo2}) and replacing the set $S$ with $S_G$.
\end{proof}

Although numerical evidence suggests that the $\MLB$ can
be strictly weaker than the $\mLB$ (see the discussion
in Section~\ref{sec:conclusion} to follow), the $\MLB$ does have
certain advantages. For the analogous bound for discrete memoryless
sources with additive distortion measures, one can obtain 
cardinality bounds on the alphabets of $\mb{W}$, $\mb{U}$, 
and $\mb{V}$ using straightforward techniques~\cite{Csiszar}. And we shall
show that, for the Gaussian form examined here, one may
restrict attention to $\mb{W}$, $\mb{U}$, and $\mb{V}$ that
are jointly Gaussian with $(\mb{X}, \mb{Y}_1, \mb{Y}_2, \mb{Y})$.

Evidently the $\MLB$ differs from the $\DSM$ in Proposition \ref{prop:DSM} only in that the distortion constraints 
are replaced with those that appear in the achievable upper
bound presented in Theorem \ref{thm:achievable} in Section \ref{sec:ach_scheme}. In Section \ref{sec:converse}, we shall see that this improvement suffices
to make the bound tight for the rate distortion problem with MSE distortion constraints stated in Section \ref{sec:main_results}. We turn to the
fourth and final lower bound.

\subsection{Enhanced-Enhancement Lower Bound}
\label{subsec:ELB}
\begin{proposition}
\label{prop:ELB}
The rate distortion function, $R(\mb{D})$, is lower bounded by
 \begin{align}
 \label{eq:ELB}
R_{\ELB}(\mb{D}) = \max\{ \sup_{S_G}\inf_{\bar{C}_{l1}(\mb{D})} R_{lo1}, \sup_{S_G}\inf_{\bar{C}_{l2}(\mb{D})} R_{lo2}\}, 
\end{align}
  where $R_{lo1}$ and $R_{lo2}$ are as in (\ref{rlo1}) and (\ref{rlo2}) respectively, $S_G$ as in Proposition \ref{prop:DSM}, and
\begin{align*} 
&\bar{C}_{l1}(\mb{D}) : \mbox{ the set of } (\mb{W},\mb{U},\mb{V}) \mbox{ such that }&\mbox{ } & \bar{C}_{l2}(\mb{D}) :  \mbox{the set of } (\mb{W},\mb{U},\mb{V}) \mbox{ such that }
\\
& (\mb{W},\mb{U},\mb{V}) \lra \mb{X} \lra (\mb{Y_1}, \mb{Y_2},\mb{Y})  
&\mbox{ }
& (\mb{W},\mb{U},\mb{V}) \lra \mb{X} \lra (\mb{Y_1},\mb{Y_2}, \mb{Y}) 
 \\
& \Gamma_1\left(K_{\mb{X}|\mb{W},\mb{U},\mb{Y_1}}\right) \preceq D_1, 
&\mbox{ }
&\Gamma_1\left(( K^{-1}_{\mb{X}|\mb{W},\mb{U},\mb{V}, \mb{Y}} - \wh{K})^{-1}\right)\preceq D_1,
\\
& \Gamma_2\left((K^{-1}_{\mb{X}|\mb{W},\mb{U},\mb{V},\mb{Y}} - \wt{K})^{-1} \right)\preceq D_2 
&\mbox{ }
& \Gamma_2\left( K_{\mb{X}|\mb{W},\mb{V},\mb{Y_2}} \right)\preceq D_2
\end{align*}
and $\wt{K} = K^{-1}_{\mb{X} | \mb{Y}} - K^{-1}_{\mb{X} | \mb{Y_2}}$,   $\wh{K} = K^{-1}_{\mb{X} | \mb{Y}} - K^{-1}_{\mb{X} | \mb{Y_1}}$.
\end{proposition}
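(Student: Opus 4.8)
The plan is to deduce Proposition~\ref{prop:ELB} from the Maximin bound, Proposition~\ref{prop:maximin} (itself obtained from Theorem~\ref{thm:minmax}), by checking that the $\ELB$ feasible sets have only grown. Since the objectives $R_{lo1}$, $R_{lo2}$ and the maximization set $S_G$ are identical in Propositions~\ref{prop:maximin} and~\ref{prop:ELB}, it suffices to prove $C_{l1}(\mb{D})\subseteq\bar{C}_{l1}(\mb{D})$ and $C_{l2}(\mb{D})\subseteq\bar{C}_{l2}(\mb{D})$: these give $\inf_{\bar{C}_{l1}(\mb{D})}R_{lo1}\le\inf_{C_{l1}(\mb{D})}R_{lo1}$ and likewise for the other pair, hence $R_{\ELB}(\mb{D})\le R_{\MLB}(\mb{D})\le R(\mb{D})$. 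The two inclusions are mirror images under swapping the two decoders (and correspondingly $\mb{U}\leftrightarrow\mb{V}$, $\mb{Y_1}\leftrightarrow\mb{Y_2}$, $\wh{K}\leftrightarrow\wt{K}$), so I would only treat $C_{l1}(\mb{D})\subseteq\bar{C}_{l1}(\mb{D})$.

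The Markov condition and the Decoder~1 constraint $\Gamma_1(K_{\mb{X}|\mb{W},\mb{U},\mb{Y_1}})\preceq D_1$ occur verbatim in both $C_{l1}(\mb{D})$ and $\bar{C}_{l1}(\mb{D})$, so it all comes down to showing that $\Gamma_2(K_{\mb{X}|\mb{W},\mb{V},\mb{Y_2}})\preceq D_2$ implies $\Gamma_2\left((K^{-1}_{\mb{X}|\mb{W},\mb{U},\mb{V},\mb{Y}}-\wt{K})^{-1}\right)\preceq D_2$. From $\mb{Y}\in S_G$ we have $\mb{X}\lra\mb{Y}\lra\mb{Y_2}$, so $K_{\mb{X}|\mb{Y}}\preceq K_{\mb{X}|\mb{Y_2}}$ and $\wt{K}=K^{-1}_{\mb{X}|\mb{Y}}-K^{-1}_{\mb{X}|\mb{Y_2}}\succeq 0$; moreover $K^{-1}_{\mb{X}|\mb{W},\mb{U},\mb{V},\mb{Y}}-\wt{K}\succeq K^{-1}_{\mb{X}|\mb{Y}}-\wt{K}=K^{-1}_{\mb{X}|\mb{Y_2}}\succ 0$, so the inverse is legitimate. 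By the monotonicity of $\Gamma_2$ (property~2 of Definition~\ref{defn:gamma}) it then suffices to show
\begin{equation}
\label{eq:elbp1}
\left(K^{-1}_{\mb{X}|\mb{W},\mb{U},\mb{V},\mb{Y}}-\wt{K}\right)^{-1}\preceq K_{\mb{X}|\mb{W},\mb{V},\mb{Y_2}},
\end{equation}
i.e.\ $K^{-1}_{\mb{X}|\mb{W},\mb{U},\mb{V},\mb{Y}}-K^{-1}_{\mb{X}|\mb{W},\mb{V},\mb{Y_2}}\succeq K^{-1}_{\mb{X}|\mb{Y}}-K^{-1}_{\mb{X}|\mb{Y_2}}$. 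Conditioning on more variables does not increase the (average) conditional covariance, so $K_{\mb{X}|\mb{W},\mb{U},\mb{V},\mb{Y}}\preceq K_{\mb{X}|\mb{W},\mb{V},\mb{Y}}$ and hence $K^{-1}_{\mb{X}|\mb{W},\mb{U},\mb{V},\mb{Y}}\succeq K^{-1}_{\mb{X}|\mb{W},\mb{V},\mb{Y}}$; thus~(\ref{eq:elbp1}) reduces to
\begin{equation}
\label{eq:elbp2}
K^{-1}_{\mb{X}|\mb{W},\mb{V},\mb{Y}}-K^{-1}_{\mb{X}|\mb{W},\mb{V},\mb{Y_2}}\succeq K^{-1}_{\mb{X}|\mb{Y}}-K^{-1}_{\mb{X}|\mb{Y_2}}.
\end{equation}

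I expect~(\ref{eq:elbp2}) to be the main obstacle. It asserts that the precision about $\mb{X}$ gained by upgrading the Gaussian side information from $\mb{Y_2}$ to the more informative $\mb{Y}$ can only be enlarged by additionally conditioning on auxiliaries $(\mb{W},\mb{V})$ with $(\mb{W},\mb{V})\lra\mb{X}\lra(\mb{Y},\mb{Y_2})$; if $(\mb{W},\mb{V})$ were jointly Gaussian with $(\mb{X},\mb{Y},\mb{Y_2})$ both sides would equal $K^{-1}_{\mb{X}|\mb{W},\mb{V}}-K^{-1}_{\mb{X}}$, so the content is that non-Gaussian auxiliaries only help. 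To prove it I would first reduce, without loss of generality, to $\mb{Y}=\mb{X}+\mb{N_1}$ and $\mb{Y_2}=\mb{X}+\mb{N_1}+\mb{N_2}$ with $\mb{N_1},\mb{N_2}$ independent Gaussian, independent of $(\mb{X},\mb{W},\mb{V})$, by replacing each side information with an equivalent additive-Gaussian minimal sufficient statistic for $\mb{X}$ (the four conditional covariances in~(\ref{eq:elbp2}) and the invoked Markov chains are unchanged). Writing $\Sigma_i=K_{\mb{N_i}}$, interpolating via $\mb{Y_t}=\mb{X}+\mb{N_1}+\mb{M_t}$, $\mb{M_t}\sim\mathcal{N}(0,t\Sigma_2)$, and setting $\Phi(t)=K^{-1}_{\mb{X}|\mb{W},\mb{V},\mb{Y_t}}-K^{-1}_{\mb{X}|\mb{Y_t}}$, one has~(\ref{eq:elbp2}) precisely when $\Phi(1)\preceq\Phi(0)$, which follows if $\Phi'(t)\preceq 0$. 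Here $K^{-1}_{\mb{X}|\mb{Y_t}}=K^{-1}_{\mb{X}}+(\Sigma_1+t\Sigma_2)^{-1}$ is explicit with derivative $-(\Sigma_1+t\Sigma_2)^{-1}\Sigma_2(\Sigma_1+t\Sigma_2)^{-1}$, and the plan is to bound the derivative of $K^{-1}_{\mb{X}|\mb{W},\mb{V},\mb{Y_t}}$ above by this same quantity via the conditional matrix I-MMSE relation, the intuition being that the auxiliaries sharpen the posterior and so make $K^{-1}_{\mb{X}|\mb{W},\mb{V},\mb{Y_t}}$ decay at least as fast as $K^{-1}_{\mb{X}|\mb{Y_t}}$ does. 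Making this derivative comparison precise is the technical heart of the argument. (Alternatively, if the reduction to jointly Gaussian $(\mb{W},\mb{U},\mb{V})$ announced for the Maximin bound is available at this point, then~(\ref{eq:elbp1}) is immediate, since for Gaussian auxiliaries $(K^{-1}_{\mb{X}|\mb{W},\mb{U},\mb{V},\mb{Y}}-\wt{K})^{-1}=K_{\mb{X}|\mb{W},\mb{U},\mb{V},\mb{Y_2}}\preceq K_{\mb{X}|\mb{W},\mb{V},\mb{Y_2}}$.)

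With~(\ref{eq:elbp2}) in hand we obtain~(\ref{eq:elbp1}), hence $C_{l1}(\mb{D})\subseteq\bar{C}_{l1}(\mb{D})$, and symmetrically $C_{l2}(\mb{D})\subseteq\bar{C}_{l2}(\mb{D})$; together with Proposition~\ref{prop:maximin} this gives $R(\mb{D})\ge R_{\MLB}(\mb{D})\ge R_{\ELB}(\mb{D})$, which is the claim. Equivalently, one may view the argument as applying to Theorem~\ref{thm:minmax} the two weakenings used for Proposition~\ref{prop:maximin} (pull the $\inf$ inside the $\max$; restrict $S$ to $S_G$) plus the extra relaxation of the Decoder~2 distortion constraint licensed by~(\ref{eq:elbp1}).
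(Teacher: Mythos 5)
Your reduction is sound and structurally identical to the paper's: both arguments establish the inclusions $C_{li}(\mb{D})\subseteq\bar{C}_{li}(\mb{D})$ and observe that, since the Markov condition and the Decoder~1 constraint carry over verbatim, everything hinges on one matrix inequality. The inequality you isolate,
$K^{-1}_{\mb{X}|\mb{W},\mb{V},\mb{Y}}-K^{-1}_{\mb{X}|\mb{W},\mb{V},\mb{Y_2}}\succeq K^{-1}_{\mb{X}|\mb{Y}}-K^{-1}_{\mb{X}|\mb{Y_2}}$,
is exactly the paper's Gaussian variance-drop lemma (Lemma~\ref{lemma:inequality}, in the form of Corollary~\ref{corr}) applied with $\mb{W}\to(\mb{W},\mb{V})$, $\wt{\mb{Z}}\to\mb{Y}$, $\mb{Z}\to\mb{Y_2}$; the paper's proof of Proposition~\ref{prop:ELB} simply cites that lemma (applying it with all three auxiliaries and then dropping $\mb{U}$ on the $\mb{Y_2}$ side rather than the $\mb{Y}$ side, an immaterial difference), and proves it in Appendix~\ref{app:ELB} by synthesizing a Gaussian observation $\wh{\mb{Z}}$ with $E[\mb{X}|\mb{Z},\wh{\mb{Z}}]=E[\mb{X}|\mb{Z},\wt{\mb{Z}}]$ (Lemma~\ref{lemma:hatZ}) and comparing $K_{\mb{X}|\mb{W},\mb{Z},\wh{\mb{Z}}}$ against the covariance of the linear estimate built from $\left(E[\mb{X}|\mb{W},\mb{Z}],\wh{\mb{Z}}\right)$, which yields (\ref{lemmainequality}).

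This is precisely where your proposal has a genuine gap. You correctly flag the inequality as ``the technical heart,'' but what you offer for it is a plan, not a proof: the reduction to additive-noise form and the interpolation $\mb{Y_t}$ are fine, but the claimed derivative comparison --- that $K^{-1}_{\mb{X}|\mb{W},\mb{V},\mb{Y_t}}$ decreases at least as fast as $K^{-1}_{\mb{X}|\mb{Y_t}}$ --- is a matrix single-crossing/I-MMSE monotonicity statement for non-Gaussian conditioning variables and anisotropic noise, and it is essentially as hard as the variance-drop lemma itself. One needs the derivative of the averaged conditional covariance matrix in the noise parameter (which involves the second moment of the pointwise conditional covariance), a Jensen step, and care to get a positive-semidefinite ordering rather than a scalar one; none of this is carried out. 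Your parenthetical fallback is also unavailable: Gaussian optimality of the auxiliaries for the Maximin bound is deduced in the paper from Proposition~\ref{prop:ELB=maxmin} and Lemma~\ref{lemma:gauss_ELB}, both of which sit downstream of the present proposition, so invoking it here would be circular (and Lemma~\ref{lemma:gauss_ELB} by itself covers only the $\DSM$ and Enhanced-$\DSM$ sets, not $C_{l1}$). To close the argument you must actually prove the variance-drop inequality for arbitrary, possibly non-Gaussian $(\mb{W},\mb{V})$ satisfying the Markov chain, which is what the paper's Appendix~\ref{app:ELB} supplies.
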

\begin{proof}
Note that only difference between $\MLB$ and Enhanced-$\DSM$ is  the optimization sets over which the infima are taken. Hence it is enough to show that $C_{li}(\mb{D}) \subseteq \bar{C}_{li}(\mb{D})$ for $i \in \{1,2\}$. Let $(\mb{W},\mb{U}, \mb{V}) \in C_{l1}(\mb{D})$. 
Then $(\mb{W},\mb{U}, \mb{V})$ satisfy the Markov chain condition $(\mb{W},\mb{U},\mb{V}) \lra \mb{X} \lra (\mb{Y_1}, \mb{Y_2},\mb{Y})$ and we have  $\Gamma_1\left(K_{\mb{X}|\mb{W},\mb{U},\mb{Y_1}}\right) \preceq D_1$. 
Also, the inequalities $K_{\mb{X}|\mb{W},\mb{U},\mb{V},\mb{Y_2}} \preceq K_{\mb{X}|\mb{W},\mb{V},\mb{Y_2}}$ and $K^{-1}_{\mb{X}|\mb{W},\mb{U},\mb{V},\mb{Y_2}} \preceq K^{-1}_{\mb{X}|\mb{W},\mb{U},\mb{V},\mb{Y}} - \wt{K}$ imply, by the Gaussian ``variance-drop'' lemma (Lemma~\ref{lemma:inequality} in Appendix~\ref{app:ELB}), that $\Gamma_2\left((K^{-1}_{\mb{X}|\mb{W},\mb{U},\mb{V},\mb{Y}} - \wt{K})^{-1} \right)\preceq D_2 $. Hence  $(\mb{W},\mb{U}, \mb{V})$ is also in $\bar{C}_{l1}(\mb{D})$, giving $C_{l1}(\mb{D}) \subseteq \bar{C}_{l1}(\mb{D})$. We can apply similar procedure to get $C_{l2}(\mb{D}) \subseteq \bar{C}_{l2}(\mb{D})$, which concludes the proof.
\end{proof}

Comparing the Enhanced-$\DSM$ against the $\DSM$ in (\ref{eq:DSM}) shows that the
differences lie entirely in the distortion constraints. The $\DSM$ effectively allows the decoders to use their ``enhanced''
side information for the purposes of estimating the source.
The achievable bound, by contrast, does not.
The Enhanced-$\DSM$ allows the decoders
to use their enhanced side information, but it also tightens the
constraint to account for this extra information, as justified
by the Gaussian variance-drop lemma. We shall see in the next subsection
that the Enhanced-$\DSM$ actually coincides with the 
$\MLB$ for all of the problems considered in this
paper. We mention the Enhanced-$\DSM$ only because the idea
of using the Gaussian variance-drop lemma to tighten the distortion constraints at decoders that are provided with improved side information may prove useful in other contexts.

\subsection{Properties of the Lower Bounds}
\label{subsec:comparison}

It is evident from the proofs in this section
that the four lower bounds can be ordered as follows
$$
R_{\DSM}(\mb{D}) \le
R_{\ELB}(\mb{D}) \le
R_{\MLB}(\mb{D}) \le
R_{\mLB}(\mb{D}).
$$

We shall show that Gaussian auxiliary random variables are
optimal for $\MLB$, Enhanced-$\DSM$, and $\DSM$, and that the $\MLB$ and 
Enhanced-$\DSM$ are in
fact equal. We begin by showing that Gaussian auxiliary
random variables are optimal for the $\DSM$ and Enhanced-$\DSM$.
\begin{lemma}
\label{lemma:gauss_ELB}
One may add the constraint that $(\mb{W},\mb{U},\mb{V})$ is jointly
Gaussian with $(\mb{X}, \mb{Y}_1, \mb{Y}_2, \mb{Y})$ to the optimization
problem in the  $\DSM$ in (\ref{eq:DSM}) and the Enhanced-$\DSM$ 
in (\ref{eq:ELB}) without
affecting the optimal value.
\end{lemma}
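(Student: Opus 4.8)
The plan is the standard two–sided argument for a ``Gaussian is optimal'' statement. One inequality is immediate: jointly Gaussian triples are admissible, so imposing Gaussianity can only raise the infimum over $(\mb{W},\mb{U},\mb{V})$. For the reverse, given any feasible $(\mb{W},\mb{U},\mb{V})$ (for $\wt{C}_{l1}(\mb{D})$ or $\bar{C}_{l1}(\mb{D})$) I would produce a triple $(\mb{W'},\mb{U'},\mb{V'})$ that is jointly Gaussian with $(\mb{X},\mb{Y_1},\mb{Y_2},\mb{Y})$, lies in the same set, and has no larger value of $R_{lo1}$; the $l2$ term in the $\max$ is handled by the symmetric argument that interchanges the roles of $\mb{Y_1},\mb{Y_2}$ and of $\mb{U},\mb{V}$. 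It helps to note first that $\mb{W}$ and $\mb{U}$ enter $R_{lo1}$ and these constraint sets only through the pair $\mb{A}:=(\mb{W},\mb{U})$, so the problem is effectively two–layer, governed by $K_{\mb{X}|\mb{A},\mb{Y_1}}$, $K_{\mb{X}|\mb{A},\mb{Y}}$, $K_{\mb{X}|\mb{A},\mb{V},\mb{Y}}$ and the two mutual informations $I(\mb{X};\mb{A}|\mb{Y_1})$ and $I(\mb{X};\mb{V}|\mb{A},\mb{Y})$.

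The Gaussian replacement is built as a cascade of additive–Gaussian test channels: $\mb{A'}=\mb{X}+\mb{N}_A$ and $\mb{V'}=\mb{X}+\mb{N}_V$ with $(\mb{N}_A,\mb{N}_V)$ a zero–mean Gaussian pair independent of $(\mb{X},\mb{Y_1},\mb{Y_2},\mb{Y})$, so that $(\mb{W'},\mb{U'},\mb{V'})\lra\mb{X}\lra(\mb{Y_1},\mb{Y_2},\mb{Y})$ is automatic, and the noise covariances are chosen so that the precision added by $\mb{A'}$ is $K_{\mb{X}|\mb{A},\mb{Y_1}}^{-1}-K_{\mb{X}|\mb{Y_1}}^{-1}\succeq 0$ (hence $K_{\mb{X}|\mb{A'},\mb{Y_1}}=K_{\mb{X}|\mb{A},\mb{Y_1}}$) and the combined precision of $(\mb{A'},\mb{V'})$ is at least $K_{\mb{X}|\mb{A},\mb{V},\mb{Y}}^{-1}-K_{\mb{X}|\mb{Y}}^{-1}\succeq 0$ (hence $K_{\mb{X}|\mb{A'},\mb{V'},\mb{Y}}\preceq K_{\mb{X}|\mb{A},\mb{V},\mb{Y}}$). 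Both target precisions are PSD because extra conditioning only shrinks the error covariance, so the construction is realizable, and the $D_1$ constraint is preserved with equality. The $D_2$ constraint is preserved too: in the $\DSM$ because $\Gamma_2$ is monotone, and in the Enhanced-$\DSM$ because $K_{\mb{X}|\mb{A'},\mb{V'},\mb{Y}}\preceq K_{\mb{X}|\mb{A},\mb{V},\mb{Y}}\preceq K_{\mb{X}|\mb{Y}}$ forces $K_{\mb{X}|\mb{A'},\mb{V'},\mb{Y}}^{-1}-\wt{K}\succeq K_{\mb{X}|\mb{A},\mb{V},\mb{Y}}^{-1}-\wt{K}\succeq K_{\mb{X}|\mb{Y_2}}^{-1}\succ 0$, so the covariance domination survives inversion and $\Gamma_2$ — this is the same use of the Gaussian ``variance-drop'' Lemma~\ref{lemma:inequality} as in the proof of Proposition~\ref{prop:ELB}.

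The objective is the crux. The first term is routine: $I(\mb{X};\mb{W'},\mb{U'}|\mb{Y_1})=\tfrac12\log\frac{\det K_{\mb{X}|\mb{Y_1}}}{\det K_{\mb{X}|\mb{A},\mb{Y_1}}}\le I(\mb{X};\mb{W},\mb{U}|\mb{Y_1})$ by the maximum-entropy bound $h(\mb{X}|\mb{A},\mb{Y_1})\le h_G(K_{\mb{X}|\mb{A},\mb{Y_1}})$. The main obstacle is the refinement term $I(\mb{X};\mb{V'}|\mb{W'},\mb{U'},\mb{Y})$: maximum entropy cannot be applied directly because the two entropies in $h(\mb{X}|\mb{A},\mb{Y})-h(\mb{X}|\mb{A},\mb{V},\mb{Y})$ enter with opposite signs, and the ``obvious'' matching above can in fact make this term grow — essentially because, for a non-Gaussian auxiliary, the precision it adds to $\mb{X}$ depends on the conditioning side information (it may be larger relative to the stronger side information $\mb{Y}$ than relative to $\mb{Y_1}$), whereas for a Gaussian auxiliary it is the same for both. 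To get past this I would rewrite $R_{lo1}$ in the equivalent ``enhanced'' form $R_{lo1}=I(\mb{X};\mb{W},\mb{U},\mb{V}|\mb{Y})+I(\mb{Y};\mb{W},\mb{U}|\mb{Y_1})$, which holds for every feasible tuple by the chain rule together with the Markov relations $(\mb{W},\mb{U},\mb{V})\lra\mb{X}\lra(\mb{Y_1},\mb{Y_2},\mb{Y})$ and $\mb{X}\lra\mb{Y}\lra(\mb{Y_1},\mb{Y_2})$, which together give $(\mb{X},\mb{W},\mb{U},\mb{V})\perp\mb{Y_1}\mid\mb{Y}$; in this form each summand is of the type $h(\cdot\mid\text{coarser})-h(\cdot\mid\text{finer})$ with the finer conditioning through a Gaussian source, so each is bounded below by its Gaussian value, and one then chooses the Gaussian precisions — if necessary via a short case analysis separating feasible tuples whose refinement $\mb{V}$ is ``informative enough'' from those where it is not — so that the replacement meets both lower bounds while still satisfying the $D_1$ and $D_2$ constraints. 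The degradedness $\mb{X}\lra\mb{Y}\lra\mb{Y_1}$, and the variance-drop lemma quantifying how a precision increment transports between conditioning on $\mb{Y_1}$ and on $\mb{Y}$, is what makes this possible; I expect the bookkeeping in this last step to be the hardest part of the proof.
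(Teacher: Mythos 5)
Your overall strategy is the paper's. The chain-rule rearrangement $R_{lo1}=I(\mb{Y};\mb{W},\mb{U}|\mb{Y_1})+I(\mb{X};\mb{W},\mb{U},\mb{V}|\mb{Y})$ is exactly the identity the paper uses (it writes it as $h(\mb{X}|\mb{Y_1})+h(\mb{Y}|\mb{X},\mb{Y_1})-h(\mb{Y}|\mb{W},\mb{U},\mb{Y_1})-h(\mb{X}|\mb{W},\mb{U},\mb{V},\mb{Y_1},\mb{Y})$, which is the same thing): every $(\mb{W},\mb{U},\mb{V})$-dependent entropy then carries a negative sign, so the maximum-entropy bound applies term by term, and the Gaussian variance-drop lemma (Lemma~\ref{lemma:inequality}) is indeed what closes the argument. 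You also correctly diagnose why the naive term-by-term bound on the original form of $R_{lo1}$ fails.

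The gap is in the step you yourself flag as unfinished. The replacement you actually specify --- $K_{\mb{X}|\mb{A'},\mb{Y_1}}=K_{\mb{X}|\mb{A},\mb{Y_1}}$ together with only $K_{\mb{X}|\mb{A'},\mb{V'},\mb{Y}}\preceq K_{\mb{X}|\mb{A},\mb{V},\mb{Y}}$ --- does not suffice: in the rearranged objective the second term evaluates to $\frac{1}{2}\log\bigl(|K_{\mb{X}|\mb{Y}}|/|K_{\mb{X}|\mb{A'},\mb{V'},\mb{Y}}|\bigr)$, so making $K_{\mb{X}|\mb{A'},\mb{V'},\mb{Y}}$ strictly smaller makes the Gaussian objective strictly \emph{larger}; your inequality points the wrong way. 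The missing observation is that this covariance can, and must, be matched with equality, and that Lemma~\ref{lemma:inequality} is precisely what guarantees feasibility: applied to the matched pair it gives $K_{\mb{X}|\mb{A'},\mb{Y}}\succeq K_{\mb{X}|\mb{A},\mb{Y}}\succeq K_{\mb{X}|\mb{A},\mb{V},\mb{Y}}$, so the residual precision $K^{-1}_{\mb{X}|\mb{A},\mb{V},\mb{Y}}-K^{-1}_{\mb{X}|\mb{A'},\mb{Y}}$ is positive semidefinite and a Gaussian $\mb{V'}$ realizing $K_{\mb{X}|\mb{A'},\mb{V'},\mb{Y}}=K_{\mb{X}|\mb{A},\mb{V},\mb{Y}}$ exists. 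With both covariances matched exactly, both terms of the lower bound are attained with equality and both distortion constraints are inherited verbatim --- in $\wt{C}_{l1}(\mb{D})$ and in $\bar{C}_{l1}(\mb{D})$ alike, since each constraint depends on $(\mb{W},\mb{U},\mb{V})$ only through these two matrices --- so no case analysis on whether $\mb{V}$ is ``informative enough'' is needed, and your separate variance-drop verification of the $D_2$ constraint becomes superfluous.
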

\begin{proof}
See Appendix~\ref{app:ELB_bound}.
\end{proof}

\begin{proposition}
\label{prop:ELB=maxmin}
The Maximin bound and Enhanced-Enhancement bound in Proposition \ref{prop:maximin} and \ref{prop:ELB}, respectively, coincide:
 \begin{equation}
 R_{\MLB}(\mb{D}) = R_{\ELB}(\mb{D}). 
 \end{equation}
\end{proposition}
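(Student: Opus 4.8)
The plan is to establish the reverse of the already-proved inequality $R_{\ELB}(\mb{D}) \le R_{\MLB}(\mb{D})$, i.e.\ to show $R_{\MLB}(\mb{D}) \le R_{\ELB}(\mb{D})$. Since both quantities are the maximum of two structurally symmetric terms, each a supremum over the \emph{same} set $S_G$, it suffices to prove, for every fixed $\mb{Y} \in S_G$, that $\inf_{C_{l1}(\mb{D})} R_{lo1} \le \inf_{\bar{C}_{l1}(\mb{D})} R_{lo1}$ together with the analogous statement for the second term; I will describe the argument for the first term, the second being identical after exchanging the roles of $(\mb{Y_1},\mb{U})$ and $(\mb{Y_2},\mb{V})$.

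First I would invoke Lemma~\ref{lemma:gauss_ELB} to restrict the infimum over $\bar{C}_{l1}(\mb{D})$ to triples $(\mb{W},\mb{U},\mb{V})$ jointly Gaussian with $(\mb{X},\mb{Y_1},\mb{Y_2},\mb{Y})$. For such a triple, writing $\mb{A} = (\mb{W},\mb{U},\mb{V})$, the Markov chain $\mb{A} \lra \mb{X} \lra (\mb{Y_1},\mb{Y_2},\mb{Y})$ built into the definition of $\bar{C}_{l1}(\mb{D})$ gives $\mb{A}\perp\mb{Y}|\mb{X}$ and $\mb{A}\perp\mb{Y_2}|\mb{X}$, so the standard additivity of Gaussian conditional precision matrices yields $K^{-1}_{\mb{X}|\mb{A},\mb{Y}} = K^{-1}_{\mb{X}|\mb{A}} + K^{-1}_{\mb{X}|\mb{Y}} - K^{-1}_{\mb{X}}$ and $K^{-1}_{\mb{X}|\mb{A},\mb{Y_2}} = K^{-1}_{\mb{X}|\mb{A}} + K^{-1}_{\mb{X}|\mb{Y_2}} - K^{-1}_{\mb{X}}$. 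Subtracting these and recalling $\wt{K} = K^{-1}_{\mb{X}|\mb{Y}} - K^{-1}_{\mb{X}|\mb{Y_2}}$ gives the key identity $(K^{-1}_{\mb{X}|\mb{W},\mb{U},\mb{V},\mb{Y}} - \wt{K})^{-1} = K_{\mb{X}|\mb{W},\mb{U},\mb{V},\mb{Y_2}}$ (the equality case underlying the Gaussian variance-drop Lemma~\ref{lemma:inequality}). Consequently, for Gaussian auxiliaries the constraint $\Gamma_2((K^{-1}_{\mb{X}|\mb{W},\mb{U},\mb{V},\mb{Y}} - \wt{K})^{-1}) \preceq D_2$ defining $\bar{C}_{l1}(\mb{D})$ is \emph{exactly} $\Gamma_2(K_{\mb{X}|\mb{W},\mb{U},\mb{V},\mb{Y_2}}) \preceq D_2$.

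The heart of the argument is then a single substitution. Given a Gaussian $(\mb{W},\mb{U},\mb{V}) \in \bar{C}_{l1}(\mb{D})$, I would pass to $(\mb{W},\mb{U},\mb{V}')$ with $\mb{V}' := (\mb{U},\mb{V})$. This triple is a deterministic function of the original one, so it still satisfies $(\mb{W},\mb{U},\mb{V}') \lra \mb{X} \lra (\mb{Y_1},\mb{Y_2},\mb{Y})$; the constraint $\Gamma_1(K_{\mb{X}|\mb{W},\mb{U},\mb{Y_1}}) \preceq D_1$ is untouched; and the remaining $C_{l1}(\mb{D})$ constraint $\Gamma_2(K_{\mb{X}|\mb{W},\mb{V}',\mb{Y_2}}) \preceq D_2$ reads $\Gamma_2(K_{\mb{X}|\mb{W},\mb{U},\mb{V},\mb{Y_2}}) \preceq D_2$, which holds by the reformulated $\bar{C}_{l1}$ constraint above. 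Hence $(\mb{W},\mb{U},\mb{V}') \in C_{l1}(\mb{D})$. Since $\mb{U}$ already appears in the conditioning, $I(\mb{X};\mb{V}'|\mb{W},\mb{U},\mb{Y}) = I(\mb{X};\mb{V}|\mb{W},\mb{U},\mb{Y})$, so $R_{lo1}$ at $(\mb{W},\mb{U},\mb{V}')$ equals its value at $(\mb{W},\mb{U},\mb{V})$. Taking the infimum over the Gaussian triples in $\bar{C}_{l1}(\mb{D})$ (which by Lemma~\ref{lemma:gauss_ELB} computes $\inf_{\bar{C}_{l1}(\mb{D})} R_{lo1}$) gives $\inf_{C_{l1}(\mb{D})} R_{lo1} \le \inf_{\bar{C}_{l1}(\mb{D})} R_{lo1}$, and the symmetric substitution $\mb{U}' := (\mb{V},\mb{U})$ handles the second term. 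Together with $R_{\ELB}(\mb{D}) \le R_{\MLB}(\mb{D})$ this yields equality. The only delicate step is the justification of the precision-addition identity, which is exactly why I would restrict to Gaussian auxiliaries at the outset and why the conditional-independence structure carried by $S_G$ and by the definition of $\bar{C}_{l1}(\mb{D})$ is needed.
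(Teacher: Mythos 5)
Your proof is correct and follows essentially the same route as the paper: restrict to Gaussian auxiliaries via Lemma~\ref{lemma:gauss_ELB}, use the Gaussian precision identity to recognize $(K^{-1}_{\mb{X}|\mb{W},\mb{U},\mb{V},\mb{Y}}-\wt{K})^{-1}$ as $K_{\mb{X}|\mb{W},\mb{U},\mb{V},\mb{Y_2}}$, and then absorb $\mb{U}$ into another auxiliary so the triple lands in $C_{l1}(\mb{D})$ with the same objective value. The only cosmetic difference is that the paper lumps $\mb{U}$ into $\mb{W}$ while you lump it into $\mb{V}$, which produces the identical conditional covariances.
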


\begin{proof}
It suffices to show that 
$$
R_{\MLB}(\mb{D}) \le R_{\ELB}(\mb{D})
$$
By Lemma \ref{lemma:gauss_ELB}, $(\mb{W},\mb{U},\mb{V})$ in  $\bar{C}_{l1}(\mb{D})$ or $\bar{C}_{l2}(\mb{D})$ can be restricted to vector Gaussian random variables without loss of optimality. Furthermore, any $\mb{U} \in  \bar{C}_{l1}$  can be lumped into $\mb{W} \in  \bar{C}_{l1}(\mb{D})$, i.e. $\mb{U}$ is deterministic, without loss of optimality since $\mb{W}$ and $\mb{U}$ always appear together both in the objective and the conditions. The same argument holds when we swap the roles of $\mb{U}$  and $\mb{V}$ in $\bar{C}_{l2}(\mb{D})$.
Hence, with those additional conditions we can write the optimizing sets, $\bar{C}_{l1}(\mb{D})$ and $\bar{C}_{l2}(\mb{D})$, as
\begin{align*}
&\bar{C}_{l1}(\mb{D}) : &\mbox{ } & \bar{C}_{l1}(\mb{D}) :  
\\
& (\mb{W},\mb{U},\mb{V}) \lra \mb{X} \lra (\mb{Y_1}, \mb{Y_2},\mb{Y})  
&\mbox{ }
& (\mb{W},\mb{U},\mb{V}) \lra \mb{X} \lra (\mb{Y_1}, \mb{Y_2},\mb{Y}) 
 \\
 & (\mb{W},\mb{U},\mb{V},\mb{X},\mb{Y_1}, \mb{Y_2},\mb{Y}) \mbox{ jointly Gaussian },   \mb{U} = \emptyset
&\mbox{ }
& (\mb{W},\mb{U},\mb{V},\mb{X},\mb{Y_1},\mb{Y_2}, \mb{Y}) \mbox{ jointly Gaussian },  \mb{V} = \emptyset
 \\
& K_{\mb{X}|\mb{W},\mb{Y_1}} \preceq D_1, \mbox{ } K_{\mb{X}|\mb{W}, \mb{V},\mb{Y_2}} \preceq D_2  
&\mbox{ }
& K_{\mb{X}|\mb{W},\mb{U}, \mb{Y_1}} \preceq D_1, \mbox{ }  K_{\mb{X}|\mb{W},\mb{Y_2}} \preceq D_2
\end{align*}

  Then any such $(\mb{W},\mb{U},\mb{V}) \in \bar{C}_{l1}(\mb{D})$ (or $(\mb{W},\mb{U},\mb{V}) \in \bar{C}_{l2}(\mb{D})$)  is also in ${C}_{l1}(\mb{D})$ (or ${C}_{l2}(\mb{D})$).  Hence, $R_{\MLB}(\mb{D}) \le  R_{\ELB}(\mb{D})$.
\end{proof}

It follows from the two previous results that Gaussian auxiliary
random variables are optimal for the $\MLB$. To see this,
let $R^G_{\ELB}(\mb{D})$ denote the Enhanced-$\DSM$ with
the auxiliary random variables constrained to be jointly Gaussian
with $(\mb{X},\mb{Y}_1,\mb{Y}_2)$. Define $R^G_{\MLB}(\mb{D})$ 
likewise. Then we have
\begin{align*}
R^G_{\MLB}(\mb{D})
& \ge R_{\MLB}(\mb{D}) \\
& \overset{a}{=} R_{\ELB}(\mb{D}) \\
& \overset{b}{=} R^G_{\ELB}(\mb{D}) \\
& \overset{c}{=} R^G_{\MLB}(\mb{D}),
\end{align*}
where $a$ follows from Proposition~\ref{prop:ELB=maxmin}, $b$ follows from
Lemma~\ref{lemma:gauss_ELB}, and $c$ is straightforward to verify.

We now proceed to state our optimality results.
\section{Optimality Results}
\label{sec:main_results}
We shall determine the optimal rate for the following choices
of $\Gamma_1$, $\Gamma_2$, $D_1$, and $D_2$:
\begin{enumerate}
	\item \emph{Mean square error (MSE):}  $\Gamma_1$ and $	\Gamma_2$ are chosen as
	\begin{align}
	\Gamma_i(K) = (K)_{diag} \quad i \in \{1,2\},
	\label{dm_MSE}
	\end{align}
 and $D_1$ and $D_2$ are diagonal matrices satisfying
\begin{equation}
	\label{MSE_distcons}
	D_1 \preceq K_{\mb{X}|\mb{Y_1}} \mbox{ and } \  D_2 \preceq K_{\mb{X}|	\mb{Y_2}}.
\end{equation}

\item \emph{Error covariance matrix:} $\Gamma_1$ and $\Gamma_2$ are chosen as
	\begin{align}
	\label{dm_SI}
	\Gamma_i(K) & = K \quad i \in \{1,2\}
	\end{align}
and $D_1$ and $D_2$ are scaled identity matrices satisfying 
\begin{equation}
	\label{si_distcons}
	D_1 \preceq K_{\mb{X}|\mb{Y_1}} \mbox{ and } \  D_2 \preceq K_{\mb{X}|	\mb{Y_2}}.
\end{equation}

Note that scaled identity matrix constraints on the error covariance matrix enable us to bound the MSE of the reconstruction vector uniformly from all directions.
\item \emph{Trace of the error covariance matrix:} $\Gamma_1$ and $\Gamma_2$ are chosen as
\begin{align}
\label{dm_trace}
\Gamma_i(K) = \Tr(K) \quad i \in \{1,2\},
\end{align}
and $D_1$ and $D_2$ are scalars satisfying
\begin{align}
\label{tr_distcons}
D_1I \preceq K_{\mb{X}|\mb{Y_1}} \mbox{ and } D_2I \preceq K_{\mb{X}|\mb{Y_2}}.
\end{align} 
\end{enumerate}

Most of the prior work on the Heegard-Berger problem assumes some sort of degradedness structure between the source and the side information at the two decoders (e.g. \cite{berger, watanabe, timo_lessnoisy}). Watanabe~\cite{watanabe}, in particular, assumes that the source and the side information all consist of two components, and the first components of all three variables are independent of the second components of all three variables. The two components are ``mismatched degraded,'' i.e., each component is individually degraded, but the two components are degraded in opposite order. Although we do not assume any degradedness structure, we shall reduce our problems to one that resembles Watanabe's. Specifically, we shall decompose the signal space into  ``\textit{regions}," one of which is such that the side information at Decoder 1 is ``\textit{stronger}" than that of Decoder 2 and one such that the reverse is true. Many such candidate decompositions are possible;
we shall use the following one.

Recall that we assume that $K_{\mb{X}|\mb{Y_i}}$, $i \in \{1,2\}$ are invertible matrices.\footnote{The distortion constraints in (\ref{si_distcons}),  (\ref{tr_distcons}), and (\ref{MSE_distcons}) also imply that $K_{\mb{X}|\mb{Y_1}}$ and $K_{\mb{X}|\mb{Y_2}}$ are positive definite matrices. } Now consider the matrix $K^{-1}_{\mb{X}|\mb{Y_2}} - K^{-1}_{\mb{X}|\mb{Y_1}}$. Since it is symmetric we can find an orthogonal matrix $Q_1$ such that $Q_1(K^{-1}_{\mb{X}|\mb{Y_2}} - K^{-1}_{\mb{X}|\mb{Y_1}})Q_1^T$
is diagonal.  Furthermore, we can find another orthogonal matrix $Q_2$ such that $Q_2Q_1(K^{-1}_{\mb{X}|\mb{Y_2}} - K^{-1}_{\mb{X}|\mb{Y_1}})Q_1^TQ_2^T$ is of the form
\begin{align}
	\label{relation}
 	&K=  \left( \begin{array}{c c} A & \mf{0} \\ \mf{0} & B 	\end{array} \right)
\end{align}
 where $A \succeq 0$ is an $l_1\times l_1$ diagonal matrix, $ B \prec 0$ is an $l_2\times l_2$ diagonal matrix  and $l_1+l_2 =k$.
 
Let $Q= Q_2Q_1$.  Note that $QDQ^T= D$ when $D$ is a scaled identity matrix  and  distortion measure in (\ref{dm_trace}) is invariant under $(\mb{X}, \mb{\wh{X}_i}) \rightarrow (Q\mb{X}, Q\mb{\wh{X}_i})$.

Note that MSE distortion measure is not invariant under $(\mb{X}, \mb{\wh{X}_i}) \rightarrow (Q\mb{X}, Q\mb{\wh{X}_i})$. Then for MSE and any $\Gamma_i$ such that it is not invariant under $(\mb{X}, \mb{\wh{X}_i}) \rightarrow (Q\mb{X}, Q\mb{\wh{X}_i})$, we restrict our attention to the source $\mb{X}$ and side information $\mb{Y_i}$ such that 
 \begin{align}
	\label{eq:relation}
 K^{-1}_{\mb{X}|\mb{Y_2}} - K^{-1}_{\mb{X}|\mb{Y_1}} = K.
 \end{align}
 
Therefore, the rate-distortion problems where $Q\mb{X}$ is the source, $\mb{Y_i}$ is side information at Decoder $i$ subject to the distortion constraints $D_i$, $i \in \{1,2\}$  are equivalent to the problems that we defined at the beginning. For the rest of the paper, we assume that $Q\mb{X}$ is the source and we relabel $Q\mb{X}$ as $\mb{X}$ for the ease of notation, $\mb{Y_1}$ and $\mb{Y_2}$ are side information and $D_1$  and $D_2$ distortion constraints for Decoder $1$ and $2$, respectively, as shown in Figure \ref{fig:setup}. Note that we have not entirely reduced the problem to that of Watanabe because the components of $\mb{X}$ may be dependent.

From now on we use the abbreviation $\RDSI$ for the problem of finding the rate distortion function where reconstructions at decoders are subjected to error covariance distortion constraints that are scaled identity matrices as in (\ref{si_distcons}) and denote  the corresponding rate distortion function as $R^{\si}(\mb{D})$, where $\mb{D} = (D_1,D_2)$. Also $\RDTr$ and $\RDmse$ denote the  rate distortion problems where decoders have distortion constraints as in (\ref{tr_distcons}) on the  trace of error covariance matrices and (\ref{MSE_distcons}) componentwise MSE constraints, respectively. The corresponding rate distortion functions for $\RDTr$ and $\RDmse$ are denoted by $R^{\tr}(\mb{D})$ and $R^{\mse}(\mb{D})$, respectively.

\begin{remark}
%
Since $(K^{-1}_{\mb{X}| \mb{Y_2}})_{l_1} \succeq (K^{-1}_{\mb{X}| \mb{Y_1}})_{l_1} $, we say that $\mb{Y_2}$ is ``\textit{stronger}" than $\mb{Y_1}$ in the ``\textit{region}" involving the upper-left part of the inverse covariance matrices. Similarly, $\mb{Y_1}$ is ``\textit{stronger}" than $\mb{Y_2}$ in the lower-right part of the inverse covariance matrices since  $[K^{-1}_{\mb{X}| \mb{Y_2}}]_{l_2} \preceq [K^{-1}_{\mb{X}| \mb{Y_1}}]_{l_2}$.
\end{remark}
Now we are ready to state our optimality results.
\begin{theorem}
\label{thm:MSEgeneralform}
Let $K_{\mb{X}|\mb{Y_i}}$, $i \in \{1,2\}$ be diagonal matrices. Then the rate distortion function of $\RDmse$, $R^{\mse}(\mb{D})$, can be written as
\begin{align*}
R^{\mse}(\mb{D}) = \max\{R^{\mse}_1(\mb{D}), R^{\mse}_2(\mb{D})\}, 
\end{align*}
where
\begin{align}
\label{eq:MSE_1}
R^{\mse}_1(\mb{D}) & = \frac{1}{2}\log\frac{|K_{\mb{X}|\mb{Y_1}}|}{|(D_1)_{l_1}|| \min\{[{D}_1]_{l_2},[\wt{D}_2]_{l_2} \} |}
 + \frac{1}{2}\log\frac{|(\wh{D}_1)_{l_1}|}{|\min\{(\wh{D}_1)_{l_1},({D}_2)_{l_1} \}|} 
 \\
 \label{eq:MSE_2}
R^{\mse}_2(\mb{D}) & = \frac{1}{2}\log\frac{|K_{\mb{X}|\mb{Y_2}}|}{|[D_2]_{l_2}|| \min\{(\wh{D}_1)_{l_1},({D}_2)_{l_1} \} |}
 + \frac{1}{2}\log\frac{|[\wt{D}_2]_{l_2} |}{|\min\{[{D}_1]_{l_2},[\wt{D}_2]_{l_2} \} |},
\end{align}
and\footnote{Note that $\wh{D}_1$ and $\wt{D}_2$ are positive definite since $D^{-1}_1 \succeq K^{-1}_{\mb{X}|\mb{Y_1}} \succ 0$,  $D^{-1}_2 \succeq K^{-1}_{\mb{X}|\mb{Y_2}} \succ 0$, and $K^{-1}_{\mb{X}|\mb{Y_2}} = K^{-1}_{\mb{X}|\mb{Y_1}} +K$.}  $\wh{D}_1 = (D^{-1}_1 + K)^{-1}$, $\wt{D}_2 = (D^{-1}_2 - K)^{-1}$.
\end{theorem}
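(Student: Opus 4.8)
The plan is to establish the two inequalities $R^{\mse}(\mathbf{D})\ge\max\{R^{\mse}_1(\mathbf{D}),R^{\mse}_2(\mathbf{D})\}$ (converse) and $R^{\mse}(\mathbf{D})\le\max\{R^{\mse}_1(\mathbf{D}),R^{\mse}_2(\mathbf{D})\}$ (achievability) separately, using the Enhanced-Enhancement / Maximin lower bound (Propositions~\ref{prop:ELB} and~\ref{prop:maximin}, which coincide by Proposition~\ref{prop:ELB=maxmin}) for the first and the achievable bound of Theorem~\ref{thm:achievable} for the second. In both halves the key device is the region decomposition: because $K_{\mathbf{X}|\mathbf{Y}_1}$ and $K_{\mathbf{X}|\mathbf{Y}_2}$ are diagonal, $K^{-1}_{\mathbf{X}|\mathbf{Y}_2}-K^{-1}_{\mathbf{X}|\mathbf{Y}_1}=K=\bigl(\begin{smallmatrix}A&0\\0&B\end{smallmatrix}\bigr)$ with $A\succeq 0$ on the first $l_1$ coordinates (``region~1'', where $\mathbf{Y}_2$ is stronger) and $B\prec 0$ on the last $l_2$ coordinates (``region~2'', where $\mathbf{Y}_1$ is stronger). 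The enhanced side information $\mathbf{Y}$ used throughout is the one whose conditional precision is the coordinatewise larger of $K^{-1}_{\mathbf{X}|\mathbf{Y}_1}$ and $K^{-1}_{\mathbf{X}|\mathbf{Y}_2}$, i.e.\ the diagonal matrix $K^{-1}_{\mathbf{X}|\mathbf{Y}}=K^{-1}_{\mathbf{X}|\mathbf{Y}_1}+\bigl(\begin{smallmatrix}A&0\\0&0\end{smallmatrix}\bigr)=K^{-1}_{\mathbf{X}|\mathbf{Y}_2}+\bigl(\begin{smallmatrix}0&0\\0&-B\end{smallmatrix}\bigr)$, equivalently $K_{\mathbf{X}|\mathbf{Y}}=\bigl(\begin{smallmatrix}(K_{\mathbf{X}|\mathbf{Y}_2})_{l_1}&0\\0&[K_{\mathbf{X}|\mathbf{Y}_1}]_{l_2}\end{smallmatrix}\bigr)$; a preliminary step is to exhibit a Gaussian $\mathbf{Y}$ with this property inside the feasible set (using the diagonality of the $K_{\mathbf{X}|\mathbf{Y}_i}$), and with this choice the enhancement corrections $\wh K$ and $\wt K$ of Proposition~\ref{prop:ELB} are supported on region~1 and region~2 respectively, which is what makes the two regions decouple.

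For the converse it suffices, since $R^{\mse}(\mathbf{D})\ge R_{\ELB}(\mathbf{D})$, to show that $\inf_{\bar C_{l1}(\mathbf{D})}R_{lo1}\ge R^{\mse}_1(\mathbf{D})$ and $\inf_{\bar C_{l2}(\mathbf{D})}R_{lo2}\ge R^{\mse}_2(\mathbf{D})$ for the above $\mathbf{Y}$. By Lemma~\ref{lemma:gauss_ELB} one may restrict to jointly Gaussian auxiliaries, and, as in the proof of Proposition~\ref{prop:ELB=maxmin}, take $\mathbf{U}=\emptyset$ in $\bar C_{l1}$. Writing $\Pi=K^{-1}_{\mathbf{X}|\mathbf{W}}-K^{-1}_{\mathbf{X}}$ and $\Omega=K^{-1}_{\mathbf{X}|\mathbf{W},\mathbf{V}}-K^{-1}_{\mathbf{X}}$, which range over all pairs of positive semidefinite matrices with $0\preceq\Pi\preceq\Omega$, the Gaussian information-combining identity gives
\begin{equation*}
R_{lo1}=\tfrac12\log\Bigl(|K_{\mathbf{X}|\mathbf{Y}_1}|\cdot\bigl|K^{-1}_{\mathbf{X}|\mathbf{Y}_1}+\Pi\bigr|\cdot\frac{\bigl|K^{-1}_{\mathbf{X}|\mathbf{Y}}+\Omega\bigr|}{\bigl|K^{-1}_{\mathbf{X}|\mathbf{Y}}+\Pi\bigr|}\Bigr),
\end{equation*}
and, using $K^{-1}_{\mathbf{X}|\mathbf{Y}}+\Omega-\wt K=K^{-1}_{\mathbf{X}|\mathbf{Y}_2}+\Omega$, the distortion constraints of $\bar C_{l1}$ become $\bigl((K^{-1}_{\mathbf{X}|\mathbf{Y}_1}+\Pi)^{-1}\bigr)_{diag}\preceq D_1$ and $\bigl((K^{-1}_{\mathbf{X}|\mathbf{Y}_2}+\Omega)^{-1}\bigr)_{diag}\preceq D_2$. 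The problem is thus reduced to minimizing the displayed log-determinant over positive semidefinite $\Pi\preceq\Omega$ subject to these two diagonal bounds, and I expect this matrix optimization to be the main obstacle. The intended route is to exploit that the three fixed conditional precisions $K^{-1}_{\mathbf{X}|\mathbf{Y}_1}$, $K^{-1}_{\mathbf{X}|\mathbf{Y}_2}$, $K^{-1}_{\mathbf{X}|\mathbf{Y}}$ are diagonal with the common region structure, use Hadamard's inequality together with the Gaussian variance-drop lemma (Lemma~\ref{lemma:inequality}) to reduce to diagonal $\Pi,\Omega$ that are block-compatible with the regions (with $\Pi$ and $\Omega$ forced to coincide on region~1 at an optimum), and then observe that the objective and the constraints separate coordinate-by-coordinate; the region~1 coordinates then produce the $(D_1)_{l_1}$, $(\wh D_1)_{l_1}$, $(D_2)_{l_1}$ terms of $R^{\mse}_1(\mathbf{D})$ and the region~2 coordinates the $[D_1]_{l_2}$, $[\wt D_2]_{l_2}$ terms, so that the per-coordinate minima sum to exactly $R^{\mse}_1(\mathbf{D})$. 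The symmetric argument, interchanging the two decoders and the two regions, gives $\inf_{\bar C_{l2}(\mathbf{D})}R_{lo2}\ge R^{\mse}_2(\mathbf{D})$; throughout, the constraints $D_i\preceq K_{\mathbf{X}|\mathbf{Y}_i}$ of~(\ref{MSE_distcons}) guarantee feasibility and that $\wh D_1,\wt D_2$ are positive definite.

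For achievability I would apply Theorem~\ref{thm:achievable} with jointly Gaussian auxiliaries chosen so that the two decoders' reconstruction error covariances have exactly the block structure of the formulas. Concretely, for the scheme attaining $R^{\mse}_1(\mathbf{D})$, choose $(\mathbf{W},\mathbf{U},\mathbf{V})$ so that $K_{\mathbf{X}|\mathbf{W},\mathbf{U},\mathbf{Y}_1}=\bigl(\begin{smallmatrix}(D_1)_{l_1}&0\\0&\min\{[D_1]_{l_2},[\wt D_2]_{l_2}\}\end{smallmatrix}\bigr)$ and $K_{\mathbf{X}|\mathbf{W},\mathbf{V},\mathbf{Y}_2}=\bigl(\begin{smallmatrix}\min\{(\wh D_1)_{l_1},(D_2)_{l_1}\}&0\\0&[D_2]_{l_2}\end{smallmatrix}\bigr)$; here $\mathbf{W}$ carries the description common to both decoders, $\mathbf{U}$ is Decoder~1's private refinement, needed only in region~2, and $\mathbf{V}$ is Decoder~2's private refinement, needed only in region~1 --- this is the precise sense in which the auxiliaries ``live in'' the regions. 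Checking that such a choice is feasible (the Markov chain, and $D_i\preceq K_{\mathbf{X}|\mathbf{Y}_i}$) is routine, and evaluating the rate expression of Theorem~\ref{thm:achievable}, again using that the relevant conditional covariances are diagonal with the region structure, gives the value $R^{\mse}_1(\mathbf{D})$; the mirror-image choice gives $R^{\mse}_2(\mathbf{D})$, so the better of the two schemes is $(D_1,D_2)$-achievable at rate $\max\{R^{\mse}_1(\mathbf{D}),R^{\mse}_2(\mathbf{D})\}$. Combining the two halves completes the proof; beyond routine bookkeeping, the one place where I expect genuine work is the diagonalization and region-decoupling step of the converse optimization.
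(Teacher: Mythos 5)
Your overall architecture matches the paper's: for achievability you pick exactly the auxiliaries of Proposition~\ref{prop:ach_mse} (the same $K_{\mb{X}|\mb{W},\mb{U},\mb{Y_1}}$ and $K_{\mb{X}|\mb{W},\mb{V},\mb{Y_2}}$), and for the converse you invoke the Enhanced-Enhancement bound with the enhanced side information $\mb{Y}$ of Lemma~\ref{lemma:Yspec}, restrict to Gaussian auxiliaries via Lemma~\ref{lemma:gauss_ELB}, and your reformulation of the objective and of the $\bar C_{l1}$ constraints in terms of $\Pi=K^{-1}_{\mb{X}|\mb{W}}-K^{-1}_{\mb{X}}$ and $\Omega=K^{-1}_{\mb{X}|\mb{W},\mb{V}}-K^{-1}_{\mb{X}}$ is correct. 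The achievability half is essentially complete.

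The converse, however, has a genuine gap precisely at the step you label ``the main obstacle.'' You propose to settle the log-determinant minimization by combining Hadamard's inequality with the Gaussian variance-drop lemma to ``reduce to diagonal $\Pi,\Omega$'' and then separate coordinates, but Lemma~\ref{lemma:inequality} compares Gaussian with non-Gaussian auxiliaries and has already done its work inside Lemma~\ref{lemma:gauss_ELB}; it says nothing about diagonalization. The ingredient that actually makes the coordinates decouple is the transfer of the diagonal distortion constraints, which live at $\mb{Y_1}$ and $\mb{Y_2}$, into diagonal bounds at $\mb{Y}$: from $((K^{-1}_{\mb{X}|\mb{Y_1}}+\Pi)^{-1})_{diag}\preceq D_1$ and $((K^{-1}_{\mb{X}|\mb{Y_2}}+\Omega)^{-1})_{diag}\preceq D_2$ one must deduce $((K^{-1}_{\mb{X}|\mb{Y}}+\Omega)^{-1})_{diag}\preceq \min\{(D_1^{-1}+\wh K)^{-1},(D_2^{-1}+\wt K)^{-1}\}$ in order to control $\prod_i(K_{\mb{X}|\mb{W},\mb{U},\mb{V},\mb{Y}})_{ii}$ after Hadamard. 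This rests on the matrix inequality $[(M_{diag})^{-1}+A]^{-1}\succeq([M^{-1}+A]^{-1})_{diag}$ for diagonal $A\succeq 0$ (the paper's Lemma~\ref{lemma:matrixineq}), a nontrivial fact requiring its own proof via the matrix inversion lemma and $(M_{diag})^{-1}\preceq(M^{-1})_{diag}$, and it is absent from your outline. Note also that the paper never literally restricts to diagonal $\Pi,\Omega$: it lower-bounds the objective at an arbitrary Gaussian feasible point using $|I+\wh K K_{\mb{X}|\mb{W},\mb{U},\mb{Y_1}}|\le\prod_i\bigl(1+(\wh K)_{ii}(K_{\mb{X}|\mb{W},\mb{U},\mb{Y_1}})_{ii}\bigr)$ and $|K_{\mb{X}|\mb{W},\mb{U},\mb{V},\mb{Y}}|\le\prod_i(K_{\mb{X}|\mb{W},\mb{U},\mb{V},\mb{Y}})_{ii}$ followed by the constraint transfer above, so your auxiliary claim that $\Pi$ and $\Omega$ must agree on region~1 at an optimum is both unsubstantiated and unnecessary.
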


To prove Theorem \ref{thm:MSEgeneralform}, first we find an upper bound based on the achievable scheme in
 \cite{timo} in Section \ref{sec:ach_scheme} and then we utilize  
  the Enhanced-$\DSM$ bound in the previous section, which turns out
   to match the upper bound.
 
 \begin{remark}
Theorem \ref{thm:MSEgeneralform}
subsumes the Gaussian version of Watanabe's result \cite{watanabe}
by allowing for $\mb{X}$ to have dimension exceeding two.
Watanabe points out that the rate-distortion for his problem, and thus
for ours, does not in general equal the sum of the individual rate-distortion
functions across the components of $\mb{X}$, even though they are
independent, independent given either side information vector, and
subject to separate distortion constraints. Thus, even in this
case, it is necessary to code across the different components of
$\mb{X}$.
 \end{remark}
\begin{theorem}
\label{thm:SI}
 The rate-distortion function for $\RDSI$, $R^{\si}(\mb{D})$,  can be expressed as
\begin{align*}
R^{\si}(\mb{D}) = \max\{R^{\si}_1(\mb{D}), R^{\si}_2(\mb{D})\}, 
\end{align*}
where
\begin{align}
\label{eq:SI_1}
R^{\si}_1(\mb{D}) & = \frac{1}{2}\log\frac{|K_{\mb{X}|\mb{Y_1}}|}{|(D_1)_{l_1}|| \min\{[{D}_1]_{l_2},[\wt{D}_2]_{l_2} \} |}
 + \frac{1}{2}\log\frac{|(\wh{D}_1)_{l_1}|}{|\min\{(\wh{D}_1)_{l_1},({D}_2)_{l_1} \}|} 
 \\
 \label{eq:SI_2}
R^{\si}_2(\mb{D}) & = \frac{1}{2}\log\frac{|K_{\mb{X}|\mb{Y_2}}|}{|[D_2]_{l_2}|| \min\{(\wh{D}_1)_{l_1},({D}_2)_{l_1} \} |}
 + \frac{1}{2}\log\frac{|[\wt{D}_2]_{l_2} |}{|\min\{[{D}_1]_{l_2},[\wt{D}_2]_{l_2} \} |},
\end{align}
and\footnote{Note that $\wh{D}_1$ and $\wt{D}_2$ are positive definite due to similar reasoning as in Theorem \ref{thm:MSEgeneralform}.} $\wh{D}_1 = (D^{-1}_1 + K)^{-1}$, $\wt{D}_2 = (D^{-1}_2 - K)^{-1}$.
\end{theorem}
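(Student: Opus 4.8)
The plan is to prove Theorem~\ref{thm:SI} by sandwiching $R^{\si}(\mb{D})$ between a matching achievable upper bound and a converse lower bound, following the same two-part strategy announced for Theorem~\ref{thm:MSEgeneralform}; since the formulas (\ref{eq:SI_1})--(\ref{eq:SI_2}) are literally those of (\ref{eq:MSE_1})--(\ref{eq:MSE_2}), the two arguments are structurally parallel. The first observation is that, unlike the MSE measure, the distortion measure $\Gamma_i(K)=K$ of (\ref{dm_SI}) with a scaled-identity $D_i$ is invariant under $(\mb{X},\mb{\wh{X}_i})\rightarrow(Q\mb{X},Q\mb{\wh{X}_i})$ for every orthogonal $Q$, so the reduction around (\ref{relation})--(\ref{eq:relation}) applies to $\RDSI$ with \emph{no} extra hypothesis on $K_{\mb{X}|\mb{Y_1}},K_{\mb{X}|\mb{Y_2}}$: we may assume $K^{-1}_{\mb{X}|\mb{Y_2}}-K^{-1}_{\mb{X}|\mb{Y_1}}=K=A\oplus B$, block-diagonal as in (\ref{relation}) with $A\succeq 0$ and $B\prec 0$ diagonal. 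Because $D_1=d_1I$, $D_2=d_2I$ and $A,B$ are diagonal, the matrices $\wh{D}_1=(D_1^{-1}+K)^{-1}$, $\wt{D}_2=(D_2^{-1}-K)^{-1}$ and their relevant sub-blocks are diagonal, so each $\min\{\cdot,\cdot\}$ in (\ref{eq:SI_1})--(\ref{eq:SI_2}) is well defined even though $K_{\mb{X}|\mb{Y_i}}$ need not be. Both halves of the proof use the jointly Gaussian enhancement $\mb{Y}$ with $K^{-1}_{\mb{X}|\mb{Y}}=K^{-1}_{\mb{X}|\mb{Y_1}}+(A\oplus 0)$, i.e.\ the side information matching the stronger of $\mb{Y_1},\mb{Y_2}$ in each region; for it $\wh{K}=K^{-1}_{\mb{X}|\mb{Y}}-K^{-1}_{\mb{X}|\mb{Y_1}}=A\oplus 0\succeq 0$ and $\wt{K}=K^{-1}_{\mb{X}|\mb{Y}}-K^{-1}_{\mb{X}|\mb{Y_2}}=0\oplus(-B)\succeq 0$, so $\mb{Y}\in S_G$.

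For the achievable bound I would invoke the achievable scheme of Theorem~\ref{thm:achievable} (which follows \cite{timo}) with this $\mb{Y}$ and jointly Gaussian auxiliaries $(\mb{W},\mb{U},\mb{V})$ chosen, in the eigenbasis of (\ref{relation}), so that every relevant conditional covariance is block-diagonal with blocks equal to the matrices appearing in (\ref{eq:SI_1})--(\ref{eq:SI_2}): the common layer $\mb{W}$ (with private refinement $\mb{U}$ for Decoder~1) brings $K_{\mb{X}|\mb{W},\mb{U},\mb{Y_1}}$ to $(D_1)_{l_1}\oplus\min\{[D_1]_{l_2},[\wt{D}_2]_{l_2}\}$, and the private refinement $\mb{V}$ for Decoder~2 acts only in the $l_1$-region, bringing the corresponding covariance down to $\min\{(\wh{D}_1)_{l_1},(D_2)_{l_1}\}$ there. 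A block-determinant calculation then yields $R_{lo1}=R^{\si}_1(\mb{D})$ and $R_{lo2}=R^{\si}_2(\mb{D})$, while the distortion constraints $K_{\mb{X}|\mb{W},\mb{U},\mb{Y_1}}\preceq D_1$ and $K_{\mb{X}|\mb{W},\mb{V},\mb{Y_2}}\preceq D_2$ reduce, region by region, to inequalities that hold by construction and by the hypothesis (\ref{si_distcons}). Taking the larger of the two rates gives $R^{\si}(\mb{D})\le\max\{R^{\si}_1(\mb{D}),R^{\si}_2(\mb{D})\}$; the delicate part here is the bookkeeping, namely exhibiting one honest jointly Gaussian law realizing all of the prescribed block-diagonal conditional covariances at once.

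For the converse I would use the Enhanced-$\DSM$ bound (\ref{eq:ELB}) --- equivalently the Maximin bound, by Proposition~\ref{prop:ELB=maxmin} --- evaluated at the same $\mb{Y}\in S_G$. By Lemma~\ref{lemma:gauss_ELB} the auxiliaries in $\bar{C}_{l1}(\mb{D})$ and $\bar{C}_{l2}(\mb{D})$ may be taken jointly Gaussian with $(\mb{X},\mb{Y_1},\mb{Y_2},\mb{Y})$, and, exactly as in the proof of Proposition~\ref{prop:ELB=maxmin}, $\mb{U}$ may be absorbed into $\mb{W}$ in the first branch ($\mb{V}$ into $\mb{W}$ in the second). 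Writing the information increment contributed by $\mb{W}$ as $P=K^{-1}_{\mb{X}|\mb{W},\mb{Y_1}}-K^{-1}_{\mb{X}|\mb{Y_1}}$ (which, by the conditional-independence structure, also equals $K^{-1}_{\mb{X}|\mb{W},\mb{Y}}-K^{-1}_{\mb{X}|\mb{Y}}$) and that contributed by $\mb{V}$ as $R=K^{-1}_{\mb{X}|\mb{W},\mb{V},\mb{Y}}-K^{-1}_{\mb{X}|\mb{W},\mb{Y}}$, and using $K^{-1}_{\mb{X}|\mb{Y}}-\wt{K}=K^{-1}_{\mb{X}|\mb{Y_2}}$, the first branch becomes the matrix program: minimize $\frac{1}{2}\log|I+K_{\mb{X}|\mb{Y_1}}P|+\frac{1}{2}\log|I+(K^{-1}_{\mb{X}|\mb{Y}}+P)^{-1}R|$ over $P,R\succeq 0$ subject to $P\succeq D_1^{-1}-K^{-1}_{\mb{X}|\mb{Y_1}}$ and $P+R\succeq D_2^{-1}-K^{-1}_{\mb{X}|\mb{Y_2}}$ (both right-hand sides being PSD by (\ref{si_distcons})), with the symmetric program for the second branch; it then suffices to show these two infima are at least $R^{\si}_1(\mb{D})$ and $R^{\si}_2(\mb{D})$ respectively.

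Solving this matrix program and showing its value is \emph{exactly} $R^{\si}_1(\mb{D})$ (and $R^{\si}_2(\mb{D})$) is the step I expect to be the main obstacle. The difficulty is that $K_{\mb{X}|\mb{Y_i}}$ need not be block-diagonal, so the program does not decouple across the two regions a priori; the heart of the argument is to show that an optimal $(P,R)$ --- equivalently an optimal $(\mb{W},\mb{V})$ --- can be taken to respect the region decomposition, making all the relevant conditional covariances block-diagonal, after which the program splits region by region into scalar layered-Wyner--Ziv computations whose optima are attained at the binding distortion constraint, producing precisely the $\min\{\cdot,\cdot\}$ terms of (\ref{eq:SI_1})--(\ref{eq:SI_2}). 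For this reduction I expect to rely on the Gaussian variance-drop lemma (Lemma~\ref{lemma:inequality}), together with concavity of $\log|\cdot|$ to locate the optimum on the boundary of the feasible set. A final bookkeeping step reconciles $\max$ of the two branch values with $\max\{R^{\si}_1(\mb{D}),R^{\si}_2(\mb{D})\}$, using the positivity of $\wh{D}_1,\wt{D}_2$ (the footnotes to the statement) to guarantee feasibility throughout.
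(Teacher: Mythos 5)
Your overall architecture (achievability via Theorem~\ref{thm:achievable} with the region-respecting block-diagonal auxiliaries, converse via the Enhanced-$\DSM$ at the enhancement $\mb{Y}$ of Lemma~\ref{lemma:Yspec} with Gaussian auxiliaries by Lemma~\ref{lemma:gauss_ELB}) matches the paper, and your translation of the first branch into the program over $P=K^{-1}_{\mb{X}|\mb{W},\mb{Y_1}}-K^{-1}_{\mb{X}|\mb{Y_1}}$ and $R$ with constraints $P\succeq D_1^{-1}-K^{-1}_{\mb{X}|\mb{Y_1}}$, $P+R\succeq D_2^{-1}-K^{-1}_{\mb{X}|\mb{Y_2}}$ is an accurate reformulation. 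The gap is in the step you yourself flag as the main obstacle: you propose to evaluate the infimum by showing that an optimal $(P,R)$ can be taken to respect the region decomposition and then solving the decoupled scalar problems, but you give no argument for that decoupling, and it is not needed. The paper never characterizes the optimizer. It lower-bounds the objective at an \emph{arbitrary} feasible point by Hadamard-type inequalities: $|K_{\mb{X}|\mb{W},\mb{U},\mb{V},\mb{Y}}|\le\prod_i(K_{\mb{X}|\mb{W},\mb{U},\mb{V},\mb{Y}})_{ii}$ and $|I+\wh{K}K_{\mb{X}|\mb{W},\mb{U},\mb{Y_1}}|\le\prod_i(1+(\wh{K})_{ii}(K_{\mb{X}|\mb{W},\mb{U},\mb{Y_1}})_{ii})$, after which the scaled-identity constraints control each diagonal entry directly ($(K_{\mb{X}|\mb{W},\mb{U},\mb{Y_1}})_{ii}\le d_1$ and $(K_{\mb{X}|\mb{W},\mb{U},\mb{V},\mb{Y}})_{ii}\le\min\{((D_1^{-1}+\wh{K})^{-1})_{ii},((D_2^{-1}+\wt{K})^{-1})_{ii}\}$), and the products regroup block by block into exactly $R^{\si}_1(\mb{D})$ and $R^{\si}_2(\mb{D})$. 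This works for non-block-diagonal $K_{\mb{X}|\mb{Y_i}}$ with no structural claim about the minimizer; as written, your converse is incomplete without either this observation or a worked-out proof of your decoupling claim.

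On the achievability side you also leave unproved the one nontrivial feasibility fact, namely that $\mb{W}$ with $K_{\mb{X}|\mb{W},\mb{Y_2}}=(\wh{D}_1)_{l_1}\oplus[D_2]_{l_2}$ exists, i.e.\ that this matrix is dominated by $K_{\mb{X}|\mb{Y_2}}$. This is not routine bookkeeping: it is precisely where the scaled-identity hypothesis enters, via the dichotomy $D_1\preceq D_2$ or $D_1\succeq D_2$ (in the first case one shows $G^{-1}\succeq D_2^{-1}\succeq K^{-1}_{\mb{X}|\mb{Y_2}}$, in the second $G^{-1}\succeq\wh{D}_1^{-1}\succeq K^{-1}_{\mb{X}|\mb{Y_2}}$). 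You should carry out this case analysis explicitly rather than asserting that the required inequalities ``hold by construction.''
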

For the direct part of the proof of Theorem \ref{thm:SI},  we utilize the achievable scheme in Section \ref{sec:ach_scheme}. For the converse result presented in Section \ref{sec:converse}, we use the Enhanced-$\DSM$ bound.
 
\begin{theorem}
\label{thm:trace}
 The rate distortion function for $\RDTr$, $R^{\tr}(\mb{D})$, can be characterized as
 \begin{align*}
 R^{\tr}(\mb{D}) =& \min_{C^{\tr}(\mb{D})}\max \{ R^{\tr}_1(\mb{D}), R^{\tr}_2(\mb{D})
 \} 
 \end{align*}
 where
\begin{align}
\label{eq:TR_1}
R^{\tr}_1(\mb{D}) = &\frac{1}{2}\log\frac{|K_{\mb{X}|\mb{Y_1}}|}{| I +A(K_{\mb{X}|\mb{W},\mb{Y_1}})_{l_1}|}
+ \frac{1}{2}\log\frac{1}{|(K_{\mb{X}|\mb{W},\mb{V},\mb{Y_2}})_{l_1}||[K_{\mb{X}|\mb{W},\mb{U},\mb{Y_1}}]_{l_2}|}, 
\\
\label{eq:TR_2}
R^{\tr}_2(\mb{D}) = &\frac{1}{2}\log\frac{|K_{\mb{X}|\mb{Y_2}}|}{| I - B[K_{\mb{X}|\mb{W},\mb{Y_2}}]_{l_2}|} 
+ \frac{1}{2}\log\frac{1}{|(K_{\mb{X}|\mb{W},\mb{V},\mb{Y_2}})_{l_1}||[K_{\mb{X}|\mb{W},\mb{U},\mb{Y_1}}]_{l_2}|}
\end{align}
and $C^{\tr}(\mb{D})$  denotes
\begin{align}
\label{set1:c1}
&(\mb{W},\mb{U},\mb{V}) \mbox{ jointly Gaussian with } (\mb{X},\mb{Y_1},\mb{Y_2})
\\
\label{set1:c2}
& (\mb{W},\mb{U},\mb{V}) \lra \mb{X} \lra (\mb{Y_1},\mb{Y_2})
 \\
 \label{set1:c3}
& {\mb{U}} \perp (\mb{X})_{l_1}|({\mb{W}}, \mb{Y_1}), \ {\mb{V}} \perp [\mb{X}]_{l_2}|({\mb{W}},\mb{Y_2})
\\
\label{set1:c4}
&K_{\mb{X}|\mb{W},\mb{Y_1}},K_{\mb{X}|\mb{W},\mb{Y_2}},K_{\mb{X}|\mb{W},\mb{U},\mb{Y_1}},K_{\mb{X}|\mb{W},\mb{V},\mb{Y_2}}, \mbox{ diagonal}
 \\
 \label{set1:c5}
&\Tr((K_{\mb{X}|\mb{W},\mb{Y_1}})_{l_1}) + \Tr([K_{\mb{X}|\mb{W},\mb{U},\mb{Y_1}}]_{l_2}) \le D_1 \\
\label{set1:c6}
&\Tr((K_{\mb{X}|\mb{W},\mb{V},\mb{Y_2}})_{l_1})+\Tr([K_{\mb{X}|\mb{W},\mb{Y_2}}]_{l_2}) \le D_2 
\end{align}
\end{theorem}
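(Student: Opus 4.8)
The plan is to prove Theorem~\ref{thm:trace} in the usual two halves: an achievability bound obtained by specializing the scheme of Section~\ref{sec:ach_scheme} to the structured Gaussian class $C^{\tr}(\mb{D})$, and a converse obtained from the Enhanced-$\DSM$ bound of Proposition~\ref{prop:ELB} (equivalently, the Maximin bound, by Proposition~\ref{prop:ELB=maxmin}).

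For achievability, I would take an arbitrary $(\mb{W},\mb{U},\mb{V})\in C^{\tr}(\mb{D})$ and feed it into Theorem~\ref{thm:achievable} with $\Gamma_1=\Gamma_2=\Tr$. Feasibility is immediate: since $\mb{U}\perp(\mb{X})_{l_1}\,|\,(\mb{W},\mb{Y_1})$ forces $(K_{\mb{X}|\mb{W},\mb{U},\mb{Y_1}})_{l_1}=(K_{\mb{X}|\mb{W},\mb{Y_1}})_{l_1}$, one has $\Tr(K_{\mb{X}|\mb{W},\mb{U},\mb{Y_1}})=\Tr((K_{\mb{X}|\mb{W},\mb{Y_1}})_{l_1})+\Tr([K_{\mb{X}|\mb{W},\mb{U},\mb{Y_1}}]_{l_2})\le D_1$ by \eqref{set1:c5}, and symmetrically $\Tr(K_{\mb{X}|\mb{W},\mb{V},\mb{Y_2}})\le D_2$ by \eqref{set1:c3} and \eqref{set1:c6}. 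The only computation is to check that, for such a structured triple, the Heegard--Berger rate collapses to $\max\{R^{\tr}_1,R^{\tr}_2\}$. Here the key ingredients are the block-diagonality \eqref{set1:c4}, the conditional independences \eqref{set1:c3}, and the identity $K^{-1}_{\mb{X}|\mb{W},\mb{Y_2}}=K^{-1}_{\mb{X}|\mb{W},\mb{Y_1}}+K$, which follows from $K^{-1}_{\mb{X}|\mb{Y_2}}-K^{-1}_{\mb{X}|\mb{Y_1}}=K$ and $\mb{W}\lra\mb{X}\lra(\mb{Y_1},\mb{Y_2})$; together these split each mutual-information term into an $l_1$-block and an $l_2$-block and identify the sum with \eqref{eq:TR_1} (resp.\ \eqref{eq:TR_2}). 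Taking the infimum over $C^{\tr}(\mb{D})$ then yields $R^{\tr}(\mb{D})\le\min_{C^{\tr}(\mb{D})}\max\{R^{\tr}_1,R^{\tr}_2\}$.

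For the converse, since $R^{\tr}(\mb{D})\ge R_{\ELB}(\mb{D})$ by Proposition~\ref{prop:ELB}, it suffices to prove $R_{\ELB}(\mb{D})\ge\min_{C^{\tr}(\mb{D})}\max\{R^{\tr}_1,R^{\tr}_2\}$; together with the achievability bound this pins down $R^{\tr}(\mb{D})$. I would evaluate the Enhanced-$\DSM$ bound at the particular $\mb{Y}^{\star}$ with $K^{-1}_{\mb{X}|\mb{Y}^{\star}}=K^{-1}_{\mb{X}|\mb{Y_1}}+\mathrm{diag}(A,0)$ --- which one checks lies in $S_G$ via the standard Gaussian enhancement construction, and which satisfies $\wh K=\mathrm{diag}(A,0)$, $\wt K=\mathrm{diag}(0,-B)$, so that $\mb{Y}^{\star}$ agrees with $\mb{Y_2}$ on region $A$ and with $\mb{Y_1}$ on region $B$. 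By Lemma~\ref{lemma:gauss_ELB} the auxiliaries may be taken jointly Gaussian. The heart of the argument is then a \emph{regionalization} step: given any Gaussian $(\mb{W},\mb{U},\mb{V})$ feasible for $\bar C_{l1}(\mb{D})$ (resp.\ $\bar C_{l2}(\mb{D})$) at $\mb{Y}=\mb{Y}^{\star}$, construct a triple in $C^{\tr}(\mb{D})$ whose rate is no larger than $R_{lo1}$ (resp.\ $R_{lo2}$). This would have two parts: (i) replacing each relevant conditional covariance by its block-diagonal, then diagonal, part along the region split, which leaves the trace distortion values unchanged and --- via the Gaussian ``variance-drop'' lemma (Lemma~\ref{lemma:inequality}) and the concavity of $\log\det$ --- does not increase $R_{lo1}$ or $R_{lo2}$; and (ii) absorbing the part of $\mb{U}$ informative about $(\mb{X})_{l_1}$ into $\mb{W}$, and the part of $\mb{V}$ informative about $[\mb{X}]_{l_2}$ into $\mb{W}$, i.e.\ enforcing \eqref{set1:c3} --- which is rate-neutral precisely because on region $A$ the enhanced side information $\mb{Y}^{\star}$ already equals $\mb{Y_2}$, so the enhanced constraint $\Gamma_2((K^{-1}_{\mb{X}|\mb{W},\mb{U},\mb{V},\mb{Y}^{\star}}-\wt K)^{-1})\preceq D_2$ reduces on that block to Decoder~2's genuine constraint, and symmetrically on region $B$. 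Once the triple has this structure, the same log-determinant bookkeeping as in the achievability part identifies $R_{lo1}$ with $R^{\tr}_1$ and $R_{lo2}$ with $R^{\tr}_2$, so $R_{lo1}(\mb{W},\mb{U},\mb{V})\ge\min_{C^{\tr}(\mb{D})}\max\{R^{\tr}_1,R^{\tr}_2\}$ and likewise for $R_{lo2}$, giving the desired bound on $R_{\ELB}(\mb{D})$.

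I expect the regionalization step to be the main obstacle: one must perform the block-diagonalization of the covariances and the localization of $\mb{U}$ and $\mb{V}$ \emph{simultaneously}, all the while retaining feasibility in $\bar C_{li}(\mb{D})$ and not increasing $\max\{R_{lo1},R_{lo2}\}$. The interplay between the \emph{scalar} trace constraint (which must be handled by majorization rather than by the positive-definite order, unlike in $\RDSI$) and the enhanced distortion constraints involving $\wh K$ and $\wt K$ is what makes this more delicate than the analogous reductions for $\RDSI$ and $\RDmse$, and I anticipate isolating it as a separate lemma whose proof couples the variance-drop lemma with the convexity of $-\log\det$ on the two blocks.
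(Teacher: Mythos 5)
Your achievability half matches the paper's (Proposition~\ref{prop:ach_trace}): restricting the Heegard--Berger bound of Theorem~\ref{thm:achievable} to triples in $C^{\tr}(\mb{D})$ and using the conditional independences in (\ref{set1:c3}) to identify the trace constraints is exactly what is done there. The converse, however, has a genuine gap: you base it on the Enhanced-$\DSM$ (equivalently, by Proposition~\ref{prop:ELB=maxmin}, the $\MLB$), but that bound has the form $\max_i \sup_{S_G}\inf R_{loi}$ with a \emph{separate} infimum for each $i$. Your regionalization step, even if carried out in full, maps a triple feasible for $\bar{C}_{l1}(\mb{D})$ to a triple in $C^{\tr}(\mb{D})$ whose $R^{\tr}_1$ is at most $R_{lo1}$; it gives no control on $R^{\tr}_2$ at that same triple. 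Hence what you actually obtain is $\inf_{\bar{C}_{l1}} R_{lo1} \ge \min_{C^{\tr}} R^{\tr}_1$ and symmetrically for the other term, i.e.\ a lower bound of the form $\max\{\min_{C^{\tr}} R^{\tr}_1, \min_{C^{\tr}} R^{\tr}_2\}$, which is in general strictly smaller than the target $\min_{C^{\tr}}\max\{R^{\tr}_1,R^{\tr}_2\}$. The concluding sentence of your converse, ``$R_{lo1}(\mb{W},\mb{U},\mb{V})\ge\min_{C^{\tr}}\max\{R^{\tr}_1,R^{\tr}_2\}$,'' is exactly the unjustified jump from max-min to min-max. Section~\ref{sec:conclusion} of the paper gives numerical evidence that this is not a removable technicality for $\RDTr$: on a concrete bivariate instance the swapped (max-min) value is $1.7802127$ while the min-max value is $1.7808784$, and the paper conjectures that the $\MLB$ (hence the $\ELB$) is strictly weaker than the $\mLB$ here. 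So no amount of care in the regionalization can rescue this route.

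The paper instead runs the converse through the $\mLB$ (Theorem~\ref{thm:minmax}), whose single infimum ranges over triples that must control $R_{lo1}$ and $R_{lo2}$ \emph{simultaneously}; that is what produces the $\min\max$ form. Two further points then need attention that your plan inherits incorrectly from the $\ELB$ setting: (i) Gaussian optimality of the auxiliaries is not known for the $\mLB$ in general (Lemma~\ref{lemma:gauss_ELB} covers only the $\DSM$ and the $\ELB$), so the paper first weakens $R_{lo1},R_{lo2}$ to the expressions $\wh{R}_{lo1},\wh{R}_{lo2}$ via the $\wh{\mb{Y}}_\mb{i}$ construction and Hadamard's inequality and only then Gaussianizes (Lemma~\ref{lemma:gauss}); and (ii) the reduction to $C^{\tr}(\mb{D})$ must be a joint construction of a single $(\wt{\mb{W}},\wt{\mb{U}},\wt{\mb{V}})$ that leaves \emph{both} objectives no larger (Lemmas~\ref{lemma:diag} and~\ref{lemma:indp}), which is precisely the property your per-objective regionalization lacks.
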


\begin{remark}
\label{remark:diag}
Let $\mb{W}$ be jointly Gaussian with $(\mb{X},\mb{Y_1},\mb{Y_2})$ such that $\mb{W} \lra \mb{X} \lra (\mb{Y_1},\mb{Y_2})$. Due to (\ref{eq:relation}), $K_{\mb{X}|\mb{W},\mb{Y_1}}$ is a diagonal matrix if and only if $K_{\mb{X}|\mb{W},\mb{Y_2}}$ is  diagonal. 
\end{remark}

Similar to the proof of Theorem \ref{thm:SI} and \ref{thm:MSEgeneralform}, we begin with proving the direct part using the same achievable scheme for $\RDSI$  by changing the distortion measure. For the converse part; however, we utilize the $\mLB$ bound, which is stronger than the Enhanced-$\DSM$ bound in general. 

\section{Achievable Scheme}
\label{sec:ach_scheme}
Heegard and Berger~\cite{berger} give an achievable scheme for
a more general version of our problem. For more than two decoders,
the Heegard and Berger result was corrected by 
Timo \textit{et al.}~\cite{timo}, but we shall only consider the
two-decoder version here. Particularizing the Heegard-Berger result
to our problem implies the following.

\begin{theorem}[cf.~\cite{berger,timo}]
\label{thm:achievable}
The rate distortion function, $R(\mb{D})$, is upper bounded by
\begin{align}
\label{upperb}
R_{ach}(\mb{D}) = \inf_{C_u(\mb{D})} \max \{I(\mb{X};\mb{W}, \mb{U}| \mb{Y_1}) +  I(\mb{X};\mb{V}|\mb{W}, \mb{Y_2}),  I(\mb{X};\mb{W},\mb{V}|\mb{Y_2})+ I(\mb{X};\mb{U}|\mb{W}, \mb{Y_1})  \}
\end{align}
where 
\begin{align*}
&C_u(\mb{D}) : \mbox{ set of } (\mb{W}, \mb{U}, \mb{V}) \mbox{ such that}  
\\
&(\mb{W},\mb{U},\mb{V}) \text{ jointly Gaussian with } (\mb{X},\mb{Y_1},\mb{Y_2}) 
\\
& (\mb{W},\mb{U},\mb{V}) \lra \mb{X} \lra (\mb{Y_1}, \mb{Y_2})  
 \\
& \Gamma_1\left(K_{\mb{X}|\mb{W},\mb{U},\mb{Y_1}}\right) \preceq D_1,\Gamma_2\left( K_{\mb{X}|\mb{W},\mb{V},\mb{Y_2}}\right) \preceq D_2,
\end{align*}
and $\Gamma_i$ can be equal to one of the mappings in (\ref{dm_MSE}), (\ref{dm_SI}), and (\ref{dm_trace}) and the corresponding distortion constraints are as in (\ref{MSE_distcons}), (\ref{si_distcons}),  and (\ref{tr_distcons}) respectively.
\end{theorem}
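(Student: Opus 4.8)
The plan is to obtain Theorem~\ref{thm:achievable} as a specialization of the general Heegard--Berger inner bound, so the argument is short. First I would invoke the Heegard--Berger coding theorem~\cite{berger} for two decoders (for which the correction of~\cite{timo} is not needed): for every auxiliary triple $(\mb{W},\mb{U},\mb{V})$ satisfying $(\mb{W},\mb{U},\mb{V}) \lra \mb{X} \lra (\mb{Y_1},\mb{Y_2})$ and every pair of reconstruction maps $\mb{X}'_1 = \phi_1(\mb{W},\mb{U},\mb{Y_1})$, $\mb{X}'_2 = \phi_2(\mb{W},\mb{V},\mb{Y_2})$, any rate strictly exceeding
\[
\max\{\, I(\mb{X};\mb{W},\mb{U}\mid\mb{Y_1}) + I(\mb{X};\mb{V}\mid\mb{W},\mb{Y_2}),\ I(\mb{X};\mb{W},\mb{V}\mid\mb{Y_2}) + I(\mb{X};\mb{U}\mid\mb{W},\mb{Y_1}) \,\}
\]
is achievable with Decoder~$i$ incurring expected distortion $E\big[\Gamma_i\big((\mb{X}-\mb{X}'_i)(\mb{X}-\mb{X}'_i)^{T}\big)\big]$. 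This rate comes from the layered Wyner--Ziv scheme in which a common codeword $\mb{W}$ is binned so that both decoders can recover it (costing $\max_i I(\mb{X};\mb{W}\mid\mb{Y_i})$ bits), while the private refinements $\mb{U}$ and $\mb{V}$ are binned against $(\mb{W},\mb{Y_1})$ and $(\mb{W},\mb{Y_2})$ respectively; the displayed maximum is just the resulting total rate regrouped into the two ``orderings.'' Because our source and side information are Gaussian rather than finite-alphabet, the one additional ingredient is the (routine) extension of this theorem to continuous alphabets with the matrix-valued cost constraint of the problem; I would dispatch this by a standard fine-quantization-and-limiting argument, using continuity of the relevant mutual informations and conditional covariances, or by pointing to an existing treatment of Gaussian Wyner--Ziv-type problems.

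Second, I would specialize to jointly Gaussian auxiliaries. Restricting $(\mb{W},\mb{U},\mb{V})$ to be jointly Gaussian with $(\mb{X},\mb{Y_1},\mb{Y_2})$ and generating the codebooks i.i.d.\ across the block only shrinks the set over which one optimizes, so the resulting rate is still an upper bound on $R(\mb{D})$. For such auxiliaries the optimal reconstruction at Decoder~1 is the MMSE estimate $\wh{\mb{X}}_{1k} = E[\mb{X}_k \mid \mb{W}_k,\mb{U}_k,\mb{Y}_{1k}]$, which is linear, and by the orthogonality principle the error $\mb{X}_k - \wh{\mb{X}}_{1k}$ has covariance equal to the \emph{deterministic} matrix $K_{\mb{X}\mid\mb{W},\mb{U},\mb{Y_1}}$ --- the same for every $k$ and independent of the realization; moreover this covariance is $\preceq$ the error covariance of any other estimate, so, $\Gamma_1$ being order preserving, the MMSE choice is also optimal for the distortion constraint. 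The analogous statements hold at Decoder~2 with $K_{\mb{X}\mid\mb{W},\mb{V},\mb{Y_2}}$. Since each $\Gamma_i$ in~(\ref{dm_MSE}),~(\ref{dm_SI}),~(\ref{dm_trace}) is linear (and order preserving on PSD matrices, as required by Definition~\ref{defn:gamma}), the block-averaged constraint $E\big[\tfrac{1}{n}\sum_k \Gamma_i\big((\mb{X}_k - \wh{\mb{X}}_{ik})(\mb{X}_k - \wh{\mb{X}}_{ik})^{T}\big)\big] \preceq D_i$ collapses to $\Gamma_1\big(K_{\mb{X}\mid\mb{W},\mb{U},\mb{Y_1}}\big) \preceq D_1$ and $\Gamma_2\big(K_{\mb{X}\mid\mb{W},\mb{V},\mb{Y_2}}\big) \preceq D_2$, with the vanishing contribution of atypicality and decoding-error events absorbed into the $\epsilon I$ slack in the definition of $(D_1,D_2)$-achievability. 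These are exactly the conditions defining $C_u(\mb{D})$; note that the side conditions on $D_1,D_2$ in~(\ref{MSE_distcons}),~(\ref{si_distcons}),~(\ref{tr_distcons}) play no role in this direction.

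Finally, substituting jointly Gaussian $(\mb{W},\mb{U},\mb{V})$ together with the MMSE reconstructions into the Heegard--Berger rate expression above --- whose form already coincides with the maximum of the two terms in~(\ref{upperb}) --- and taking the infimum over all $(\mb{W},\mb{U},\mb{V}) \in C_u(\mb{D})$ yields $R_{ach}(\mb{D})$, so $R(\mb{D}) \le R_{ach}(\mb{D})$ for each of the three choices of $\Gamma_i$. I expect the only genuinely delicate step to be the continuous-alphabet version of Heegard--Berger together with the verification that the \emph{matrix} distortion inequality (rather than a scalar one) survives the limiting argument; the two reductions above, and the identification of the error covariances with the stated conditional covariances, are then bookkeeping.
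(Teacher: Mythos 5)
Your proposal is correct and takes essentially the same route as the paper, which states Theorem~\ref{thm:achievable} without proof as a particularization of the Heegard--Berger achievability result to jointly Gaussian auxiliaries with MMSE reconstructions. The details you supply --- regrouping $\max_i I(\mb{X};\mb{W}|\mb{Y_i}) + I(\mb{X};\mb{U}|\mb{W},\mb{Y_1}) + I(\mb{X};\mb{V}|\mb{W},\mb{Y_2})$ into the two displayed orderings, and identifying the error covariance of the MMSE estimate with the deterministic conditional covariance so that the block-averaged constraint collapses to the single-letter one --- are exactly the bookkeeping the paper leaves implicit.
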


Here $\mb{W}$ can be viewed as a common message to both decoders, and $\mb{U}$ and $\mb{V}$ are private messages for Decoder $1$ and $2$ respectively. The encoder first creates $\mb{W}$ via vector quantization with a given Gaussian test channel and then generates $\mb{U}$ and $\mb{V}$ with respect to the source and $\mb{W}$. Then $\mb{W}$ is sent to both decoders and $\mb{U}$ and $\mb{V}$ are sent to Decoder $1$ and Decoder $2$, respectively. At the Decoder side, Decoder $1$ decodes $\mb{W}$ and $\mb{U}$ by using its side information $\mb{Y_1}$. Similarly, Decoder $2$ decodes $\mb{W}$ and $\mb{V}$ using $\mb{Y_2}$.

Heegard and Berger do not require $(\mb{W},\mb{U},\mb{V})$ to be jointly
Gaussian with $(\mb{X},\mb{Y_1},\mb{Y_2})$, but we shall only apply
Theorem~\ref{thm:achievable} with $(\mb{W},\mb{U},\mb{V})$  of this form,
so we have added it as a constraint in the statement of the result.
Note that when $(\mb{W},\mb{U},\mb{V})$ are jointly Gaussian 
with $(\mb{X},\mb{Y_1},\mb{Y_2})$ in
(\ref{upperb}), we can write $R_{ach}(\mb{D})$ as
\begin{align*}
R_{ach}(\mb{D}) = \inf_{C_u(\mb{D})} \max \{R_1, R_2 \}
\end{align*}
where
\begin{align}
\label{eq:ub1}
R_1 = & \frac{1}{2}\log\frac{|K_{\mb{X}|\mb{Y_1}}|}{|K_{\mb{X}|\mb{W},\mb{U},\mb{Y_1}}|}\frac{|K_{\mb{X}|\mb{W},\mb{Y_2}}|}{|K_{\mb{X}|\mb{W},\mb{V},\mb{Y_2}}|},
\\
\label{eq:ub2}
R_2 = & \frac{1}{2}\log\frac{|K_{\mb{X}|\mb{Y_2}}|}{|K_{\mb{X}|\mb{W},\mb{V},\mb{Y_2}}|}\frac{|K_{\mb{X}|\mb{W},\mb{Y_1}}|}{|K_{\mb{X}|\mb{W},\mb{U},\mb{Y_1}}|}.
\end{align}

To get an explicit expression for the upper bounds we need to specify the auxiliary random variables more explicitly. The next three propositions give an explicit upper bound on the $R^{\mse}(\mb{D})$, $R^{\si}(\mb{D})$,  and properties of $(\mb{W},\mb{U},\mb{V})$ in the optimizing set $C_u(\mb{D})$ for trace distortion constraints.

\begin{proposition}
 \label{prop:ach_mse}
 $R^{\mse}(\mb{D})$ is upper bounded by 
 \begin{align*}
R^{\mse}_u(\mb{D}) =  \max \{ R_1^{\mse}(\mb{D}), R_2^{\mse}(\mb{D})\}
 \end{align*}
 where $R_1^{\mse}(\mb{D})$ and $R_2^{\mse}(\mb{D})$ are as in (\ref{eq:MSE_1}) and (\ref{eq:MSE_2}) respectively.
 \end{proposition}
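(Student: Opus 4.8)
The plan is to apply Theorem~\ref{thm:achievable} with an explicit jointly Gaussian triple $(\mb{W},\mb{U},\mb{V})$ that realizes the ``region'' picture of Section~\ref{sec:main_results}. Because $K_{\mb{X}|\mb{Y_i}}$ are diagonal and $K^{-1}_{\mb{X}|\mb{Y_2}}=K^{-1}_{\mb{X}|\mb{Y_1}}+K$ with $K$ block diagonal along the $l_1$/$l_2$ split, every conditional covariance of $\mb{X}$ given side information plus an additive-Gaussian auxiliary is again diagonal, so the two regions decouple. I would take $\mb{W}=\mb{X}+\mb{N}_W$, $\mb{U}=\mb{X}+\mb{N}_U$, $\mb{V}=\mb{X}+\mb{N}_V$ with $(\mb{N}_W,\mb{N}_U,\mb{N}_V)$ mutually independent and independent of $(\mb{X},\mb{Y_1},\mb{Y_2})$---this gives joint Gaussianity and the Markov chain $(\mb{W},\mb{U},\mb{V})\lra\mb{X}\lra(\mb{Y_1},\mb{Y_2})$ required by $C_u(\mb{D})$---with diagonal noise precisions chosen so that
\begin{align*}
K_{\mb{X}|\mb{W},\mb{Y_1}} &= \left(\begin{array}{cc}(D_1)_{l_1}&0\\0&[\wt{D}_2]_{l_2}\end{array}\right), &
K_{\mb{X}|\mb{W},\mb{U},\mb{Y_1}} &= \left(\begin{array}{cc}(D_1)_{l_1}&0\\0&\min\{[D_1]_{l_2},[\wt{D}_2]_{l_2}\}\end{array}\right),\\
K_{\mb{X}|\mb{W},\mb{Y_2}} &= \left(\begin{array}{cc}(\wh{D}_1)_{l_1}&0\\0&[D_2]_{l_2}\end{array}\right), &
K_{\mb{X}|\mb{W},\mb{V},\mb{Y_2}} &= \left(\begin{array}{cc}\min\{(\wh{D}_1)_{l_1},(D_2)_{l_1}\}&0\\0&[D_2]_{l_2}\end{array}\right).
\end{align*}
Intuitively, $\mb{W}$ alone already meets Decoder~$1$'s target in the $l_1$-region (where Decoder~$1$ is the weaker decoder) and Decoder~$2$'s target in the $l_2$-region; the private message $\mb{U}$ refines Decoder~$1$ only in the $l_2$-region and $\mb{V}$ refines Decoder~$2$ only in the $l_1$-region.

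I would then check admissibility. By the Gaussian information-combining identity $K^{-1}_{\mb{X}|\mb{W},\mb{Y_i}}=K^{-1}_{\mb{X}|\mb{Y_i}}+K^{-1}_{\mb{N}_W}$ (valid since $\mb{W}\perp\mb{Y_i}\,|\,\mb{X}$), and its analogue when $\mb{U}$ or $\mb{V}$ is appended, it suffices that the required precision increments are positive semidefinite. For $\mb{N}_W$ this reduces to $((D_1)_{l_1})^{-1}\succeq(K^{-1}_{\mb{X}|\mb{Y_1}})_{l_1}$ and $([\wt{D}_2]_{l_2})^{-1}\succeq[K^{-1}_{\mb{X}|\mb{Y_1}}]_{l_2}$, which follow from $D_1\preceq K_{\mb{X}|\mb{Y_1}}$ and from $\wt{D}_2^{-1}=D_2^{-1}-K\succeq K^{-1}_{\mb{X}|\mb{Y_2}}-K=K^{-1}_{\mb{X}|\mb{Y_1}}$, using (\ref{MSE_distcons}) and (\ref{eq:relation}); for $\mb{N}_U$ and $\mb{N}_V$ it is immediate because $\min\{\cdot,\cdot\}$ is dominated by each of its arguments. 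Coordinate blocks on which an increment vanishes---so the auxiliary is uninformative there---are harmless and are handled by a matrix-gain test channel supported on the relevant subspace (or a limiting argument); no block is forced to zero, since $D_1,D_2\succ 0$ makes all four target covariances positive definite. Applying the same identity once more with $K^{-1}_{\mb{X}|\mb{Y_2}}=K^{-1}_{\mb{X}|\mb{Y_1}}+K$ and the definitions $\wh{D}_1^{-1}=D_1^{-1}+K$, $\wt{D}_2^{-1}=D_2^{-1}-K$ confirms that the displayed $K_{\mb{X}|\mb{W},\mb{Y_2}}$, $K_{\mb{X}|\mb{W},\mb{U},\mb{Y_1}}$, and $K_{\mb{X}|\mb{W},\mb{V},\mb{Y_2}}$ come out exactly as claimed. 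Finally, the distortion constraints of Theorem~\ref{thm:achievable} with $\Gamma_i(K)=(K)_{diag}$ hold trivially: $K_{\mb{X}|\mb{W},\mb{U},\mb{Y_1}}$ is diagonal with $l_1$-block $(D_1)_{l_1}$ and $l_2$-block $\preceq[D_1]_{l_2}$, so $(K_{\mb{X}|\mb{W},\mb{U},\mb{Y_1}})_{diag}\preceq D_1$, and likewise $(K_{\mb{X}|\mb{W},\mb{V},\mb{Y_2}})_{diag}\preceq D_2$.

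The last step is routine: substitute these four conditional covariances into the closed forms (\ref{eq:ub1})--(\ref{eq:ub2}). Since $|K_{\mb{X}|\mb{W},\mb{U},\mb{Y_1}}|=|(D_1)_{l_1}|\,|\min\{[D_1]_{l_2},[\wt{D}_2]_{l_2}\}|$, $|K_{\mb{X}|\mb{W},\mb{Y_2}}|=|(\wh{D}_1)_{l_1}|\,|[D_2]_{l_2}|$, $|K_{\mb{X}|\mb{W},\mb{V},\mb{Y_2}}|=|\min\{(\wh{D}_1)_{l_1},(D_2)_{l_1}\}|\,|[D_2]_{l_2}|$, and $|K_{\mb{X}|\mb{W},\mb{Y_1}}|=|(D_1)_{l_1}|\,|[\wt{D}_2]_{l_2}|$, the $|[D_2]_{l_2}|$ and $|(D_1)_{l_1}|$ factors cancel in pairs and $R_1$ collapses to $R^{\mse}_1(\mb{D})$ of (\ref{eq:MSE_1}) while $R_2$ collapses to $R^{\mse}_2(\mb{D})$ of (\ref{eq:MSE_2}). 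As the chosen $(\mb{W},\mb{U},\mb{V})$ lies in $C_u(\mb{D})$, Theorem~\ref{thm:achievable} yields $R^{\mse}(\mb{D})\le R_{ach}(\mb{D})\le\max\{R^{\mse}_1(\mb{D}),R^{\mse}_2(\mb{D})\}$, which is the claim. I expect the only real friction to be in the middle step---making the construction of $(\mb{W},\mb{U},\mb{V})$ fully rigorous: verifying the positive-semidefiniteness of the precision increments, disposing cleanly of the blocks on which $\mb{U}$ or $\mb{V}$ carries no information, and keeping the conditioning bookkeeping (that $\mb{U}$ is active only in the $l_2$-region, $\mb{V}$ only in the $l_1$-region, and that both compose correctly with $\mb{W}$) consistent.
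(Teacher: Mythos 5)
Your proposal is correct and follows essentially the same route as the paper: you select the identical auxiliary triple (the same four target conditional covariances $K_{\mb{X}|\mb{W},\mb{Y_1}}$, $K_{\mb{X}|\mb{W},\mb{Y_2}}$, $K_{\mb{X}|\mb{W},\mb{U},\mb{Y_1}}$, $K_{\mb{X}|\mb{W},\mb{V},\mb{Y_2}}$), verify feasibility via the same domination argument (the paper checks $G\preceq K_{\mb{X}|\mb{Y_2}}$, you check the equivalent PSD precision increment relative to $\mb{Y_1}$), and evaluate $R_1,R_2$ identically. The explicit additive test-channel realization is a presentational difference only.
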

 \begin{proof}
 We start the proof by showing that 
\begin{align*}
&G = \left( \begin{array}{c c} (\wh{D}_1)_{l_1} & \mf{0} \\ \mf{0} & [D_2]_{l_2} \end{array} \right),
\end{align*}
where $\wh{D}_1$ as in Theorem \ref{thm:MSEgeneralform}, is dominated by $K_{\mb{X}|\mb{Y_2}}$. 
Since $K_{\mb{X}|\mb{Y_2}}$ and $G$ are diagonal matrices and $D_2 \preceq K_{\mb{X}|\mb{Y_2}}$, it is enough to show that $(\wh{D}_1)_{l_1} \preceq (K_{\mb{X}|\mb{Y_2}})_{l_1}$. Note that $\wh{D}_1 = (D^{-1}_1 + K)^{-1} \preceq K_{\mb{X}|\mb{Y_2}}$ since $D_1 \preceq K_{\mb{X}|\mb{Y_1}}$. Thus, $(\wh{D}_1)_{l_1} \preceq (K_{\mb{X}|\mb{Y_2}})_{l_1}$ and $G \preceq K_{\mb{X}|\mb{Y_2}}$. Then we can select $\mb{W}$ such that it is jointly Gaussian with $\mb{X}$ and $K_{\mb{X}|\mb{W}, \mb{Y_2}} = G$. This implies 
\begin{align*}
K_{\mb{X}|\mb{W},\mb{Y_1}} &=(K^{-1}_{\mb{X}|\mb{W},\mb{Y_2}} - K)^{-1} 
 \\
  &=(G^{-1} - K)^{-1} 
 \\
&=\left( \begin{array}{c c} (D_1)_{l_1} & \mf{0} \\ \mf{0} & [\wt{D}_2]_{l_2} \end{array} \right),
\end{align*}
where $\wt{D}_2$ is as in Theorem \ref{thm:MSEgeneralform}.

Lastly, we select $\mb{U}$ and $\mb{V}$ jointly Gaussian with $\mb{X}$ and $\mb{W}$ such that
\begin{align*}
&K_{\mb{X}|\mb{W},\mb{V},\mb{Y_2}} =  \left( \begin{array}{c c} \min\{(\wh{D}_1)_{l_1},({D}_2)_{l_1} \}& \mf{0} \\ \mf{0} & [D_2]_{l_2} \end{array} \right), \\
&K_{\mb{X}|\mb{W},\mb{U},\mb{Y_1}} =  \left( \begin{array}{c c} ({D}_1)_{l_1}& \mf{0} \\ \mf{0} & \min\{[{D}_1]_{l_2}, [\wt{D}_2]_{l_2} \} \end{array} \right),
\end{align*} 
satisfy the distortion constraints. Evaluating  $R_1$ and $R_2$ for this choice of $(\mb{W},\mb{U},\mb{V})$ gives us  $R_1^{\mse}(\mb{D})$ and $R_2^{\mse}(\mb{D})$.
 \end{proof}
 
 From the selection of the ``common" and ``private" messages, we can make the following observation. The
``common" message is used to hit the distortion constraint of each decoder with equality over the region in which it is ``weaker." We shall apply this strategy in all three problems, in fact.  Note that each decoder may undershoot its distortion constraint over the region in which it is ``stronger" depending on $D_1$, $D_2$ and $K$. Now we provide the following proposition which gives an explicit upper bound on $R^{\si}(\mb{D})$.
 
 \begin{proposition}
 \label{prop:ach_SI}
 $R^{\si}(\mb{D})$ is upper bounded by 
 \begin{align*}
R^{\si}_u(\mb{D}) =  \max \{ R_1^{\si}(\mb{D}), R_2^{\si}(\mb{D})\}
 \end{align*}
 where $R_1^{\si}(\mb{D})$ and $R_2^{\si}(\mb{D})$ are as in (\ref{eq:SI_1}) and (\ref{eq:SI_2}) respectively.
 \end{proposition}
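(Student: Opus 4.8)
The plan is to run the argument of Proposition~\ref{prop:ach_mse} essentially verbatim --- note that $R^{\si}_1$, $R^{\si}_2$ in (\ref{eq:SI_1})--(\ref{eq:SI_2}) have exactly the same form as $R^{\mse}_1$, $R^{\mse}_2$ --- with the same auxiliary random variables, and then to check that the single place where the proof of Proposition~\ref{prop:ach_mse} invoked the diagonality of $K_{\mb{X}|\mb{Y_1}}$ and $K_{\mb{X}|\mb{Y_2}}$ can be repaired here, where (after the reduction in Section~\ref{sec:main_results}) only $K^{-1}_{\mb{X}|\mb{Y_2}} - K^{-1}_{\mb{X}|\mb{Y_1}} = K$ is block diagonal. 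Concretely, I would again pick $\mb{W}$ jointly Gaussian with $(\mb{X},\mb{Y_1},\mb{Y_2})$ with $\mb{W} \lra \mb{X} \lra (\mb{Y_1},\mb{Y_2})$ and
$$
K_{\mb{X}|\mb{W},\mb{Y_2}} = G := \begin{pmatrix} (\wh{D}_1)_{l_1} & 0 \\ 0 & [D_2]_{l_2}\end{pmatrix},
$$
so that $K_{\mb{X}|\mb{W},\mb{Y_1}} = (G^{-1}-K)^{-1}$ is block diagonal with blocks $(D_1)_{l_1}$ and $[\wt{D}_2]_{l_2}$, and then choose $\mb{U}$, $\mb{V}$ refining $\mb{W}$ exactly as in Proposition~\ref{prop:ach_mse}.

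The only nonroutine point is the feasibility of $\mb{W}$, i.e.\ that $0 \preceq G \preceq K_{\mb{X}|\mb{Y_2}}$; in Proposition~\ref{prop:ach_mse} this was immediate because $G$ and $K_{\mb{X}|\mb{Y_2}}$ were simultaneously diagonal. Here I would argue as follows. Since $D_1$, $D_2$ are scaled identities and $K$ is block diagonal, $\wh{D}_1 = (D_1^{-1}+K)^{-1}$ is block diagonal, so $G^{-1}$ is block diagonal with blocks $(D_1^{-1})_{l_1}+A$ and $[D_2^{-1}]_{l_2}$. Moreover $K^{-1}_{\mb{X}|\mb{Y_2}}$ is dominated both by $D_2^{-1}$ (from $D_2\preceq K_{\mb{X}|\mb{Y_2}}$) and by $D_1^{-1}+K$ (from $D_1\preceq K_{\mb{X}|\mb{Y_1}}$ and $K^{-1}_{\mb{X}|\mb{Y_1}} = K^{-1}_{\mb{X}|\mb{Y_2}}-K$). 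Now $G^{-1}$ agrees with $D_1^{-1}+K$ on the upper-left block and with $D_2^{-1}$ on the lower-right block, so $G^{-1}-K^{-1}_{\mb{X}|\mb{Y_2}}$ equals $(D_1^{-1}+K-K^{-1}_{\mb{X}|\mb{Y_2}})$ plus $[D_2^{-1}-D_1^{-1}]_{l_2}-B$ in the lower-right block, and also equals $(D_2^{-1}-K^{-1}_{\mb{X}|\mb{Y_2}})$ plus $(D_1^{-1}-D_2^{-1})_{l_1}+A$ in the upper-left block. Since $A\succeq 0$ and $B\prec 0$, one of these two added blocks is positive semidefinite --- the lower-right one when $D_1^{-1}\preceq D_2^{-1}$, the upper-left one when $D_1^{-1}\succeq D_2^{-1}$, and $D_1^{-1}$, $D_2^{-1}$ are comparable because they are scaled identities --- so $G^{-1}\succeq K^{-1}_{\mb{X}|\mb{Y_2}}$ in either case.

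Once $\mb{W}$ is fixed this way, $K_{\mb{X}|\mb{W},\mb{Y_1}}$ and $K_{\mb{X}|\mb{W},\mb{Y_2}}$ are block diagonal, so the feasibility of the refinements $\mb{U}$, $\mb{V}$ and the verification that they meet $K_{\mb{X}|\mb{W},\mb{U},\mb{Y_1}}\preceq D_1$ and $K_{\mb{X}|\mb{W},\mb{V},\mb{Y_2}}\preceq D_2$ both reduce to the same blockwise comparisons as in the proof of Proposition~\ref{prop:ach_mse}, again using that $D_1$, $D_2$ are scaled identities. Finally, all the relevant conditional covariances being block diagonal, their determinants factor over the $l_1$- and $l_2$-blocks, and substituting into (\ref{eq:ub1})--(\ref{eq:ub2}) reproduces precisely $R^{\si}_1(\mb{D})$ and $R^{\si}_2(\mb{D})$. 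I expect the feasibility inequality $G\preceq K_{\mb{X}|\mb{Y_2}}$ to be the step requiring the most care, since it is the only place that genuinely uses scaled-identity (rather than merely diagonal) distortion constraints; everything else is bookkeeping identical to the MSE case.
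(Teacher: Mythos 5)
Your proposal is correct and follows essentially the same route as the paper: the same choice of $\mb{W}$ with $K_{\mb{X}|\mb{W},\mb{Y_2}} = G$, the same refinements $\mb{U},\mb{V}$, and the same two-case feasibility argument ($D_1 \preceq D_2$ versus $D_1 \succeq D_2$, using comparability of scaled identities) — your decomposition of $G^{-1}-K^{-1}_{\mb{X}|\mb{Y_2}}$ is just a repackaging of the paper's direct comparisons of $G^{-1}$ with $D_2^{-1}$ and $\wh{D}_1^{-1}$. You also correctly identify the feasibility of $\mb{W}$ as the one step that genuinely differs from the MSE case.
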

 \begin{proof}
We follow similar approach in the proof of Proposition  \ref{prop:ach_mse}. We take a particular feasible choice of $(\mb{W},\mb{U},\mb{V})$ in $R_{ach}(\mb{D})$ to get an explicit upper bound on the rate-distortion function, $R^{\si}(\mb{D})$. We would like to choose $\mb{W}$ jointly Gaussian with $\mb{X}$ so that $K_{\mb{X}|\mb{W},\mb{Y_2}}$ is equal to 
\begin{align*}
&G = \left( \begin{array}{c c} (\wh{D}_1)_{l_1} & \mf{0} \\ \mf{0} & [D_2]_{l_2} \end{array} \right).
\end{align*}

This is possible if and only if $G$ is dominated by 
$K_{\mb{X}|\mb{Y_2}}$.  To see that this is the case, note that
$K^{-1}_{\mb{X}|\mb{Y_1}} \preceq D^{-1}_1 $ so we 
have $ K^{-1}_{\mb{X}|\mb{Y_1}} + K \preceq D^{-1}_1 + K$, where $K$ is in (\ref{eq:relation}). This implies that
$K^{-1}_{\mb{X}|\mb{Y_2}} \preceq  \wh{D}^{-1}_1$ since
$D^{-1}_1 + K =\wh{D}^{-1}_1$.

Now since $D_1$ and $D_2$ are scaled identity matrices, we must have
either $D_1 \preceq D_2$ or 
$D_1 \succeq D_2$. We shall show that we have 
 $K^{-1}_{\mb{X}|\mb{Y_2}} \preceq G^{-1}$ in both cases.

\textit{Case 1}: $D_1 \preceq D_2$.

Note that $(\wh{D}^{-1}_1)_{l_1} \succeq (D^{-1}_1)_{l_1}\succeq ({D}^{-1}_2)_{l_1}$. Then
\begin{align*}
G^{-1} - D^{-1}_2  
&=\left( \begin{array}{c c} (\wh{D}^{-1}_1)_{l_1} - ({D}^{-1}_2)_{l_1}& \mf{0} \\ \mf{0} & \mf{0} \end{array} \right)  \\
& \succeq 0.
\end{align*}
So $G^{-1} \succeq D^{-1}_2 \succeq K^{-1}_{\mb{X}|\mb{Y_2}}$.

\textit{Case 2}: $D_1 \succeq D_2$.

Note that $[\wh{D}^{-1}_1]_{l_2} \preceq  [D^{-1}_1]_{l_2} \preceq [{D}^{-1}_2]_{l_2}$. 
Then
\begin{align*}
G^{-1}-\wh{D}^{-1}_1  
&=\left( \begin{array}{c c} \mf{0}& \mf{0} \\ \mf{0} & [D^{-1}_2]_{l_2} - [\wh{D}^{-1}_1]_{l_2} \end{array} \right)  \\
& \succeq 0.
\end{align*}
So $G^{-1} \succeq \wh{D}^{-1}_1 \succeq K^{-1}_{\mb{X}|\mb{Y_2}}$.
This shows that $K_{\mb{X}|\mb{Y_2}} \succeq G$ as desired. Hence we can select $K_{\mb{X}|\mb{W},\mb{Y_2}} = G$.

Now for any $\mb{W}$ that is jointly Gaussian with $\mb{X}$ and has the
specified $K_{\mb{X}|\mb{W},\mb{Y_2}}$, we will have
\begin{align*}
K_{\mb{X}|\mb{W},\mb{Y_1}} &=(K^{-1}_{\mb{X}|\mb{W},\mb{Y_2}} - K)^{-1}  \\
 &=\left(\left( \begin{array}{c c} (\wh{D}^{-1}_1)_{l_1} & \mf{0} \\ \mf{0} & [D^{-1}_2]_{l_2} \end{array} \right) - \left( \begin{array}{c c} A & \mf{0} \\ \mf{0} & B \end{array} \right) \right)^{-1}\\
&=\left( \begin{array}{c c} (D_1)_{l_1} & \mf{0} \\ \mf{0} & [\wt{D}_2]_{l_2} \end{array} \right).
\end{align*} 

Then select $\mb{U}$ and $\mb{V}$ jointly Gaussian with $\mb{X}$ and $\mb{W}$ so that
\begin{align*}
&K_{\mb{X}|\mb{W},\mb{V},\mb{Y_2}} =  \left( \begin{array}{c c} \min\{(\wh{D}_1)_{l_1},({D}_2)_{l_1} \}& \mf{0} \\ \mf{0} & [D_2]_{l_2} \end{array} \right), \\
&K_{\mb{X}|\mb{W},\mb{U},\mb{Y_1}} =  \left( \begin{array}{c c} ({D}_1)_{l_1}& \mf{0} \\ \mf{0} & \min\{[{D}_1]_{l_2}, [\wt{D}_2]_{l_2} \} \end{array} \right).
\end{align*} 

Note that $K_{\mb{X}|\mb{W},\mb{U},\mb{Y_1}} \preceq D_1$ and $K_{\mb{X}|\mb{W},\mb{V},\mb{Y_2}} \preceq D_2$ as required.
Evaluating  $R_1$ and $R_2$ for this choice of $(\mb{W},\mb{U},\mb{V})$ gives us  $R_1^{\si}(\mb{D})$ and $R_2^{\si}(\mb{D})$.
\end{proof}

As in the achievable scheme for $\RDmse$ in Proposition \ref{prop:ach_mse}, each decoder hits its own distortion constraint with equality on the region where it is ``weaker" while each may undershoot its distortion constraint where it is ``stronger" depending on $D_1$, $D_2$ and $K$. 
Finally, we provide the following proposition giving additional constraints on the optimizers in the optimization set  $C_u(\mb{D})$ when we have trace distortion constraints.
 \begin{proposition}
 \label{prop:ach_trace}
 $R^{\tr}(\mb{D})$ is upper bounded by 
 \begin{align}
 \label{eq:ach_trace}
 R^{\tr}_u(\mb{D}) =\min_{C^{\tr}(\mb{D})}\max \{ R_1^{\tr}(\mb{D}), R_2^{\tr}(\mb{D})\}
 \end{align}
 where $R_1^{\tr}(\mb{D})$, $R_2^{\tr}(\mb{D})$ and $C^{\tr}(\mb{D})$ as in Theorem \ref{thm:trace}.
 \end{proposition}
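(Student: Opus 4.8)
The plan is to obtain Proposition~\ref{prop:ach_trace} as a corollary of the general achievability result Theorem~\ref{thm:achievable}, specialized to $\Gamma_i(\cdot)=\Tr(\cdot)$. Concretely, I will show that $C^{\tr}(\mb{D})\subseteq C_u(\mb{D})$ for this distortion measure and that, on the smaller set $C^{\tr}(\mb{D})$, the two rates $R_1,R_2$ of (\ref{eq:ub1})--(\ref{eq:ub2}) simplify to $R_1^{\tr},R_2^{\tr}$ of (\ref{eq:TR_1})--(\ref{eq:TR_2}). Then for every $(\mb{W},\mb{U},\mb{V})\in C^{\tr}(\mb{D})$ Theorem~\ref{thm:achievable} gives $R^{\tr}(\mb{D})\le\max\{R_1,R_2\}=\max\{R_1^{\tr},R_2^{\tr}\}$, and taking the infimum over $C^{\tr}(\mb{D})$ yields the claimed bound; a short compactness remark (continuity of the objective, together with closedness and boundedness of the relevant conditional covariances once the Gaussian auxiliaries are reduced to dimension at most $k$) upgrades the infimum to the minimum written in (\ref{eq:ach_trace}).

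The first ingredient is a pair of linear-algebra identities valid for any $\mb{W}$ jointly Gaussian with $(\mb{X},\mb{Y_1},\mb{Y_2})$ satisfying $\mb{W}\lra\mb{X}\lra(\mb{Y_1},\mb{Y_2})$. The Markov structure gives the precision-additivity relation $K^{-1}_{\mb{X}|\mb{W},\mb{Y_i}}=K^{-1}_{\mb{X}|\mb{Y_i}}+\bigl(K^{-1}_{\mb{X}|\mb{W}}-K^{-1}_{\mb{X}}\bigr)$ (the same identity already used in the proofs of Propositions~\ref{prop:ach_mse} and \ref{prop:ach_SI}), so subtracting the $i=1$ and $i=2$ versions and invoking (\ref{eq:relation}) gives $K^{-1}_{\mb{X}|\mb{W},\mb{Y_2}}-K^{-1}_{\mb{X}|\mb{W},\mb{Y_1}}=K$, with $K$ as in (\ref{relation}). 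Since (\ref{set1:c4}) makes the relevant conditional covariances diagonal, hence block diagonal, restricting this identity to the upper-left $l_1\times l_1$ and lower-right $l_2\times l_2$ blocks yields $(K_{\mb{X}|\mb{W},\mb{Y_2}})_{l_1}^{-1}=(K_{\mb{X}|\mb{W},\mb{Y_1}})_{l_1}^{-1}+A$ and $[K_{\mb{X}|\mb{W},\mb{Y_2}}]_{l_2}^{-1}=[K_{\mb{X}|\mb{W},\mb{Y_1}}]_{l_2}^{-1}+B$, whence, on taking determinants, $|(K_{\mb{X}|\mb{W},\mb{Y_2}})_{l_1}|=|(K_{\mb{X}|\mb{W},\mb{Y_1}})_{l_1}|\big/|I+A(K_{\mb{X}|\mb{W},\mb{Y_1}})_{l_1}|$ and $|[K_{\mb{X}|\mb{W},\mb{Y_1}}]_{l_2}|=|[K_{\mb{X}|\mb{W},\mb{Y_2}}]_{l_2}|\big/|I-B[K_{\mb{X}|\mb{W},\mb{Y_2}}]_{l_2}|$.

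The second ingredient uses the conditional-independence constraints (\ref{set1:c3}): because $\mb{U}\perp(\mb{X})_{l_1}\mid(\mb{W},\mb{Y_1})$ and everything is jointly Gaussian, further conditioning on $\mb{U}$ leaves the law of $(\mb{X})_{l_1}$ given $(\mb{W},\mb{Y_1})$ unchanged, so $(K_{\mb{X}|\mb{W},\mb{U},\mb{Y_1}})_{l_1}=(K_{\mb{X}|\mb{W},\mb{Y_1}})_{l_1}$, and symmetrically $[K_{\mb{X}|\mb{W},\mb{V},\mb{Y_2}}]_{l_2}=[K_{\mb{X}|\mb{W},\mb{Y_2}}]_{l_2}$. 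Combined with the trivial identity $\Tr(M)=\Tr((M)_{l_1})+\Tr([M]_{l_2})$, the constraints (\ref{set1:c5})--(\ref{set1:c6}) become exactly $\Tr(K_{\mb{X}|\mb{W},\mb{U},\mb{Y_1}})\le D_1$ and $\Tr(K_{\mb{X}|\mb{W},\mb{V},\mb{Y_2}})\le D_2$, i.e.\ the $\Gamma_i=\Tr$ distortion constraints of $C_u(\mb{D})$; together with (\ref{set1:c1})--(\ref{set1:c2}) this shows $C^{\tr}(\mb{D})\subseteq C_u(\mb{D})$. Moreover, diagonality lets each determinant appearing in (\ref{eq:ub1})--(\ref{eq:ub2}) factor across the two blocks; substituting the block equalities above cancels the factors $|[K_{\mb{X}|\mb{W},\mb{Y_2}}]_{l_2}|$ (respectively $|(K_{\mb{X}|\mb{W},\mb{Y_1}})_{l_1}|$), and then the two determinant identities from the first ingredient convert $R_1$ and $R_2$ into precisely $R_1^{\tr}$ and $R_2^{\tr}$, completing the reduction.

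I do not expect a genuine obstacle: the argument is a reduction to Theorem~\ref{thm:achievable} plus elementary Gaussian linear algebra. The only points that need care are (i) citing and applying the precision-additivity identity under the Markov chain and its restriction to the diagonal blocks of $K$, and (ii) the bookkeeping of which block-determinants cancel when simplifying $R_1$ and $R_2$; both are routine. The step worth stating cleanly is the conditional-independence-to-block-equality implication, since the same device is reused in the converse arguments of Section~\ref{sec:converse}.
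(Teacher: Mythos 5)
Your proposal is correct and follows essentially the same route as the paper: the paper's (one-sentence) proof simply observes that the conditions (\ref{set1:c3})--(\ref{set1:c4}) may be added to $C_u(\mb{D})$ in $R_{ach}(\mb{D})$, i.e., that restricting the feasible set of the achievable bound of Theorem~\ref{thm:achievable} still yields an upper bound, with the objective then simplifying to $R_1^{\tr}, R_2^{\tr}$. Your write-up supplies exactly the details the paper leaves implicit — the inclusion $C^{\tr}(\mb{D})\subseteq C_u(\mb{D})$ via the block equalities forced by (\ref{set1:c3}), and the determinant bookkeeping that converts (\ref{eq:ub1})--(\ref{eq:ub2}) into (\ref{eq:TR_1})--(\ref{eq:TR_2}) — and both checks are accurate.
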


\begin{proof}
Notice that we can include the conditions
\begin{align}
\label{cond_1}
& {\mb{U}} \perp (\mb{X})_{l_1}|({\mb{W}}, \mb{Y_1}), \ {\mb{V}} \perp [\mb{X}]_{l_2}|({\mb{W}},\mb{Y_2})
\\
\label{cond_2}
&K_{\mb{X}|\mb{W},\mb{Y_1}},K_{\mb{X}|\mb{W},\mb{Y_2}},K_{\mb{X}|\mb{W},\mb{U},\mb{Y_1}},K_{\mb{X}|\mb{W},\mb{V},\mb{Y_2}} \mbox{ diagonal} 
\end{align}
to $C_u(\mb{D})$ of $R_{ach}(\mb{D})$, which gives the result. 
\end{proof}


\section{Converse Results}
\label{sec:converse}
\subsection{Converse for $\RDmse$ and $\RDSI$ }

It turns out that the Enhancement-$\DSM$ is sufficient for
the $\RDmse$ and $\RDSI$ problems, so we will use that bound.
We shall select $\mb{Y}$ in the Enhancement-$\DSM$ with 
the properties stated in the following lemma.

\begin{lemma}
\label{lemma:Yspec}
Let the joint distribution of the source and side information pairs $(\mb{X}, \mb{Y_i})$, $i \in \{1,2\}$ be given.
We can find a random vector, $\mb{Y}$,  jointly Gaussian with $(\mb{X},\mb{Y_1},\mb{Y_2})$ such that 
\begin{align}
\mb{X} \lra \mb{Y} \lra (\mb{Y_1}, \mb{Y_2})
\end{align}
and
\begin{align}
\label{deg_1}
K^{-1}_{\mb{X}|\mb{Y}} &= K^{-1}_{\mb{X}|\mb{Y_1}} + \wh{K}
\\
\label{deg_2}
& = K^{-1}_{\mb{X}|\mb{Y_2}} + \wt{K} 
\end{align}
where $\wh{K} = \left( \begin{array}{c c} A & \mf{0} \\ \mf{0} & \mf{0}  \end{array} \right)$ and  $\wt{K} = \left( \begin{array}{c c}\mf{0} & \mf{0} \\ \mf{0} & -B  \end{array} \right)$.
\end{lemma}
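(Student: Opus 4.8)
The plan is to realize the prescribed conditional covariance by an explicit construction. Set $M := (K^{-1}_{\mb{X}|\mb{Y_1}} + \wh{K})^{-1}$, with $\wh{K}$ and $\wt{K}$ as in the statement. First I would observe that the two requirements (\ref{deg_1}) and (\ref{deg_2}) are one and the same: from the block forms, $\wh{K} - \wt{K} = K$, where $K$ is the matrix in (\ref{relation}), and by (\ref{eq:relation}) this equals $K^{-1}_{\mb{X}|\mb{Y_2}} - K^{-1}_{\mb{X}|\mb{Y_1}}$; hence $K^{-1}_{\mb{X}|\mb{Y_1}} + \wh{K} = K^{-1}_{\mb{X}|\mb{Y_2}} + \wt{K}$, so it is enough to produce a jointly Gaussian $\mb{Y}$ with $\mb{X} \lra \mb{Y} \lra (\mb{Y_1},\mb{Y_2})$ and $K_{\mb{X}|\mb{Y}} = M$. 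Because $A \succeq 0$ and $-B \succ 0$, both $\wh{K}$ and $\wt{K}$ are positive semidefinite, so $M^{-1} = K^{-1}_{\mb{X}|\mb{Y_1}} + \wh{K} \succeq K^{-1}_{\mb{X}|\mb{Y_1}} \succ 0$ and likewise $M^{-1} \succeq K^{-1}_{\mb{X}|\mb{Y_2}}$, giving $0 \prec M \preceq K_{\mb{X}|\mb{Y_1}}$, $M \preceq K_{\mb{X}|\mb{Y_2}}$, and (conditioning only reduces covariance) $M \preceq K_{\mb{X}}$. These inequalities are the realizability conditions the construction will use, and the lower one is exactly where the sign pattern $A \succeq 0 \succ B$ enters.

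Next I would note that $R(\mb{D})$, and the feasibility of a candidate $\mb{Y}$ in the Enhancement bound, depend on $(\mb{X},\mb{Y_1},\mb{Y_2})$ only through the pairwise laws of $(\mb{X},\mb{Y_1})$ and $(\mb{X},\mb{Y_2})$ (as is standard for the Heegard-Berger setting, the decoders never act jointly); consequently it suffices to exhibit \emph{some} jointly Gaussian $(\mb{X},\mb{Y_1},\mb{Y_2},\mb{Y})$ in which $(\mb{X},\mb{Y_i})$ has the prescribed law, $\mb{X} \lra \mb{Y} \lra (\mb{Y_1},\mb{Y_2})$ holds, and $K_{\mb{X}|\mb{Y}} = M$. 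To build it, let $\mb{Y} = \mb{X} + \mb{N}$ with $\mb{N}$ zero-mean Gaussian, independent of $\mb{X}$, of covariance $K_{\mb{N}} = (M^{-1} - K_{\mb{X}}^{-1})^{-1}$ (if $M^{-1} - K_{\mb{X}}^{-1}$ is singular, replace this by the corresponding lower-dimensional observation, or argue by a limiting perturbation; nothing essential changes). Then $K_{\mb{X}|\mb{Y}} = M$. Write $\mb{X} = \mb{S} + \mb{E}$ with $\mb{S} = E[\mb{X}|\mb{Y}]$, so $\mb{E}$ is independent of $\mb{Y}$, has covariance $M$, and $\mb{S}$ has covariance $K_{\mb{X}} - M \succeq 0$.

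Then I would produce $\mb{Y_1}$ and $\mb{Y_2}$ as degraded observations of $\mb{S}$: introduce fresh Gaussian noises $\mb{M_1},\mb{M_2}$ with $\mb{Y},\mb{M_1},\mb{M_2}$ mutually independent and, for $i \in \{1,2\}$, take $\mb{Y_i}$ to be a linear--Gaussian observation of $\mb{S}$ driven by $\mb{M_i}$ whose MMSE error covariance for $\mb{S}$ equals $R_i := K_{\mb{X}|\mb{Y_i}} - M$. Such an observation exists precisely because $0 \preceq R_i \preceq K_{\mb{X}} - M$, the two inequalities being $M \preceq K_{\mb{X}|\mb{Y_i}}$ and $K_{\mb{X}|\mb{Y_i}} \preceq K_{\mb{X}}$ from the first paragraph (in the nondegenerate case one may take $\mb{Y_i} = \mb{S} + \mb{M_i}$ with $K_{\mb{M_i}} = (R_i^{-1} - (K_{\mb{X}} - M)^{-1})^{-1}$; degenerate cases by projection or limits). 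Since $\mb{E}$ is independent of $(\mb{Y},\mb{M_i})$ and hence of $\mb{Y_i}$, one gets $E[\mb{X}|\mb{Y_i}] = E[\mb{S}|\mb{Y_i}]$ and the error covariance of $\mb{X}$ given $\mb{Y_i}$ is $R_i + M$, i.e.\ the prescribed $K_{\mb{X}|\mb{Y_i}}$; thus $(\mb{X},\mb{Y_i})$ has the required pairwise law. Finally, $(\mb{Y_1},\mb{Y_2})$ is a deterministic function of $(\mb{Y},\mb{M_1},\mb{M_2})$ with $(\mb{M_1},\mb{M_2})$ independent of $(\mb{X},\mb{Y})$, so conditionally on $\mb{Y}$ the pair $(\mb{Y_1},\mb{Y_2})$ is independent of $\mb{X}$, i.e.\ $\mb{X} \lra \mb{Y} \lra (\mb{Y_1},\mb{Y_2})$; relabelling completes the argument.

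I expect the only genuine subtlety to be the second step: recognizing that one may freely re-choose the conditional law of $(\mb{Y_1},\mb{Y_2})$ given $\mb{X}$ (everything of interest sees it only through the pairwise marginals) so that $(\mb{Y_1},\mb{Y_2})$ becomes a degradation of $\mb{Y}$, together with verifying the realizability inequalities $M \preceq K_{\mb{X}|\mb{Y_i}} \preceq K_{\mb{X}}$. The degenerate-rank cases in the choices of $\mb{N}$ and of the $\mb{Y_i}$ are routine and can be suppressed.
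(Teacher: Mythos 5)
Your argument is correct and rests on the same two pillars as the paper's proof: (i) realize the target precision by taking $\mb{Y}$ to be a Gaussian observation of $\mb{X}$ with $K^{-1}_{\mb{X}|\mb{Y}}=K^{-1}_{\mb{X}|\mb{Y_1}}+\wh{K}=K^{-1}_{\mb{X}|\mb{Y_2}}+\wt{K}$, and (ii) use $\wh{K}\succeq 0$, $\wt{K}\succeq 0$ to get $K_{\mb{X}|\mb{Y}}\preceq K_{\mb{X}|\mb{Y_i}}$, which, since only the pairwise laws $(\mb{X},\mb{Y_i})$ are prescribed, lets you couple each $\mb{Y_i}$ as a degradation of $\mb{Y}$ and then glue the two couplings. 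Where you diverge is in how step (ii) is discharged: the paper invokes the standard fact that covariance domination permits a physically degraded coupling of the \emph{given} $(\mb{X},\mb{Y_i})$ with $\mb{Y}$, whereas you rebuild $\mb{Y_1},\mb{Y_2}$ from scratch as independent noisy observations of $\mb{S}=E[\mb{X}|\mb{Y}]$ --- in effect proving that coupling fact explicitly. The one caveat is that your reconstructed $\mb{Y_i}$ agrees with the original only in the conditional covariance $K_{\mb{X}|\mb{Y_i}}$ (equivalently, in the joint law of $(\mb{X},E[\mb{X}|\mb{Y_i}])$), not literally in the pairwise law; to obtain the lemma exactly as stated you should regenerate the original $\mb{Y_i}$ from its sufficient statistic via the Markov chain $\mb{X}\lra E[\mb{X}|\mb{Y_i}]\lra \mb{Y_i}$, or note, as you do, that the surrounding lower-bound machinery is insensitive to this distinction. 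Two minor observations: the paper's parametrization $\mb{Y}=M\mb{X}+\mb{N}$ with $M^TM=K^{-1}_{\mb{X}|\mb{Y_1}}-K^{-1}_{\mb{X}}+\wh{K}$ and $K_{\mb{N}}=I$ sidesteps the singular noise-covariance case you must handle by projection or limits, and your opening reduction showing that (\ref{deg_1}) and (\ref{deg_2}) coincide because $\wh{K}-\wt{K}=K=K^{-1}_{\mb{X}|\mb{Y_2}}-K^{-1}_{\mb{X}|\mb{Y_1}}$ is correct and is left implicit in the paper.
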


\begin{proof}
Observe that if $(\mb{X},\mb{Y}, \mb{Y_i})$ can be coupled so that $\mb{X}\lra \mb{Y} \lra \mb{Y_i}$ holds for $i \in \{1,2\}$ and $(\mb{X},\mb{Y})$ has the same distribution under both couplings then it is possible to couple all four variables such that 
$\mb{X} \lra \mb{Y} \lra (\mb{Y_1}, \mb{Y_2})$ holds.

Next note that the matrix $K^{-1}_{\mb{X}|\mb{Y_1}} - K^{-1}_{\mb{X}}+ \wh{K} = K^{-1}_{\mb{X}|\mb{Y_2}} - K^{-1}_{\mb{X}}+ \wt{K}$ is positive semidefinite. Thus, we can find a matrix $M$ such that $M^TM = K^{-1}_{\mb{X}|\mb{Y_1}} - K^{-1}_{\mb{X}}+ \wh{K}= K^{-1}_{\mb{X}|\mb{Y_2}} - K^{-1}_{\mb{X}}+ \wt{K}$. Then, let $\mb{N}$ be a Gaussian random vector, independent of $\mb{X}$, with covariance matrix $K_{\mb{N}} = I$ and let $\mb{Y} = M\mb{X} + \mb{N}$. Then, 
$K^{-1}_{\mb{X}|\mb{Y}} = K^{-1}_{\mb{X}} + M^TM = K^{-1}_{\mb{X}|\mb{Y_1}}+ \wh{K} = K^{-1}_{\mb{X}|\mb{Y_2}} + \wt{K}$. Since we have $K_{\mb{X}|\mb{Y}} \preceq K_{\mb{X}|\mb{Y_i}}$, $i \in \{1,2\}$, we can couple  $(\mb{X},\mb{Y}, \mb{Y_i})$ so that $\mb{X}\lra \mb{Y} \lra \mb{Y_i}$. 
\end{proof}

Let $\mb{Y}$ be selected as in Lemma \ref{lemma:Yspec}. 
By Lemma~\ref{lemma:gauss_ELB} we can  add the condition that $(\mb{W},\mb{U},\mb{V})$ is jointly Gaussian with  the source  and  side information  at decoders to optimization sets $\bar{C}_{l1}$ and  $\bar{C}_{l2}$ in the Enhanced-$\DSM$. Then we can write $R_{lo1}$ in (\ref{eq:ELB}) as
\begin{align*}
R_{lo1} = \frac{1}{2}\log\frac{|K_{\mb{X}|\mb{Y_1}}|}{|K_{\mb{X}|\mb{W},\mb{U},\mb{Y_1}}|}\frac{|K_{\mb{X}|\mb{W},\mb{U},\mb{Y}}|}{|K_{\mb{X}|\mb{W},\mb{U},\mb{V},\mb{Y}}|}.
\end{align*}

Likewise, $R_{lo2}$ in (\ref{eq:ELB}) can be written as
\begin{align*}
R_{lo2} = \frac{1}{2}\log\frac{|K_{\mb{X}|\mb{Y_2}}|}{|K_{\mb{X}|\mb{W},\mb{V},\mb{Y_2}}|}\frac{|K_{\mb{X}|\mb{W},\mb{V},\mb{Y}}|}{|K_{\mb{X}|\mb{W},\mb{U},\mb{V},\mb{Y}}|}.
\end{align*}

We can further write,
 \begin{align}
R_{lo1} &=\frac{1}{2}\log\frac{|K_{\mb{X}|\mb{Y_1}}|}{|K^{-1}_{\mb{X}|\mb{W},\mb{U},\mb{Y_1}} + \wh{K} |}\frac{|K^{-1}_{\mb{X}|\mb{W},\mb{U},\mb{Y_1}}|}{|K_{\mb{X}|\mb{W},\mb{U},\mb{V},\mb{Y}}|}
\notag \\
& = \frac{1}{2}\log\frac{|K_{\mb{X}|\mb{Y_1}}|}{|I + \wh{K}K_{\mb{X}|\mb{W},\mb{U},\mb{Y_1}}  |}\frac{1}{|K_{\mb{X}|\mb{W},\mb{U},\mb{V},\mb{Y}}|} 
\notag \\
\label{ineq:exp_exprELB}
&\ge \frac{1}{2}\log\frac{|K_{\mb{X}|\mb{Y_1}}|}{\prod^{l_1 + l_2} _{i=1}(1 + (\wh{K})_{ii}(K_{\mb{X}|\mb{W},\mb{U},\mb{Y_1}})_{ii})  } 
\frac{1}{\prod^{l_1 + l_2} _{i=1}(K_{\mb{X}|\mb{W},\mb{U},\mb{V},\mb{Y}})_{ii}}.
\end{align}
 
 Now we focus on $\RDmse$ where $K_{\mb{X}|\mb{Y_i}}$, $i \in \{1,2\}$ are diagonal matrices and $D_i$, $i \in \{1,2\}$ are as in (\ref{MSE_distcons}). Since $(\mb{W},\mb{U},\mb{V})$ is jointly Gaussian with $(\mb{X},\mb{Y_1},\mb{Y_2},\mb{Y})$, we can write $K^{-1}_{\mb{X}| \mb{W},\mb{U}, \mb{V} ,\mb{Y_2}} = K^{-1}_{\mb{X}| \mb{W},\mb{U}, \mb{V} ,\mb{Y}} - \wh{K}$, where $\wh{K}$ as in Lemma \ref{lemma:Yspec}.  Then we can write  $(K_{\mb{X}| \mb{W},\mb{U}, \mb{Y_1}})_{diag} \preceq D_1$  and $((K^{-1}_{\mb{X}| \mb{W},\mb{U}, \mb{V} ,\mb{Y}} - \wh{K})^{-1})_{diag} \preceq D_2$, the constraints  at  $\bar{C}_{l1}$, as $(K_{\mb{X}| \mb{W},\mb{U}, \mb{Y_1}})_{diag} \preceq D_1$  and $(K_{\mb{X}| \mb{W},\mb{U}, \mb{V} ,\mb{Y_2}})_{diag} \preceq D_2$.
 
 The following lemma will be useful for matching the distortion constraints in the achievable scheme and the Enhanced-$\DSM$.
 
 \begin{lemma}
\label{lemma:matrixineq}
Let $A \succeq 0$ be an $m \times m$ diagonal matrix, $M \succ 0$ be an $m \times m$ matrix and $M_{diag} $ denote $(M)_{diag}$.
Then $[(M_{diag})^{-1} + A]^{-1} \succeq ([M^{-1} + A]^{-1})_{diag}$.
\end{lemma}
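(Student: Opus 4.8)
The plan is to rewrite the desired inequality $[(M_{diag})^{-1} + A]^{-1} \succeq ([M^{-1} + A]^{-1})_{diag}$ in a form that isolates a single, cleaner matrix inequality that does not involve $A$ directly. First I would observe that both sides are diagonal matrices (the left-hand side is diagonal because $M_{diag}$ and $A$ are; the right-hand side is diagonal by the $(\cdot)_{diag}$ operation), so the claim reduces to comparing corresponding diagonal entries, i.e., to showing $\big([(M_{diag})^{-1} + A]^{-1}\big)_{ii} \ge \big([M^{-1} + A]^{-1}\big)_{ii}$ for each $i$. Equivalently, letting $e_i$ be the $i$-th standard basis vector, I want $e_i^T [(M_{diag})^{-1} + A]^{-1} e_i \ge e_i^T [M^{-1} + A]^{-1} e_i$.

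The key structural fact I would use is a monotonicity/variational principle for expressions of the form $e_i^T(P^{-1}+A)^{-1}e_i$ as a function of the PSD matrix $P$. Specifically, for $P \succ 0$ and $A \succeq 0$ one has the Schur-complement identity
\begin{align*}
e_i^T(P^{-1}+A)^{-1}e_i = \min_{z} \; \big( (e_i - Az)^T P (e_i - Az) + z^T A z \big),
\end{align*}
or an equivalent form, which exhibits $e_i^T(P^{-1}+A)^{-1}e_i$ as a pointwise infimum of quadratic forms that are \emph{monotone nondecreasing in $P$} in the Loewner order. Hence it suffices to prove the $A$-free inequality $M_{diag} \preceq M$... which is false in general, so instead I would apply the monotonicity in the other direction after a reduction: using the matrix inversion / Schur complement, $(P^{-1}+A)^{-1} = P - P(P + A^{-1})^{-1}P$ when $A \succ 0$ (and then handle general $A \succeq 0$ by continuity, replacing $A$ with $A + \epsilon I$ and letting $\epsilon \downarrow 0$, or by restricting to the range of $A$). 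Then the comparison becomes a comparison of $e_i^T P e_i - e_i^T P(P+A^{-1})^{-1} P e_i$ for $P = M$ versus $P = M_{diag}$; since $e_i^T M e_i = e_i^T M_{diag} e_i = (M)_{ii}$, what remains is to show $e_i^T M (M + A^{-1})^{-1} M e_i \le e_i^T M_{diag}(M_{diag} + A^{-1})^{-1} M_{diag} e_i$, and because $A$ (hence $A^{-1}$) is diagonal, the right-hand side is just $(M)_{ii}^2 / ((M)_{ii} + (A^{-1})_{ii})$, i.e. the scalar version.

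The cleanest route, which I would actually carry out, is to invoke the variational characterization directly:
\begin{align*}
e_i^T (M^{-1}+A)^{-1} e_i \;=\; \min_{x:\, x_i = 1} \; x^T M x \;+\; (\text{penalty involving } A),
\end{align*}
more precisely $e_i^T(M^{-1}+A)^{-1}e_i = \min_{x}\{ x^T(M^{-1}+A)^{-1}x + \text{const}\}$ is not quite it — rather I would use $e_i^T(M^{-1}+A)^{-1}e_i \le e_i^T N e_i$ whenever $N \succeq 0$ satisfies $N^{-1} \preceq M^{-1} + A$, and construct the competitor $N = [(M_{diag})^{-1}+A]^{-1}$; this requires $(M_{diag})^{-1} + A \succeq$ something, which does not directly work either. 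The honest main obstacle is therefore exactly this: $M_{diag}$ is \emph{not} comparable to $M$ in the Loewner order, so naive monotonicity fails and one must genuinely exploit that $A$ is diagonal. I expect the correct argument to go through the Cauchy–Binet / Fischer-type inequality or, most transparently, through the block $2\times 2$ matrix $\begin{pmatrix} M^{-1}+A & e_i \\ e_i^T & t\end{pmatrix} \succeq 0 \iff t \ge e_i^T(M^{-1}+A)^{-1}e_i$, rearranged via congruence to isolate the diagonal of $M^{-1}$ against the full $M^{-1}$, after which the diagonal-dominance-type inequality $(M^{-1})_{ii} \le ((M_{diag})^{-1})_{ii}$ — which \emph{is} true, being equivalent to $(M^{-1})_{ii}(M)_{ii} \ge 1$, a standard consequence of $M \succ 0$ — can be combined with the monotonicity of $t \mapsto (t + a)^{-1}$-type maps to close the gap. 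So the real plan: (1) reduce to diagonal entries; (2) reduce, via Schur complement, to comparing $\tfrac{(M)_{ii}}{1 + (A)_{ii}(M)_{ii}}$-shaped scalars against $e_i^T(M^{-1}+A)^{-1}e_i$; (3) apply the true fact $(M^{-1})_{ii} \le 1/(M)_{ii}$ together with operator monotonicity of $P \mapsto (P+A)^{-1}$ restricted to the relevant rank-one test. Step (3), ensuring the diagonal-only perturbation $A$ interacts correctly with the off-diagonal part of $M^{-1}$, is where the care is needed and is the step I expect to be the crux.
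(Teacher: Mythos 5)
Your opening reduction is fine (both sides are diagonal, so the claim is the per-coordinate inequality $e_i^T(M^{-1}+A)^{-1}e_i\le M_{ii}/(1+A_{ii}M_{ii})$), and the variational identity you write down, $e_i^T(P^{-1}+A)^{-1}e_i=\min_z\bigl((e_i-Az)^TP(e_i-Az)+z^TAz\bigr)$, is actually strong enough to finish --- but you abandon it, and the proposal never proves the per-coordinate inequality. That is the genuine gap. The monotonicity-in-$P$ route fails for the reason you yourself give ($M\not\preceq M_{diag}$). The Woodbury reduction in the middle states the remaining inequality backwards: since $(P^{-1}+A)^{-1}=P-P(P+A^{-1})^{-1}P$ and the leading terms $e_i^TMe_i=e_i^TM_{diag}e_i=M_{ii}$ cancel, what must be shown is $e_i^TM_{diag}(M_{diag}+A^{-1})^{-1}M_{diag}e_i\le e_i^TM(M+A^{-1})^{-1}Me_i$, not the reverse. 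And your step (3) rests on ``$(M^{-1})_{ii}\le 1/M_{ii}$,'' which is false: the standard consequence of $M\succ0$ is $(M^{-1})_{ii}\ge 1/M_{ii}$, i.e.\ $(M^{-1})_{diag}\succeq (M_{diag})^{-1}$. Even with the sign corrected, knowing $(M^{-1})_{ii}$ alone cannot control $e_i^T(M^{-1}+A)^{-1}e_i$, which depends on the off-diagonal entries of $M^{-1}$; you flag this interaction as ``the crux,'' and it is precisely what is left unproven.

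For what it is worth, your own identity closes the gap in two lines: the minimizer of the separable problem $\min_z\bigl((e_i-Az)^TM_{diag}(e_i-Az)+z^TAz\bigr)$ is $z^*=\frac{M_{ii}}{1+A_{ii}M_{ii}}\,e_i$, supported on coordinate $i$ alone, and plugging this same $z^*$ into the non-diagonal objective gives $\frac{1}{(1+A_{ii}M_{ii})^2}e_i^TMe_i+A_{ii}(z_i^*)^2=\frac{M_{ii}}{1+A_{ii}M_{ii}}$ because $e_i^TMe_i=M_{ii}$; hence the minimum over $z$ is at most $M_{ii}/(1+A_{ii}M_{ii})$, which is the claim (singular $A$ by continuity). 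The paper argues differently: for $A\succ0$ it writes $[M^{-1}+A]^{-1}=A^{-1}-A^{-1}[M+A^{-1}]^{-1}A^{-1}$, notes that $(\cdot)_{diag}$ commutes with conjugation by the diagonal matrix $A^{-1}$, and applies the Fischer-type inequality $(P^{-1})_{diag}\succeq(P_{diag})^{-1}$ to $P=M+A^{-1}$ --- the minus sign in the Woodbury formula flips that inequality into the desired direction --- then treats singular $A$ by a block decomposition. Either route works; yours would be more elementary, but as written it is not carried out.
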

\begin{proof}
See Appendix \ref{app:matrix_ineq}.
\end{proof}
 
From  $(K_{\mb{X}| \mb{W},\mb{U}, \mb{Y_1}})_{diag} \preceq D_1$  and $(K_{\mb{X}| \mb{W},\mb{U}, \mb{V} ,\mb{Y_2}})_{diag} \preceq D_2$, the constraints in $\bar{C}_{l1}$, and  by Lemma \ref{lemma:matrixineq} we can get 
\begin{align*}
&(K_{\mb{X}| \mb{W},\mb{U}, \mb{Y}})_{diag} \preceq (D^{-1}_1 + \wh{K})^{-1}
\\
& 
(K_{\mb{X}| \mb{W},\mb{U}, \mb{V} ,\mb{Y}})_{diag} \preceq (D^{-1}_2 + \wt{K})^{-1}
\end{align*}
 which implies 
\begin{align*}
&(K_{\mb{X}| \mb{W},\mb{U}, \mb{V} ,\mb{Y}})_{diag} \preceq \min((D^{-1}_1 + \wh{K})^{-1}, (D^{-1}_2 + \wt{K})^{-1}).
\end{align*} 

Let  $\wh{D}_1$ and $\wt{D}_2$ be as in Theorem \ref{thm:MSEgeneralform}. Note that
$ ((D^{-1}_1 + \wh{K})^{-1})_{l_1} = (\wh{D}_1)_{l_1} $ and  $[(D^{-1}_1 + \wh{K})^{-1}]_{l_2}  = [D_1]_{l_2}$ Also,  $((D^{-1}_2 + \wt{K})^{-1})_{l_1} = (D_2)_{l_1}$ and $[(D^{-1}_2 + \wt{K})^{-1}]_{l_2} = [\wt{D}_2]_{l_2} $. Then the  right hand side of (\ref{ineq:exp_exprELB}) is lower bounded by
  \begin{align*}
\frac{1}{2}\log\frac{|K_{\mb{X}|\mb{Y_1}}|}{|I + A(D_1)_{l_1}  |} 
\frac{1}{|\min((\wh{D}_1)_{l_1}, ({D_2})_{l_1}) |} 
\frac{1}{|\min([{D_1}]_{l_2}, [{\wt{D}_2}]_{l_2})|}.
\end{align*}

Since
\begin{align*}
\frac{1}{2}\log\frac{|K_{\mb{X}|\mb{Y_1}}|}{|I + A(D_1)_{l_1}  |} =
   \frac{1}{2} \log \frac{|K_{\mb{X}|\mb{Y_1}}| \cdot |(\wh{D}_1)_{l_1}|}{|(D_1)_{l_1}|},
\end{align*}
we have $R_{lo1} \ge R^{\mse}_1(\mb{D})$.
If we follow a similar procedure for $R_{lo2}$, we obtain
\begin{align*}
R_{lo2} 
&= \frac{1}{2}\log\frac{|K_{\mb{X}|\mb{Y_2}}|}{|I + \wt{K}K_{\mb{X}|\mb{W},\mb{V},\mb{Y_2}}  |}\frac{1}{|K_{\mb{X}|\mb{W},\mb{U},\mb{V},\mb{Y}}|} 
\\
&\ge \frac{1}{2}\log\frac{|K_{\mb{X}|\mb{Y_2}}|}{\prod^{l_1 + l_2} _{i=1}(1 + (\wt{K})_{ii}(K_{\mb{X}|\mb{W},\mb{V},\mb{Y_2}})_{ii})} 
 \frac{1}{\prod^{l_1 + l_2} _{i=1}(K_{\mb{X}|\mb{W},\mb{U},\mb{V},\mb{Y}})_{ii}}
  \\
&\ge \frac{1}{2}\log\frac{|K_{\mb{X}|\mb{Y_2}}|}{|I -B[D_2]_{l_2}|} 
 \frac{1}{|\min((\wh{D}_1)_{l_1}, ({D_2})_{l_1}) |} 
 \frac{1}{|\min([{D_1}]_{l_2}, [{\wt{D}_2}]_{l_2})|}.
\end{align*}

Since $\frac{1}{2} \log \frac{|K_{\mb{X}|\mb{Y_2}}|}{|I - B [D_2]_{l_2}|} 
    = \frac{1}{2} \log \frac{|K_{\mb{X}|\mb{Y_2}}| \cdot |[\wt{D}_2]_{l_2}|}{|[D_2]_{l_2}|}$, we have $R_{lo2} \ge R^{\mse}_2(\mb{D})$.
 Hence together with Proposition \ref{prop:ach_mse}, this proves Theorem \ref{thm:MSEgeneralform}. 

Note that for $\RDSI$ we can lower bound the right hand side of (\ref{ineq:exp_exprELB}) by
\begin{align*}
\frac{1}{2}\log\frac{|K_{\mb{X}|\mb{Y_1}}|}{|I + A(D_1)_{l_1}  |} 
\frac{1}{|\min((\wh{D}_1)_{l_1}, ({D_2})_{l_1}) |} 
\frac{1}{|\min([{D_1}]_{l_2}, [{\wt{D}_2}]_{l_2})|}, 
\end{align*}
where $D_i$, $i \in \{1,2 \}$, $\wh{D}_1$ and  $\wt{D}_2$ are as in Theorem \ref{thm:SI}. 
Since
\begin{align*}
\frac{1}{2}\log\frac{|K_{\mb{X}|\mb{Y_1}}|}{|I + A(D_1)_{l_1}  |} =
   \frac{1}{2} \log \frac{|K_{\mb{X}|\mb{Y_1}}| \cdot |(\wh{D}_1)_{l_1}|}{|(D_1)_{l_1}|},
\end{align*}
we have $R_{lo1} \ge R^{\si}_1(\mb{D})$.
If we follow a similar procedure for $R_{lo2}$, we obtain
\begin{align*}
R_{lo2} 
&= \frac{1}{2}\log\frac{|K_{\mb{X}|\mb{Y_2}}|}{|I + \wt{K}K_{\mb{X}|\mb{W},\mb{V},\mb{Y_2}}  |}\frac{1}{|K_{\mb{X}|\mb{W},\mb{U},\mb{V},\mb{Y}}|} 
\\
&\ge \frac{1}{2}\log\frac{|K_{\mb{X}|\mb{Y_2}}|}{\prod^{l_1 + l_2} _{i=1}(1 + (\wt{K})_{ii}(K_{\mb{X}|\mb{W},\mb{V},\mb{Y_2}})_{ii})} 
 \frac{1}{\prod^{l_1 + l_2} _{i=1}(K_{\mb{X}|\mb{W},\mb{U},\mb{V},\mb{Y}})_{ii}}
  \\
&\ge \frac{1}{2}\log\frac{|K_{\mb{X}|\mb{Y_2}}|}{|I -B[D_2]_{l_2}|} 
 \frac{1}{|\min((\wh{D}_1)_{l_1}, ({D_2})_{l_1}) |} 
 \frac{1}{|\min([{D_1}]_{l_2}, [{\wt{D}_2}]_{l_2})|}.
\end{align*}

Since $\frac{1}{2} \log \frac{|K_{\mb{X}|\mb{Y_2}}|}{|I - B [D_2]_{l_2}|} 
    = \frac{1}{2} \log \frac{|K_{\mb{X}|\mb{Y_2}}| \cdot |[\wt{D}_2]_{l_2}|}{|[D_2]_{l_2}|}$, we have $\bar{R}_{lo2} \ge R^{\si}_2(\mb{D})$.
 Hence together with Proposition \ref{prop:ach_SI}, this proves Theorem \ref{thm:SI}.
 
\subsection{ Converse for $\RDTr$}

For $\RDTr$, we utilize the $\mLB$. Similar to the converse of $\RDmse$ and $\RDSI$, 
let $\mb{Y}$ in $\mLB$ be selected as in Lemma \ref{lemma:Yspec}. Then, by Lemma \ref{lemma:hatZ} in Appendix \ref{app:ELB} we can create a  $\mb{\wh{Y}_i}$, $i \in \{1,2\}$ so that $(\mb{X}, \mb{Y}, \mb{Y_1}, \mb{\wh{Y}_i})$ is jointly Gaussian, $\mb{\wh{Y}_i} \lra  \mb{X}\lra \mb{Y_i}$ and $E[\mb{X} | \mb{Y_i}, \mb{\wh{Y}_i}] = E[\mb{X} | \mb{Y_i},\mb{Y}]$ almost surely.
Since $\mb{\wh{Y}_i} \lra  \mb{X}\lra \mb{Y_i}$, we can write 
\begin{align*}
\wh{\mb{Y}}_\mb{i} = A_{\wh{\mb{Y}}_\mb{i}}\mb{X} +\mb{N}_{\wh{\mb{Y}}_\mb{i}}, i \in \{1,2\},
\end{align*}
where $\mb{N}_{\wh{\mb{Y}}_\mb{i}}$ is independent of $\mb{X}$ and $\mb{Y_i}$.

Then, 
\begin{align}
\label{ineq:new}
&K^{-1}_{\mb{X}|\wh{\mb{Y}}_\mb{i},\mb{Y_i}} = K^{-1}_{\mb{X}|\mb{Y_i}} + A^T_{\wh{\mb{Y}}_\mb{i}}K^{-1}_{\mb{N}_{\wh{\mb{Y}}_\mb{i}}}A_{\wh{\mb{Y}}_\mb{i}}.
\end{align}

Also, since $E[\mb{X} | \mb{Y_i}, \mb{\wh{Y}_i}] = E[\mb{X} | \mb{Y_i},\mb{Y}]$ almost surely 
$K^{-1}_{\mb{X}|\mb{Y}}= K^{-1}_{\mb{X}|\mb{Y},\mb{Y_i}} = K^{-1}_{\mb{X}|\wh{\mb{Y}}_\mb{i},\mb{Y_i}}$.
Then, from (\ref{ineq:new}),  $K^{-1}_{\mb{X}|\mb{Y}}- K^{-1}_{\mb{X}|\mb{Y_1}} = \wh{K}$ and  $K^{-1}_{\mb{X}|\mb{Y}}- K^{-1}_{\mb{X}|\mb{Y_2}} = \wt{K}$, 
\begin{align}
\label{eq:hat_Y1}
A^T_{\wh{\mb{Y}}_\mb{1}}K^{-1}_{\mb{N}_{\wh{\mb{Y}}_\mb{1}}}A_{\wh{\mb{Y}}_\mb{1}} = \wh{K}.
\\
\label{eq:hat_Y2}
A^T_{\wh{\mb{Y}}_\mb{2}}K^{-1}_{\mb{N}_{\wh{\mb{Y}}_\mb{2}}}A_{\wh{\mb{Y}}_\mb{2}} = \wt{K}.
\end{align}

Now, we consider any feasible variable satisfying the constraints in the optimization of $R_{lo}(\mb{D})$  in Theorem \ref{thm:minmax}. We can rewrite $R_{lo1}$ in (\ref{rlo1}) as
\begin{align}
R_{lo1} &= I(\mb{X};\mb{W},\mb{U}|\mb{Y_1}) + I(\mb{X};\mb{V}|\mb{W},\mb{U},\mb{Y}) 
\notag \\
&= h(\mb{X}|\mb{Y_1}) -h(\mb{X}|\mb{W},\mb{U},\mb{Y_1}) + h(\mb{X}|\mb{W},\mb{U},\mb{Y}) 
- h(\mb{X}|\mb{W},\mb{U},\mb{V},\mb{Y})
\notag \\
& =  h(\mb{X}|\mb{Y_1}) -h(\mb{X}|\mb{W},\mb{U},\mb{Y_1}) + h(\mb{X}|\mb{W},\mb{U},\mb{Y_1},\mb{Y}) 
- h(\mb{X}|\mb{W},\mb{U},\mb{V},\mb{Y}) \mbox{ since }
\mb{X} \lra \mb{Y} \lra \mb{Y_1}.
\notag 
\end{align}

Since $\mb{X} \lra E[\mb{X}| \mb{Y_1},\mb{Y}] \lra (\mb{Y_1},\mb{Y})$ and  $\mb{X} \lra  (\mb{Y_1}, \mb{Y}) \lra E[\mb{X}| \mb{Y_1},\mb{Y}]$, $h(\mb{X}|\mb{W},\mb{U},\mb{Y_1},\mb{Y}) = h(\mb{X}|\mb{W},\mb{U},E[\mb{X}| \mb{Y}, \mb{Y_1}] )$.
Furthermore, we can write $h(\mb{X}|\mb{W},\mb{U},E[\mb{X}| \mb{Y}, \mb{Y_1}] )
= h(\mb{X}|\mb{W},\mb{U},E[\mb{X}| \mb{Y_1}, \mb{\wh{Y}_1}] )$, since $E[\mb{X} | \mb{Y_1}, \mb{\wh{Y}_1}] = E[\mb{X} | \mb{Y_1},\mb{Y}]$ almost surely. Then
we can write
\begin{align}
R_{lo1}& = h(\mb{X}|\mb{Y_1}) -h(\mb{X}|\mb{W},\mb{U},\mb{Y_1}) + h(\mb{X}|\mb{W},\mb{U},\mb{Y_1},\mb{\wh{Y}_1}) 
- h(\mb{X}|\mb{W},\mb{U},\mb{V},\mb{Y}) 
\notag
\\
 &= h(\mb{X}|\mb{Y_1}) -I(\mb{X};\wh{\mb{Y}}_\mb{1}|\mb{W},\mb{U},\mb{Y_1}) - h(\mb{X}|\mb{W},\mb{U},\mb{V},\mb{Y}) \notag
\\
&= h(\mb{X}|\mb{Y_1}) +h(\wh{\mb{Y}}_\mb{1}|\mb{X},\mb{Y_1})- h(\wh{\mb{Y}}_\mb{1}|\mb{W},\mb{U},\mb{Y_1}) \notag
- h(\mb{X}|\mb{W},\mb{U},\mb{V},\mb{Y}) 
\notag \\
\label{eq:lo1_det}
&\ge \frac{1}{2}\log\frac{|K_{\mb{X}|\mb{Y_1}}|}{|K_{\wh{\mb{Y}}_\mb{1}|\mb{W},\mb{U},\mb{Y_1}}|}\frac{|K_{\wh{\mb{Y}}_\mb{1}|\mb{X},\mb{Y_1}}|}{|K_{\mb{X}|\mb{W},\mb{U},\mb{V},\mb{Y}}|} 
\end{align}
 with equality if $(\mb{W},\mb{U},\mb{V})$ is Gaussian achieving the given covariance matrices.  
Now, let us focus on the ratio $\frac{|K_{\wh{\mb{Y}}_\mb{1}|\mb{X},\mb{Y_1}}|}{|K_{\wh{\mb{\mb{Y}}}_\mb{1}|\mb{W},\mb{U},\mb{Y_1}}|}$. Since $\wh{\mb{\mb{Y}}}_\mb{1} \lra \mb{X} \lra \mb{Y_1}$ we can write
\begin{align*}
&\frac{|K_{\wh{\mb{\mb{Y}}}_\mb{1}|\mb{X},\mb{Y_1}}|}{|K_{\wh{\mb{\mb{Y}}}_\mb{1}|\mb{W},\mb{U},\mb{Y_1}}|}=
\frac{|K_{\mb{N}_{\wh{\mb{Y}}_\mb{1}}}|}{|K_{\mb{N}_{\wh{\mb{Y}}_\mb{1}}} +A_{\wh{\mb{\mb{Y}}}_\mb{1}}K_{\mb{X}|\mb{W},\mb{U},\mb{Y_1}}A^{T}_{\wh{\mb{\mb{Y}}}_\mb{1}} |}.
\end{align*}
Since $K_{N_{\wh{\mb{\mb{Y}}}_\mb{1}}}$ is positive definite we can write it as $S_{\wh{\mb{\mb{Y}}}_\mb{1}}S_{\wh{\mb{\mb{Y}}}_\mb{1}}$ where $S_{\wh{\mb{\mb{Y}}}_\mb{1}}$ is an invertible matrix. Then we can write,
\begin{align*}
\frac{|K_{\wh{\mb{\mb{Y}}}_\mb{1}|\mb{X},\mb{Y_1}}|}{|K_{\wh{\mb{\mb{Y}}}_\mb{1}|\mb{W},\mb{U},\mb{Y_1}}|} & =\frac{1}{| I + S^{-1}_{\wh{\mb{\mb{Y}}}_\mb{1}}A_{\wh{\mb{\mb{Y}}}_\mb{1}}K_{\mb{X}|\mb{W},\mb{U},\mb{Y_1}}A^{T}_{\wh{\mb{\mb{Y}}}_\mb{1}}S^{-1}_{\wh{\mb{\mb{Y}}}_\mb{1}} |}
\\
&=\frac{1}{| I + K_{\mb{X}|\mb{W},\mb{U},\mb{Y_1}}A^{T}_{\wh{\mb{\mb{Y}}}_\mb{1}}S^{-1}_{\wh{\mb{\mb{Y}}}_\mb{1}}S^{-1}_{\wh{\mb{\mb{Y}}}_\mb{1}}A_{\wh{\mb{\mb{Y}}}_\mb{1}} |},
 \mbox{ by Sylvester's determinant identity}
\\
&=\frac{1}{| I + K_{\mb{X}|\mb{W},\mb{U},\mb{Y_1}}A^{T}_{\wh{\mb{\mb{Y}}}_\mb{1}}K^{-1}_{\mb{N}_{\wh{\mb{Y}}_\mb{1}}}A_{\wh{\mb{Y}}_\mb{1}} |}
\\
&=\frac{1}{| I + K_{\mb{X}|\mb{W},\mb{U},\mb{Y_1}}\left(\begin{array}{c c}  A&\mf{0}\\ \mf{0}&\mf{0}\end{array} \right) |},
\end{align*}
where the last equality is due to (\ref{eq:hat_Y1}).
Then we can write (\ref{eq:lo1_det}) as
\begin{align*}
R_{lo1} & \ge \frac{1}{2}\log\frac{|K_{\mb{X}|\mb{Y_1}}|}{| I + K_{\mb{X}|\mb{W},\mb{U},\mb{Y_1}}\left(\begin{array}{c c}  A&\mf{0}\\ \mf{0}&\mf{0}\end{array} \right) |}\frac{1}{|K_{\mb{X}|\mb{W},\mb{U},\mb{V},\mb{Y}}|} 
\\
&= \frac{1}{2}\log\frac{|K_{\mb{X}|\mb{Y_1}}|}{|  \left(\begin{array}{c c}  I +(K_{\mb{X}|\mb{W},\mb{U},\mb{Y_1}})_{l_1}A&\mf{0}\\ (K_{[\mb{X}]_{l_2}(\mb{X})_{l_1}|\mb{W},\mb{U},\mb{Y_1}})_{l_1}A & I \end{array} \right) |}
 \frac{1}{|K_{\mb{X}|\mb{W},\mb{U},\mb{V},\mb{Y}}|} 
\\
&= \frac{1}{2}\log\frac{|K_{\mb{X}|\mb{Y_1}}|}{| I + (K_{\mb{X}|\mb{W},\mb{U},\mb{Y_1}})_{l_1}A|}
\frac{1}{|K_{\mb{X}|\mb{W},\mb{U},\mb{V},\mb{Y}}|} 
\\
&\ge \frac{1}{2}\log\frac{|K_{\mb{X}|\mb{Y_1}}|}{\prod^{l_1}_{i=1} (1 + (K_{\mb{X}|\mb{W},\mb{U},\mb{Y_1}})_{ii}(A)_{ii})}
\frac{1}{\prod^{k}_{i=1}(K_{\mb{X}|\mb{W},\mb{U},\mb{V},\mb{Y}})_{ii}}, 
\mbox{ by Hadamard inequality,} 
\end{align*}
with equality if $(K_{\mb{X}|\mb{W},\mb{U},\mb{Y_1}})_{l_1}$ and $K_{\mb{X}|\mb{W},\mb{U},\mb{V},\mb{Y}}$ are diagonal matrices. Since $K_{\mb{X}|\mb{W},\mb{U},\mb{V},\mb{Y}} \preceq K_{\mb{X}|\mb{W},\mb{U},\mb{Y}}$ and  $K_{\mb{X}|\mb{W},\mb{U},\mb{V},\mb{Y}} \preceq K_{\mb{X}|\mb{W},\mb{V},\mb{Y}}$ imply $(K_{\mb{X}|\mb{W},\mb{U},\mb{V},\mb{Y}})_{ii} \le \min \{(K_{\mb{X}|\mb{W},\mb{U},\mb{Y}})_{ii}, (K_{\mb{X}|\mb{W},\mb{V},\mb{Y}})_{ii}\} $ for all $i \in [k]$, we can further write

\begin{align}
\label{eq:lowerb_11}
 &R_{lo1} \ge \frac{1}{2}\log\frac{|K_{\mb{X}|\mb{Y_1}}|}{\prod^{l_1}_{i=1} (1 + (K_{\mb{X}|\mb{W},\mb{U},\mb{Y_1}})_{ii}(A)_{ii})} 
 +\frac{1}{2}\log\frac{1}{\prod^{k}_{i=1}\min \{(K_{\mb{X}|\mb{W},\mb{U},\mb{Y}})_{ii}, (K_{\mb{X}|\mb{W},\mb{V},\mb{Y}})_{ii}\} }. 
\end{align}

By applying the same procedure as above for the $R_{lo2}$ we can get
\begin{align}
\label{eq:lowerb_21}
&R_{lo2} \ge \frac{1}{2}\log\frac{|K_{\mb{X}|\mb{Y_2}}|}{\prod^{l_2}_{i=1} (1 - ([K_{\mb{X}|\mb{W},\mb{V},\mb{Y_2}}]_{l_2})_{ii}(B)_{ii})}
 +\frac{1}{2}\log
\frac{1}{\prod^{k}_{i=1}\min \{(K_{\mb{X}|\mb{W},\mb{U},\mb{Y}})_{ii}, (K_{\mb{X}|\mb{W},\mb{V},\mb{Y}})_{ii}\} }.
\end{align} 

 We denote the right-hand sides of (\ref{eq:lowerb_11}) and (\ref{eq:lowerb_21}) as $\wh{R}_{lo1}$ and $\wh{R}_{lo2}$ respectively.
  The next proposition gives a tight lower bound to $R^{\tr}(\mb{d})$  by specifying the properties of the optimizers in $\mLB$.
 \begin{proposition}
 \label{prop:minmaxtrace}
The rate distortion function of $\RDTr$, $R^{\tr}(\mb{D})$, is lower bounded by 
  \begin{align}
  \label{eq:minimax_lbtr}
& \min_{C^{\tr}(\mb{D})}\max \{ R^{\tr}_1(\mb{D}), R^{\tr}_2(\mb{D})
 \} 
 \end{align}
 where $C^{\tr}(\mb{D})$, $ R^{\tr}_1(\mb{D})$ and $ R^{\tr}_2(\mb{D})$  are as in Theorem \ref{thm:trace}.
 \end{proposition}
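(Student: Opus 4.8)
The plan is to instantiate the Minimax bound of Theorem~\ref{thm:minmax} with the auxiliary vector $\mb{Y}$ supplied by Lemma~\ref{lemma:Yspec} and then to recognize the resulting bound as the optimization over $C^{\tr}(\mb{D})$. Fixing this one $\mb{Y}$ (so the supremum over $S$ in Theorem~\ref{thm:minmax} is dropped) gives $R^{\tr}(\mb{D}) \ge \inf_{C_l(\mb{D})} \max\{R_{lo1},R_{lo2}\}$, and the manipulations already carried out in this subsection --- which produced (\ref{eq:lowerb_11}) and (\ref{eq:lowerb_21}) --- further yield
\[
R^{\tr}(\mb{D}) \;\ge\; \inf_{C_l(\mb{D})} \max\{\wh{R}_{lo1},\wh{R}_{lo2}\},
\]
where $\wh{R}_{lo1},\wh{R}_{lo2}$ are the right-hand sides of (\ref{eq:lowerb_11}) and (\ref{eq:lowerb_21}). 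It therefore remains to prove
\[
\inf_{C_l(\mb{D})} \max\{\wh{R}_{lo1},\wh{R}_{lo2}\} \;\ge\; \min_{C^{\tr}(\mb{D})} \max\{R^{\tr}_1(\mb{D}),R^{\tr}_2(\mb{D})\},
\]
i.e.\ to associate to every feasible $(\mb{W},\mb{U},\mb{V}) \in C_l(\mb{D})$ a point of $C^{\tr}(\mb{D})$ of no larger objective.

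The heart of the argument is that construction. First I would observe that $\wh{R}_{lo1}$, $\wh{R}_{lo2}$ and the trace constraints of $C_l(\mb{D})$ depend on $(\mb{W},\mb{U},\mb{V})$ only through the diagonal entries of $K_{\mb{X}|\mb{W},\mb{U},\mb{Y_1}}$, $K_{\mb{X}|\mb{W},\mb{V},\mb{Y_2}}$, $K_{\mb{X}|\mb{W},\mb{U},\mb{Y}}$, $K_{\mb{X}|\mb{W},\mb{V},\mb{Y}}$, and that $\mb{X}\lra\mb{Y}\lra\mb{Y_i}$ together with $(\mb{W},\mb{U},\mb{V})\lra\mb{X}\lra(\mb{Y_1},\mb{Y_2},\mb{Y})$ makes $\mb{Y_i}$ redundant given $\mb{Y}$, so that $K_{\mb{X}|\mb{W},\mb{U},\mb{Y}}\preceq K_{\mb{X}|\mb{W},\mb{U},\mb{Y_1}}\preceq K_{\mb{X}|\mb{Y_1}}$ (and likewise with subscript~$2$). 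I would then build jointly Gaussian $(\mb{W}',\mb{U}',\mb{V}')$ with $K_{\mb{X}|\mb{W}',\mb{Y_1}},K_{\mb{X}|\mb{W}',\mb{Y_2}},K_{\mb{X}|\mb{W}',\mb{U}',\mb{Y_1}},K_{\mb{X}|\mb{W}',\mb{V}',\mb{Y_2}}$ diagonal and with $\mb{U}'\perp(\mb{X})_{l_1}\mid(\mb{W}',\mb{Y_1})$, $\mb{V}'\perp[\mb{X}]_{l_2}\mid(\mb{W}',\mb{Y_2})$ --- so $(\mb{W}',\mb{U}',\mb{V}')\in C^{\tr}(\mb{D})$ --- following the ``common/private'' recipe of Proposition~\ref{prop:ach_trace}: the $l_1$-block of $K_{\mb{X}|\mb{W}',\mb{Y_1}}$ and the $l_2$-block of $K_{\mb{X}|\mb{W}',\mb{Y_2}}$ are read off (using that the objective sees only diagonal entries, with a Hadamard/entrywise argument) from $K_{\mb{X}|\mb{W},\mb{U},\mb{Y_1}}$ and $K_{\mb{X}|\mb{W},\mb{V},\mb{Y_2}}$, while $(K_{\mb{X}|\mb{W}',\mb{V}',\mb{Y_2}})_{l_1}$ and $[K_{\mb{X}|\mb{W}',\mb{U}',\mb{Y_1}}]_{l_2}$ are read off from the $\min\{\cdot,\cdot\}$ quantities in (\ref{eq:lowerb_11})--(\ref{eq:lowerb_21}). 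Remark~\ref{remark:diag} ensures that a diagonal $K_{\mb{X}|\mb{W}',\mb{Y_1}}$ is consistent with a diagonal $K_{\mb{X}|\mb{W}',\mb{Y_2}}$, and the block structure $\wh{K}=\mathrm{diag}(A,0)$, $\wt{K}=\mathrm{diag}(0,-B)$ of Lemma~\ref{lemma:Yspec}, via the information-filter identities $K^{-1}_{\mb{X}|\mb{W}',\mb{U}',\mb{Y}}=K^{-1}_{\mb{X}|\mb{W}',\mb{U}',\mb{Y_1}}+\wh{K}$ and $K^{-1}_{\mb{X}|\mb{W}',\mb{V}',\mb{Y}}=K^{-1}_{\mb{X}|\mb{W}',\mb{V}',\mb{Y_2}}+\wt{K}$ for jointly Gaussian variables, makes the matching blockwise. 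One then checks, using $|I+AM|=\prod_i(1+(A)_{ii}(M)_{ii})$ for diagonal $M$ (and analogous product identities) together with $K_{\mb{X}|\mb{W},\mb{U},\mb{V},\mb{Y}}\preceq K_{\mb{X}|\mb{W},\mb{U},\mb{Y}}$ and $K_{\mb{X}|\mb{W},\mb{U},\mb{V},\mb{Y}}\preceq K_{\mb{X}|\mb{W},\mb{V},\mb{Y}}$ (the step used to pass to the $\min$ in (\ref{eq:lowerb_11})--(\ref{eq:lowerb_21})), that $R^{\tr}_1,R^{\tr}_2$ at $(\mb{W}',\mb{U}',\mb{V}')$ do not exceed $\wh{R}_{lo1},\wh{R}_{lo2}$ at $(\mb{W},\mb{U},\mb{V})$, and that under $\mb{U}'\perp(\mb{X})_{l_1}\mid(\mb{W}',\mb{Y_1})$, $\mb{V}'\perp[\mb{X}]_{l_2}\mid(\mb{W}',\mb{Y_2})$ the constraints (\ref{set1:c5})--(\ref{set1:c6}) collapse to $\Tr(K_{\mb{X}|\mb{W}',\mb{U}',\mb{Y_1}})\le D_1$ and $\Tr(K_{\mb{X}|\mb{W}',\mb{V}',\mb{Y_2}})\le D_2$, which hold by feasibility of $(\mb{W},\mb{U},\mb{V})$.

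The main obstacle is precisely this reduction to a Gaussian, diagonal, $C^{\tr}$-structured point. Unlike the $\DSM$-versus-Enhanced-$\DSM$ comparison, $\wh{R}_{lo1}$ and $\wh{R}_{lo2}$ are not monotone in the conditional covariances in one direction: they decrease as the $l_1$-block diagonal of $K_{\mb{X}|\mb{W},\mb{U},\mb{Y_1}}$ (respectively the $l_2$-block of $K_{\mb{X}|\mb{W},\mb{V},\mb{Y_2}}$) and the $\min$ terms grow, yet the trace constraints cap the full traces of $K_{\mb{X}|\mb{W},\mb{U},\mb{Y_1}}$ and $K_{\mb{X}|\mb{W},\mb{V},\mb{Y_2}}$, so one cannot simply enlarge an arbitrary feasible point to a Gaussian one. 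The construction must split the two trace budgets across the $l_1$- and $l_2$-regions exactly as the achievable scheme does, and must respect $K_{\mb{X}|\mb{W}',\mb{Y_i}}\preceq K_{\mb{X}|\mb{Y_i}}$ --- a delicate point, since replacing $K_{\mb{X}|\mb{W},\mb{U},\mb{Y_1}}$ by its diagonal part (or a truncation thereof) need not preserve domination by $K_{\mb{X}|\mb{Y_1}}$; showing that a valid allocation always exists, and that no non-Gaussian point undercuts it, is where the work lies. Along the way I would also use the easy sub-step --- analogous to the lumping argument in the proof of Proposition~\ref{prop:ELB=maxmin} --- that collapsing the $(\mb{X})_{l_1}$-informative part of $\mb{U}$ into $\mb{W}$ (and symmetrically for $\mb{V}$) and diagonalizing the relevant matrices costs nothing, using that $\wh{K}$ and $\wt{K}$ act on complementary blocks. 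This establishes Proposition~\ref{prop:minmaxtrace}; combined with the achievable bound of Proposition~\ref{prop:ach_trace}, which is literally the same optimization over $C^{\tr}(\mb{D})$, it yields Theorem~\ref{thm:trace}.
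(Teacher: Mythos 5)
Your overall strategy is exactly the paper's: fix the $\mb{Y}$ of Lemma~\ref{lemma:Yspec} in the Minimax bound, pass to the single-letter expressions $\wh{R}_{lo1},\wh{R}_{lo2}$ of (\ref{eq:lowerb_11})--(\ref{eq:lowerb_21}), and then show that every feasible point of $C_l(\mb{D})$ can be replaced by a point of $C^{\tr}(\mb{D})$ without increasing the objective. The difficulty is that the proposal stops at precisely the place where the proof has to do its work: you correctly flag that (i) non-Gaussian auxiliaries might undercut the Gaussian ones, (ii) diagonalizing $K_{\mb{X}|\mb{W},\mb{U},\mb{Y_1}}$ might break domination by $K_{\mb{X}|\mb{Y_1}}$, and (iii) a valid split of the two trace budgets across the $l_1$- and $l_2$-blocks must be shown to exist --- but you leave all three as ``where the work lies'' rather than resolving them. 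As written, the reduction from $\inf_{C_l(\mb{D})}\max\{\wh{R}_{lo1},\wh{R}_{lo2}\}$ to $\min_{C^{\tr}(\mb{D})}\max\{R^{\tr}_1,R^{\tr}_2\}$ is therefore asserted, not proved.

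Each of the three items does close, and in the paper they are Lemmas~\ref{lemma:gauss}, \ref{lemma:diag} and \ref{lemma:indp} respectively. For (i), one chooses Gaussian $(\mb{W_G},\mb{U_G},\mb{V_G})$ matching $K_{\mb{X}|\mb{W},\mb{U},\mb{Y_1}}$ and $K_{\mb{X}|\mb{W},\mb{V},\mb{Y_2}}$ exactly, so feasibility and the first log-terms are untouched, and the Gaussian variance-drop lemma (Lemma~\ref{lemma:inequality}) gives $K_{\mb{X}|\mb{W_G},\mb{U_G},\mb{Y}}\succeq K_{\mb{X}|\mb{W},\mb{U},\mb{Y}}$ and likewise for $\mb{V}$, which can only enlarge the $\min$ terms and hence only decrease $\wh{R}_{loi}$; this tool is absent from your outline. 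For (ii), the worry dissolves because the constraint is a trace constraint: every diagonal entry of $K_{\mb{X}|\mb{W},\mb{U},\mb{Y_1}}$ is at most its trace, hence at most $D_1$, so $(K_{\mb{X}|\mb{W},\mb{U},\mb{Y_1}})_{diag}\preceq D_1 I\preceq K_{\mb{X}|\mb{Y_1}}$ by (\ref{tr_distcons}), and Lemma~\ref{lemma:matrixineq} guarantees the $\mb{Y}$-conditioned diagonals do not shrink. For (iii), since $D_1,D_2$ are scalars one of $d_1\le d_2$ or $d_2\le d_1$ holds, and in either case the candidate
\begin{align*}
K^{-1}_{\mb{X}|\wt{\mb{W}},\mb{Y_1}}=\left(\begin{array}{cc}(K_{\mb{X}|\mb{W},\mb{U},\mb{Y_1}})^{-1}_{l_1}&\mf{0}\\ \mf{0}&[K_{\mb{X}|\mb{W},\mb{V},\mb{Y_2}}]^{-1}_{l_2}-B\end{array}\right)
\end{align*}
is verified to dominate $K^{-1}_{\mb{X}|\mb{Y_1}}$ (equivalently its $+K$ shift dominates $K^{-1}_{\mb{X}|\mb{Y_2}}$), so the common message exists; the private messages then impose the $\min$'s on the complementary blocks, the independence conditions make (\ref{set1:c5})--(\ref{set1:c6}) coincide with the original trace constraints as you note, and the objective is unchanged. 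Supplying these three arguments turns your outline into the paper's proof; without them the proposal is an accurate road map rather than a proof.
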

 
 The proof  follows from the next four lemmas. At each lemma, we show that without loss of optimality we can add a constraint to the optimization set, $C_l(\mb{D})$ of Theorem \ref{thm:minmax} for the trace constraints. With those additional constraints  $C_l(\mb{D})$ becomes  $C^{\tr}(\mb{D})$ and $\wh{R}_{loi} = R_i^{\tr}(\mb{D})$ for $i \in \{ 1, 2 \}$.
 
 \begin{lemma}
\label{lemma:gauss}
There exists a feasible $(\mb{W_G},\mb{U_G},\mb{V_G})$ for  $R_{lo}(\mb{D})$  such that $(\mb{W_G},\mb{U_G},\mb{V_G})$ are jointly Gaussian with $(\mb{X},\mb{Y},\mb{Y_1},\mb{Y_2})$. Furthermore, such $(\mb{W_G},\mb{U_G},\mb{V_G})$ do not increase $\wh{R}_{lo1}$ and $\wh{R}_{lo2}$.
\end{lemma}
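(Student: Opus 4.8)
The plan is to take an arbitrary $(\mb{W},\mb{U},\mb{V})$ that is feasible for the optimization in Theorem~\ref{thm:minmax} (so the Markov chain $(\mb{W},\mb{U},\mb{V})\lra\mb{X}\lra(\mb{Y_1},\mb{Y_2},\mb{Y})$ holds with $\mb{Y}$ as in Lemma~\ref{lemma:Yspec}, and $\Tr K_{\mb{X}|\mb{W},\mb{U},\mb{Y_1}}\le D_1$, $\Tr K_{\mb{X}|\mb{W},\mb{V},\mb{Y_2}}\le D_2$) and to exhibit a jointly Gaussian triple that is still feasible and has no larger $\wh R_{lo1}$ and $\wh R_{lo2}$. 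Write $M_1:=K_{\mb{X}|\mb{W},\mb{U},\mb{Y_1}}$, $M_2:=K_{\mb{X}|\mb{W},\mb{V},\mb{Y_2}}$, $M_3:=K_{\mb{X}|\mb{W},\mb{U},\mb{Y}}$, $M_4:=K_{\mb{X}|\mb{W},\mb{V},\mb{Y}}$ for the relevant error covariance matrices. The key observation is that the feasibility constraints and the expressions $\wh R_{lo1},\wh R_{lo2}$ in (\ref{eq:lowerb_11})--(\ref{eq:lowerb_21}) depend on $(\mb{W},\mb{U},\mb{V})$ only through $\{(M_1)_{ii}\}_{i\le l_1}$, $\{(M_2)_{ii}\}_{i> l_1}$, the numbers $\min\{(M_3)_{ii},(M_4)_{ii}\}$ for all $i$, and $\Tr M_1,\Tr M_2$; moreover $\wh R_{lo1}$ and $\wh R_{lo2}$ are nonincreasing in each of those diagonal quantities.

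I would then build the Gaussian triple so as to reproduce $M_1$ and $M_2$ \emph{exactly}. Since conditioning on the extra variables cannot increase the error covariance, $M_1\preceq K_{\mb{X}|\mb{Y_1}}$ and $M_2\preceq K_{\mb{X}|\mb{Y_2}}$, so $\Lambda_U:=M_1^{-1}-K^{-1}_{\mb{X}|\mb{Y_1}}$ and $\Lambda_V:=M_2^{-1}-K^{-1}_{\mb{X}|\mb{Y_2}}$ are positive semidefinite. Take $\mb{W_G}$ deterministic and $\mb{U_G}=B_U\mb{X}+\mb{N}_U$, $\mb{V_G}=B_V\mb{X}+\mb{N}_V$ with $\mb{N}_U\perp\mb{N}_V$, both independent of everything else, with $(B_U,K_{\mb{N}_U}),(B_V,K_{\mb{N}_V})$ chosen so that the Fisher-information contributions satisfy $B_U^{T}K^{-1}_{\mb{N}_U}B_U=\Lambda_U$ and $B_V^{T}K^{-1}_{\mb{N}_V}B_V=\Lambda_V$ (any positive semidefinite target is realizable). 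Then $(\mb{W_G},\mb{U_G},\mb{V_G})$ is jointly Gaussian with $(\mb{X},\mb{Y},\mb{Y_1},\mb{Y_2})$, the required Markov chain holds, and a short computation using $K^{-1}_{\mb{X}|\mb{Y}}=K^{-1}_{\mb{X}|\mb{Y_1}}+\wh K=K^{-1}_{\mb{X}|\mb{Y_2}}+\wt K$ gives $K_{\mb{X}|\mb{W_G},\mb{U_G},\mb{Y_1}}=M_1$, $K_{\mb{X}|\mb{W_G},\mb{V_G},\mb{Y_2}}=M_2$, $M_3^G:=K_{\mb{X}|\mb{W_G},\mb{U_G},\mb{Y}}=(M_1^{-1}+\wh K)^{-1}$, and $M_4^G:=K_{\mb{X}|\mb{W_G},\mb{V_G},\mb{Y}}=(M_2^{-1}+\wt K)^{-1}$. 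Feasibility is then immediate from $\Tr M_1\le D_1$, $\Tr M_2\le D_2$, and the first summands of $\wh R_{lo1},\wh R_{lo2}$ are preserved verbatim.

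The one substantive step, and the point I expect to be the main obstacle, is to show $M_3\preceq M_3^G$ and $M_4\preceq M_4^G$; these imply $\min\{(M_3^G)_{ii},(M_4^G)_{ii}\}\ge\min\{(M_3)_{ii},(M_4)_{ii}\}$ for each $i$, so the remaining (common) summand of $\wh R_{lo1},\wh R_{lo2}$ does not increase, which finishes the proof. I would prove $M_3\preceq(M_1^{-1}+\wh K)^{-1}$ (the bound on $M_4$ being symmetric, with $\mb{Y_2}$ and $\wt K$ replacing $\mb{Y_1}$ and $\wh K$) via the Gaussian vector $\wh{\mb{Y}}_\mb{1}=A_{\wh{\mb{Y}}_\mb{1}}\mb{X}+\mb{N}_{\wh{\mb{Y}}_\mb{1}}$ of Lemma~\ref{lemma:hatZ}, with $\mb{N}_{\wh{\mb{Y}}_\mb{1}}$ independent of $(\mb{X},\mb{W},\mb{U},\mb{Y_1})$, $A^{T}_{\wh{\mb{Y}}_\mb{1}}K^{-1}_{\mb{N}_{\wh{\mb{Y}}_\mb{1}}}A_{\wh{\mb{Y}}_\mb{1}}=\wh K$ by (\ref{eq:hat_Y1}), and $E[\mb{X}\,|\,\mb{Y_1},\wh{\mb{Y}}_\mb{1}]=E[\mb{X}\,|\,\mb{Y_1},\mb{Y}]$ almost surely. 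Exactly as in the proof of Theorem~\ref{thm:minmax}, since $E[\mb{X}\,|\,\mb{Y_1},\mb{Y}]$ is a two-sided sufficient statistic for $\mb{X}$ from $(\mb{Y_1},\mb{Y})$ and $\mb{X}\lra\mb{Y}\lra\mb{Y_1}$, the conditional laws $P_{\mb{X}|\mb{W},\mb{U},\mb{Y_1},\mb{Y}}$ and $P_{\mb{X}|\mb{W},\mb{U},\mb{Y_1},\wh{\mb{Y}}_\mb{1}}$ coincide, whence $M_3=K_{\mb{X}|\mb{W},\mb{U},\mb{Y_1},\wh{\mb{Y}}_\mb{1}}$. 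For each fixed value of $(\mb{W},\mb{U},\mb{Y_1})$, comparing the MMSE estimate of $\mb{X}$ from the extra observation $\wh{\mb{Y}}_\mb{1}$ against the corresponding linear estimate bounds the conditional error covariance by $(\mathrm{Cov}(\mb{X}\,|\,\mb{W},\mb{U},\mb{Y_1})^{-1}+\wh K)^{-1}$; taking expectations and invoking the operator concavity of $\Sigma\mapsto(\Sigma^{-1}+\wh K)^{-1}$ — a Gaussian ``variance-drop'' statement in the spirit of Lemma~\ref{lemma:inequality} — yields $M_3\preceq(M_1^{-1}+\wh K)^{-1}$. The delicate points are the distributional identity $P_{\mb{X}|\mb{W},\mb{U},\mb{Y_1},\mb{Y}}=P_{\mb{X}|\mb{W},\mb{U},\mb{Y_1},\wh{\mb{Y}}_\mb{1}}$ for possibly non-Gaussian $(\mb{W},\mb{U})$, which reduces to the sufficiency argument already used for the $\mLB$, and the passage from the pointwise MMSE-versus-LMMSE comparison to the averaged positive-semidefinite inequality.
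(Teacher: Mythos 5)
Your proposal is correct and follows essentially the same route as the paper: construct a jointly Gaussian triple matching $K_{\mb{X}|\mb{W},\mb{U},\mb{Y_1}}$ and $K_{\mb{X}|\mb{W},\mb{V},\mb{Y_2}}$ exactly (so feasibility and the first summands are preserved), and then argue that $K_{\mb{X}|\mb{W},\mb{U},\mb{Y}}$ and $K_{\mb{X}|\mb{W},\mb{V},\mb{Y}}$ can only increase, so the $\min$ terms in $\wh{R}_{lo1},\wh{R}_{lo2}$ do not decrease. The one step you flag as delicate, $M_3\preceq(M_1^{-1}+\wh{K})^{-1}$, is precisely the paper's Gaussian variance-drop statement (Lemma~\ref{lemma:inequality} and Corollary~\ref{corr}), which the paper simply invokes rather than re-deriving via your MMSE-versus-LMMSE and operator-concavity argument.
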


\begin{proof}
Let $(\mb{W_G},\mb{U_G},\mb{V_G})$ be jointly Gaussian with $(\mb{X},\mb{Y},\mb{Y_1},\mb{Y_2})$ and $(\mb{W_G},\mb{U_G},\mb{V_G}) \lra \mb{X} \lra (\mb{Y},\mb{Y_1},\mb{Y_2})$ such that
\begin{align*}
& K_{\mb{X}|\mb{W_G},\mb{U_G},\mb{Y_1}} = K_{\mb{X}|\mb{W},\mb{U},\mb{Y_1}}, \\
&K_{\mb{X}|\mb{W_G},\mb{V_G},\mb{Y_2}}=K_{\mb{X}|\mb{W},\mb{V},\mb{Y_2}}.
\end{align*}

By Lemma \ref{lemma:inequality}, we have $K_{\mb{X}|\mb{W},\mb{U},\mb{Y}}\preceq K_{\mb{X}|\mb{W_G},\mb{U_G},\mb{Y}}$ and $K_{\mb{X}|\mb{W},\mb{V},\mb{Y}}\preceq K_{\mb{X}|\mb{W_G},\mb{V_G},\mb{Y}}$. 
This implies 
\\
$\min \{(K_{\mb{X}|\mb{W},\mb{U},\mb{Y}})_{ii}, (K_{\mb{X}|\mb{W},\mb{V},\mb{Y}})_{ii}\}$ is lower than or equal to $\min \{(K_{\mb{X}|\mb{W_G},\mb{U_G},\mb{Y}})_{ii}, (K_{\mb{X}|\mb{W_G},\mb{V_G},\mb{Y}})_{ii}\}$ for all $i \in [k]$. Hence, we can conclude that  $(\mb{W_G},\mb{U_G},\mb{V_G})$ is feasible for  $R_{lo}(\mb{D})$ and replacing the $(\mb{W},\mb{U},\mb{V})$ with $(\mb{W_G},\mb{U_G},\mb{V_G})$  on (\ref{eq:lowerb_11}) and (\ref{eq:lowerb_21}) does not increase $\wh{R}_{lo1}$ and $\wh{R}_{lo2}$. 
\end{proof}

Then by Lemma \ref{lemma:gauss} we can write
\begin{align}
\label{eq:gauss_lowerTR}
&R_{lo}(\mb{D}) \ge \wh{R}_{lo}(\mb{D})
\end{align}
where 
\begin{align*}
&\wh{R}_{lo}(\mb{D}) = \inf_{\wh{C}_l(\mb{D})} \max \{\wh{R}_{lo1}, \wh{R}_{lo2} \} 
\end{align*}
and
$\wh{C}_l(\mb{D}) = \{(\mb{W},\mb{U},\mb{V}) \in C_l(\mb{D}) | (\mb{W},\mb{U},\mb{V}) \mbox{ jointly Gaussian with } (\mb{X},\mb{Y},\mb{Y_1},\mb{Y_2})\}$.

The following lemmas show that without loss of optimality we can add the conditions $ {\mb{U}} \perp (\mb{X})_{l_1}|(\mb{W},\mb{Y_1}), \ {\mb{V}} \perp [\mb{X}]_{l_2}|(\mb{W}, \mb{Y_2})$,
 and 
$K_{\mb{X}|\mb{W},\mb{Y_1}},K_{\mb{X}|\mb{W},\mb{U},\mb{Y_1}},K_{\mb{X}|\mb{W},\mb{V},\mb{Y_2}}$ are diagonal matrices to   $\wh{C}_l(\mb{D})$.
\begin{lemma}
\label{lemma:diag}
One can add the constraint that  $K_{\mb{X}|\mb{W},\mb{U},\mb{Y_1}},K_{\mb{X}|\mb{W},\mb{V},\mb{Y_2}}$  are diagonal matrices to   $\wh{C}_l(\mb{D})$  without increasing the optimal value,  $\wh{R}_{lo}(\mb{D})$.
\end{lemma}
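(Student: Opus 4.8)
The plan is to show that every $(\mb{W},\mb{U},\mb{V})\in\wh{C}_l(\mb{D})$ can be replaced by another element of $\wh{C}_l(\mb{D})$ for which $K_{\mb{X}|\mb{W},\mb{U},\mb{Y_1}}$ and $K_{\mb{X}|\mb{W},\mb{V},\mb{Y_2}}$ are diagonal and for which $\max\{\wh{R}_{lo1},\wh{R}_{lo2}\}$ is no larger; since adjoining a constraint to an infimum can only raise it, this proves the lemma. The first step is to pin down what the two objectives depend on. For jointly Gaussian $(\mb{W},\mb{U},\mb{V})$ obeying the Markov chain one has $K_{\mb{X}|\mb{W},\mb{U},\mb{Y}}=(K^{-1}_{\mb{X}|\mb{W},\mb{U},\mb{Y_1}}+\wh{K})^{-1}$ and $K_{\mb{X}|\mb{W},\mb{V},\mb{Y}}=(K^{-1}_{\mb{X}|\mb{W},\mb{V},\mb{Y_2}}+\wt{K})^{-1}$ (by joint Gaussianity together with $\mb{X}\lra\mb{Y}\lra(\mb{Y_1},\mb{Y_2})$ and Lemma~\ref{lemma:Yspec}, exactly as in the analogous identity of the $\RDmse$ converse), and $\wh{K},\wt{K}$ are both diagonal and positive semidefinite (recall $B\prec0$). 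Hence $\wh{R}_{lo1}$ and $\wh{R}_{lo2}$ of (\ref{eq:lowerb_11})--(\ref{eq:lowerb_21}) depend on the joint law only through the diagonal entries of $K_{\mb{X}|\mb{W},\mb{U},\mb{Y_1}}$ (its first $l_1$), of $K_{\mb{X}|\mb{W},\mb{V},\mb{Y_2}}$ (its last $l_2$), and of $K_{\mb{X}|\mb{W},\mb{U},\mb{Y}}$ and $K_{\mb{X}|\mb{W},\mb{V},\mb{Y}}$; both are nonincreasing in each of these entries, while the distortion constraints involve only $\Tr(K_{\mb{X}|\mb{W},\mb{U},\mb{Y_1}})$ and $\Tr(K_{\mb{X}|\mb{W},\mb{V},\mb{Y_2}})$. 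So it suffices to produce a feasible point with $K_{\mb{X}|\mb{W},\mb{U},\mb{Y_1}}$ and $K_{\mb{X}|\mb{W},\mb{V},\mb{Y_2}}$ diagonal, keeping those diagonal quantities no smaller and those two traces no larger.

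The construction I would use passes to jointly Gaussian $(\mb{W}',\mb{U}',\mb{V}')$, with the same Markov chain, for which $K_{\mb{X}|\mb{W}',\mb{U}',\mb{Y_1}}=(K_{\mb{X}|\mb{W},\mb{U},\mb{Y_1}})_{diag}$ and $K_{\mb{X}|\mb{W}',\mb{V}',\mb{Y_2}}=(K_{\mb{X}|\mb{W},\mb{V},\mb{Y_2}})_{diag}$. Then the two traces, and the first $l_1$ (respectively last $l_2$) diagonal entries, are unchanged, so the leading summands of $\wh{R}_{lo1}$, $\wh{R}_{lo2}$ and the distortion constraints are untouched. For the trailing ``$\min$'' summands I would invoke Lemma~\ref{lemma:matrixineq}: with $M=K_{\mb{X}|\mb{W},\mb{U},\mb{Y_1}}$ and $A=\wh{K}$ it gives $K_{\mb{X}|\mb{W}',\mb{U}',\mb{Y}}=\big((K_{\mb{X}|\mb{W},\mb{U},\mb{Y_1}})_{diag}^{-1}+\wh{K}\big)^{-1}\succeq\big((K^{-1}_{\mb{X}|\mb{W},\mb{U},\mb{Y_1}}+\wh{K})^{-1}\big)_{diag}=(K_{\mb{X}|\mb{W},\mb{U},\mb{Y}})_{diag}$, and symmetrically, with $A=\wt{K}$, $K_{\mb{X}|\mb{W}',\mb{V}',\mb{Y}}\succeq(K_{\mb{X}|\mb{W},\mb{V},\mb{Y}})_{diag}$. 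Therefore every diagonal entry of these two ``$\mb{Y}$-conditioned'' matrices is at least its original value, so the $\min$-summands---and with them $\wh{R}_{lo1}$ and $\wh{R}_{lo2}$---do not increase.

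The step I expect to be the real obstacle is \emph{realizability}: constructing jointly Gaussian $(\mb{W}',\mb{U}',\mb{V}')$ with $(\mb{W}',\mb{U}',\mb{V}')\lra\mb{X}\lra(\mb{Y},\mb{Y_1},\mb{Y_2})$ that attains the two prescribed diagonal conditional covariances. This comes down to choosing a common-message covariance $K_{\mb{X}|\mb{W}',\mb{Y_1}}$ (which then fixes $K_{\mb{X}|\mb{W}',\mb{Y_2}}=(K^{-1}_{\mb{X}|\mb{W}',\mb{Y_1}}+K)^{-1}$) that is dominated by $K_{\mb{X}|\mb{Y_1}}$ yet dominates $(K_{\mb{X}|\mb{W},\mb{U},\mb{Y_1}})_{diag}$, and whose $\mb{Y_2}$-companion dominates $(K_{\mb{X}|\mb{W},\mb{V},\mb{Y_2}})_{diag}$; given such a $\mb{W}'$, the variables $\mb{U}'$ and $\mb{V}'$ are then just $\mb{X}$ plus fresh, mutually independent Gaussian noise of the appropriate covariances. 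This is immediate once $K_{\mb{X}|\mb{W},\mb{Y_1}}$---equivalently, by Remark~\ref{remark:diag}, $K_{\mb{X}|\mb{W},\mb{Y_2}}$---is diagonal, because then $K_{\mb{X}|\mb{W},\mb{U},\mb{Y_1}}\preceq K_{\mb{X}|\mb{W},\mb{Y_1}}$ forces $(K_{\mb{X}|\mb{W},\mb{U},\mb{Y_1}})_{diag}\preceq K_{\mb{X}|\mb{W},\mb{Y_1}}$ (diagonal entries only decrease under $\preceq$), and likewise on the $\mb{Y_2}$ side, so one may keep $\mb{W}'=\mb{W}$. In general one must first reduce, without loss of optimality, to this diagonal-$K_{\mb{X}|\mb{W},\mb{Y_1}}$ situation---which is plausible because $\wh{R}_{lo1}$ and $\wh{R}_{lo2}$ never reference $K_{\mb{X}|\mb{W},\mb{Y_1}}$ directly, only the four matrices above---and that reduction, together with the routine verification that the new triple lies in $\wh{C}_l(\mb{D})$, is where the real work concentrates.
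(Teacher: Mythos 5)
Your core mechanism is exactly the paper's: replace $K_{\mb{X}|\mb{W},\mb{U},\mb{Y_1}}$ and $K_{\mb{X}|\mb{W},\mb{V},\mb{Y_2}}$ by their diagonal parts, note that the leading terms of $\wh{R}_{lo1},\wh{R}_{lo2}$ and the trace constraints are untouched, and invoke Lemma~\ref{lemma:matrixineq} with $A=\wh{K}$ (resp.\ $\wt{K}$) to conclude that the $\mb{Y}$-conditioned error covariances can only grow entrywise, so the $\min$-terms do not increase. That part is correct. The genuine gap is your final paragraph: you single out realizability of the new triple as ``the real obstacle,'' reduce it to finding a common covariance $K_{\mb{X}|\mb{W}',\mb{Y_1}}$ sandwiched between $(K_{\mb{X}|\mb{W},\mb{U},\mb{Y_1}})_{diag}$ and $K_{\mb{X}|\mb{Y_1}}$ whose $\mb{Y_2}$-companion also dominates $(K_{\mb{X}|\mb{W},\mb{V},\mb{Y_2}})_{diag}$, and then explicitly leave that reduction undone. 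As written, the proof does not close.

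The missing observation is that this extra reduction is unnecessary. At the stage of Lemma~\ref{lemma:diag}, neither the constraint set $\wh{C}_l(\mb{D})$ (Markov chain, joint Gaussianity, and the two trace constraints) nor the objectives $\wh{R}_{lo1},\wh{R}_{lo2}$ ever reference $\mb{W}$ separately from the pairs $(\mb{W},\mb{U})$ and $(\mb{W},\mb{V})$; the structural role of $\mb{W}$ as a common message is only imposed later, in Lemma~\ref{lemma:indp}, where a fresh $\wt{\mb{W}}$ is constructed anyway. So you may take $\mb{W}'$ deterministic and realize $\mb{U}'$ and $\mb{V}'$ as independent Gaussian observations of $\mb{X}$ achieving the two prescribed diagonal conditional covariances separately. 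The only thing to verify is $(K_{\mb{X}|\mb{W},\mb{U},\mb{Y_1}})_{diag}\preceq K_{\mb{X}|\mb{Y_1}}$ and $(K_{\mb{X}|\mb{W},\mb{V},\mb{Y_2}})_{diag}\preceq K_{\mb{X}|\mb{Y_2}}$, and this follows immediately from the trace constraint: each diagonal entry of a positive semidefinite matrix is at most its trace, so $(K_{\mb{X}|\mb{W},\mb{U},\mb{Y_1}})_{diag}\preceq D_1 I\preceq K_{\mb{X}|\mb{Y_1}}$ by (\ref{tr_distcons}), and likewise on the $\mb{Y_2}$ side. This one-line domination is precisely how the paper disposes of realizability; with it, your argument is complete.
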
 
 
 \begin{proof}
 Note that for each feasible $(\mb{W},\mb{U},\mb{V})$ in $\wh{C}_l(\mb{D})$,  we can find a $(\mb{W}',\mb{U}',\mb{V}')$ jointly Gaussian with $(\mb{X},\mb{Y},\mb{Y_1},\mb{Y_2})$ and $ (\mb{W}',\mb{U}',\mb{V}') \lra \mb{X} \lra (\mb{Y_1},\mb{Y_2},\mb{Y})$ such that 
 \begin{align*}
 &K_{\mb{X}|\mb{W}',\mb{U}',\mb{Y_1}} = (K_{\mb{X}|\mb{W},\mb{U},\mb{Y_1}})_{diag} 
 \\
 &K_{\mb{X}|\mb{W}',\mb{V}',\mb{Y_2}} = (K_{\mb{X}|\mb{W},\mb{V},\mb{Y_2}})_{diag} 
 \end{align*}
 since $K_{\mb{X}|\mb{W}',\mb{U}',\mb{Y_i}} \preceq D_iI \preceq K_{\mb{X}|\mb{Y_i}}$ for $i \in \{1,2\}$. Also notice that $(\mb{W}',\mb{U}',\mb{V}')$ satisfies the corresponding distortion constraints. Lastly we need to check that  $(K_{\mb{X}|\mb{W}',\mb{U}',\mb{Y}})_{diag} \succeq (K_{\mb{X}|\mb{W},\mb{U},\mb{Y}})_{diag}$ and $(K_{\mb{X}|\mb{W}',\mb{V}',\mb{Y}})_{diag} \succeq (K_{\mb{X}|\mb{W},\mb{V},\mb{Y}})_{diag}$. 
Since 
\begin{align*}
K_{\mb{X}|\mb{W'},\mb{U'},\mb{Y}} &= \left[K_{\mb{X}|\mb{W'},\mb{U'},\mb{Y_1}}^{-1} +\left(\begin{array}{c c}  A&\mf{0}\\ \mf{0}&\mf{0}\end{array} \right) \right]^{-1}
\\
&= \left[((K_{\mb{X}|\mb{W},\mb{U},\mb{Y_1}})_{diag})^{-1} +\left(\begin{array}{c c}  A&\mf{0}\\ \mf{0}&\mf{0}\end{array} \right) \right]^{-1}
\end{align*} 
from Lemma \ref{lemma:matrixineq}  we have $K_{\mb{X}|\mb{W}',\mb{U}',\mb{Y}} \succeq (K_{\mb{X}|\mb{W},\mb{U},\mb{Y}})_{diag}$ and similarly $K_{\mb{X}|\mb{W}',\mb{V}',\mb{Y}} \succeq (K_{\mb{X}|\mb{W},\mb{V},\mb{Y}})_{diag}$.
Hence, without loss of optimality we can add the condition that $K_{\mb{X}|\mb{W},\mb{U},\mb{Y_1}},K_{\mb{X}|\mb{W},\mb{V},\mb{Y_2}}$  are diagonal matrices to  $\wh{C}_l(\mb{D})$.
 \end{proof}
 By Lemma \ref{lemma:diag}, we can write
\begin{align}
\label{ineq:after_diag}
&\wh{R}_{lo}(\mb{D}) = \inf_{\wh{\wh{C}}_l(\mb{D})} \max \{\wh{R}_{lo1}, \wh{R}_{lo2} \} 
\end{align}
where
$\wh{\wh{C}}_l(\mb{D}) = \{(\mb{W},\mb{U},\mb{V}) \in \wh{C}_l(\mb{D}) | K_{\mb{X}|\mb{W},\mb{U},\mb{Y_1}},K_{\mb{X}|\mb{W},\mb{V},\mb{Y_2}}  \mbox{ are diagonal}\}$.
 
 \begin{lemma}
 \label{lemma:indp}
 One may add the constraints 
 \begin{align}
 \label{lb:cond_mc1}
 & \mb{U}  \perp (\mb{X})_{l_1}| \mb{W},\mb{Y_1},
 \\
 \label{lb:cond_mc2}
 & \mb{V} \perp [\mb{X}]_{l_2}| \mb{W},\mb{Y_2},
 \\
 \label{lb:cond_diag}
 &K_{\mb{X}|\mb{W_G},\mb{Y_1}}, K_{\mb{X}|\mb{W_G},\mb{Y_2}}  \mbox{ are diagonal matrices}
 \end{align}
to the optimization set $\wh{\wh{C}}_l(\mb{D})$  without increasing the optimal value, $\wh{R}_{lo}(\mb{D})$.
\end{lemma}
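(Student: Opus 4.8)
The plan is to take an arbitrary $(\mb{W},\mb{U},\mb{V})\in\wh{\wh{C}}_l(\mb{D})$ and exhibit a replacement $(\mb{W}',\mb{U}',\mb{V}')$ that satisfies (\ref{lb:cond_mc1})--(\ref{lb:cond_diag}) with $\wh{R}_{lo1}$ and $\wh{R}_{lo2}$ no larger. First I would record the structural consequences of Gaussianity together with the identities $K^{-1}_{\mb{X}|\mb{Y}}=K^{-1}_{\mb{X}|\mb{Y_1}}+\wh{K}=K^{-1}_{\mb{X}|\mb{Y_2}}+\wt{K}$ from Lemma~\ref{lemma:Yspec} and $K^{-1}_{\mb{X}|\mb{Y_2}}-K^{-1}_{\mb{X}|\mb{Y_1}}=K=\mathrm{diag}(A,B)$ from (\ref{eq:relation}): for jointly Gaussian auxiliaries one has $K^{-1}_{\mb{X}|\mb{W},\mb{U},\mb{Y}}=K^{-1}_{\mb{X}|\mb{W},\mb{U},\mb{Y_1}}+\wh{K}$, $K^{-1}_{\mb{X}|\mb{W},\mb{U},\mb{Y_2}}=K^{-1}_{\mb{X}|\mb{W},\mb{U},\mb{Y_1}}+K$, and likewise with $\mb{V}$; since $K,\wh{K},\wt{K}$ and, on $\wh{\wh{C}}_l(\mb{D})$, the matrices $K_{\mb{X}|\mb{W},\mb{U},\mb{Y_1}}$ and $K_{\mb{X}|\mb{W},\mb{V},\mb{Y_2}}$ are diagonal, it follows that $K_{\mb{X}|\mb{W},\mb{V},\mb{Y_1}}$, $K_{\mb{X}|\mb{W},\mb{U},\mb{Y}}$, $K_{\mb{X}|\mb{W},\mb{V},\mb{Y}}$ are diagonal too, and that $\wh{R}_{lo1},\wh{R}_{lo2}$ depend on the triple only through the diagonal entries of $K_{\mb{X}|\mb{W},\mb{U},\mb{Y_1}}$ and $K_{\mb{X}|\mb{W},\mb{V},\mb{Y_2}}$, being coordinatewise nonincreasing in each.

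Next I would build $\mb{W}'$ (jointly Gaussian with $(\mb{X},\mb{Y},\mb{Y_1},\mb{Y_2})$ and satisfying $(\mb{W}',\mb{U}',\mb{V}')\lra\mb{X}\lra(\mb{Y_1},\mb{Y_2},\mb{Y})$) to be the common message that supplies Decoder~1's information in the upper-left region and Decoder~2's in the lower-right region: prescribe $K_{\mb{X}|\mb{W}',\mb{Y_1}}$ to be the diagonal matrix with upper-left block $(K_{\mb{X}|\mb{W},\mb{U},\mb{Y_1}})_{l_1}$ and lower-right block $[K_{\mb{X}|\mb{W},\mb{V},\mb{Y_1}}]_{l_2}$; by Remark~\ref{remark:diag}, $K_{\mb{X}|\mb{W}',\mb{Y_2}}$ is then diagonal as well, giving (\ref{lb:cond_diag}). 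Let $\mb{U}'$ be a noisy function of $[\mb{X}]_{l_2}$ (given $\mb{W}',\mb{Y_1}$) that drives $[K_{\mb{X}|\mb{W}',\mb{U}',\mb{Y_1}}]_{l_2}$ down to the coordinatewise minimum of $[K_{\mb{X}|\mb{W}',\mb{Y_1}}]_{l_2}$ and $[K_{\mb{X}|\mb{W},\mb{U},\mb{Y_1}}]_{l_2}$; since $K_{\mb{X}|\mb{W}',\mb{Y_1}}$ is block-diagonal this leaves $(\mb{X})_{l_1}$ unaffected, yielding (\ref{lb:cond_mc1}), and $\mb{V}'$ is defined symmetrically to give (\ref{lb:cond_mc2}). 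A short computation with the diagonal identities of the first step then shows that for this choice $(K_{\mb{X}|\mb{W}',\mb{U}',\mb{Y_1}})_{l_1}$ and $[K_{\mb{X}|\mb{W}',\mb{V}',\mb{Y_2}}]_{l_2}$ coincide with the originals and the minima $\min\{(K_{\mb{X}|\mb{W}',\mb{U}',\mb{Y}})_{ii},(K_{\mb{X}|\mb{W}',\mb{V}',\mb{Y}})_{ii}\}$ are unchanged, so $\wh{R}_{lo1},\wh{R}_{lo2}$ are in fact unchanged; moreover $\Tr(K_{\mb{X}|\mb{W}',\mb{U}',\mb{Y_1}})\le\Tr(K_{\mb{X}|\mb{W},\mb{U},\mb{Y_1}})\le D_1$ and similarly for Decoder~2, so the distortion constraints are preserved and the new triple lies in $C_l(\mb{D}+\epsilon(I,I))$.

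The remaining, and I expect the only delicate, point is the feasibility of $\mb{W}'$, i.e.\ that the prescribed $K_{\mb{X}|\mb{W}',\mb{Y_1}}=\mathrm{diag}\bigl((K_{\mb{X}|\mb{W},\mb{U},\mb{Y_1}})_{l_1},[K_{\mb{X}|\mb{W},\mb{V},\mb{Y_1}}]_{l_2}\bigr)$ is $\preceq K_{\mb{X}|\mb{Y_1}}$ (and that the domination persists after subtracting the information carried by $\mb{U}'$ and $\mb{V}'$). This does not follow from $K_{\mb{X}|\mb{W},\mb{U},\mb{Y_1}}\preceq K_{\mb{X}|\mb{Y_1}}$ and $K_{\mb{X}|\mb{W},\mb{V},\mb{Y_1}}\preceq K_{\mb{X}|\mb{Y_1}}$ alone; I would establish it using the trace distortion constraints $D_iI\preceq K_{\mb{X}|\mb{Y_i}}$, which via $\lambda_{\max}(K_{\mb{X}|\mb{W},\mb{U},\mb{Y_1}})\le\Tr(K_{\mb{X}|\mb{W},\mb{U},\mb{Y_1}})\le D_1$ give $K_{\mb{X}|\mb{W},\mb{U},\mb{Y_1}}\preceq D_1I\preceq K_{\mb{X}|\mb{Y_1}}$ and similarly $[K_{\mb{X}|\mb{W},\mb{V},\mb{Y_1}}]_{l_2}\preceq[K_{\mb{X}|\mb{W},\mb{V},\mb{Y_2}}]_{l_2}\preceq D_2I$, together with a Schur-complement comparison of the two blocks against the corresponding blocks of $K_{\mb{X}|\mb{Y_1}}$ (treating separately the case where the two candidate lower-right blocks are $\preceq$-comparable and, when they are not, passing coordinatewise to the smaller entry, which is legitimate since all matrices involved are diagonal). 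Lemma~\ref{lemma:matrixineq} and Lemma~\ref{lemma:inequality} supply the covariance comparisons used along the way; once these checks are in place, $C_l(\mb{D})$ augmented with the three constraints is exactly $C^{\tr}(\mb{D})$ and $\wh{R}_{loi}=R^{\tr}_i(\mb{D})$, which is what Proposition~\ref{prop:minmaxtrace} needs.
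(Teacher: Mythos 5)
Your proposal matches the paper's own proof in all essentials: the same block-diagonal prescription for the common message (upper-left block of $K_{\mb{X}|\wt{\mb{W}},\mb{Y_1}}$ taken from $K_{\mb{X}|\mb{W},\mb{U},\mb{Y_1}}$ and lower-right block from $K_{\mb{X}|\mb{W},\mb{V},\mb{Y_1}}$), the same coordinatewise-minimum choice of the private messages, the same check that the objective and the trace constraints are preserved, and the same feasibility argument for $\wt{\mb{W}}$ via the scalar bounds $\Tr(K_{\mb{X}|\mb{W},\mb{U},\mb{Y_1}})\le D_1$, $[K_{\mb{X}|\mb{W},\mb{V},\mb{Y_2}}]_{l_2}\preceq D_2 I$ and $D_iI\preceq K_{\mb{X}|\mb{Y_i}}$ together with a case split (the paper splits on whether $d_2\le d_1$ and compares against whichever side information corresponds to the larger distortion). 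The argument is correct and requires no further comment.
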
 
 
 \begin{proof}
 Let $(\mb{W},\mb{U},\mb{V})$ be feasible for $\wh{R}_{lo}(\mb{D})$, i.e $(\mb{W},\mb{U},\mb{V}) \in \wh{\wh{C}}_l(\mb{D})$. From these, we shall construct $(\wt{\mb{W}}, \wt{\mb{U}}, \wt{\mb{V}})$ that are feasible for
$\wh{R}_{lo}(\mb{D})$  and also satisfy the conditions in (\ref{lb:cond_mc1}), (\ref{lb:cond_mc2}), (\ref{lb:cond_diag}) and for which the objective is only lower. 

First suppose that $d_2 \le d_1$. Then note that 
\begin{align*}
\left(\begin{array}{c c}  (K_{\mb{X}|\mb{W},\mb{U},\mb{Y_1}})^{-1}_{l_1}  &\mf{0}\\ \mf{0}&[K_{\mb{X}|\mb{W},\mb{V},\mb{Y_2}}]^{-1}_{l_2}-B\end{array} \right)
&\succeq
 \left(\begin{array}{c c}  d_1^{-1}I  &\mf{0}\\ \mf{0} &d_2^{-1}I -B \end{array} \right)
 \\
&\succeq d_1^{-1}I
 \\
&\succeq K^{-1}_{\mb{X}|\mb{Y_1}}.
\end{align*}

Then we may choose $\wt{\mb{W}}$ such that 
\begin{align}
&K^{-1}_{\mb{X}|\wt{\mb{W}},\mb{Y_1}}  
\label{ineq_w2}
= \left(\begin{array}{c c}  (K_{\mb{X}|\mb{W},\mb{U},\mb{Y_1}})^{-1}_{l_1}  &\mf{0}\\ \mf{0}&[K_{\mb{X}|\mb{W},\mb{V},\mb{Y_2}}]^{-1}_{l_2}-B\end{array} \right)
\end{align}
in which case we have
\begin{align}
&K^{-1}_{\mb{X}|\wt{\mb{W}},\mb{Y_2}} =K^{-1}_{\mb{X}|\wt{\mb{W}},\mb{Y_1}} + K 
\label{ineq_w1}
= \left(\begin{array}{c c}  (K_{\mb{X}|\mb{W},\mb{U},\mb{Y_1}})^{-1}_{l_1} + A &\mf{0}\\ \mf{0}&[K_{\mb{X}|\mb{W},\mb{V},\mb{Y_2}}]^{-1}_{l_2}\end{array} \right).
\end{align}

Likewise, if $d_1 < d_2$, we have
\begin{align*}
 \left(\begin{array}{c c}  (K_{\mb{X}|\mb{W},\mb{U},\mb{Y_1}})^{-1}_{l_1} + A &\mf{0}\\ \mf{0}&[K_{\mb{X}|\mb{W},\mb{V},\mb{Y_2}}]^{-1}_{l_2}\end{array} \right)
 &\succeq
\left(\begin{array}{c c}  d_1^{-1}I + A &\mf{0}\\ \mf{0}& d_2^{-1}I\end{array} \right)
\\
&\succeq d_2^{-1}I 
\\
&\succeq K^{-1}_{\mb{X}|\mb{Y_2}}.
\end{align*}
Hence we may choose $\wt{\mb{W}}$ such that
\begin{align*}
&K^{-1}_{\mb{X}|\wt{\mb{W}},\mb{Y_2}} 
= \left(\begin{array}{c c}  (K_{\mb{X}|\mb{W},\mb{U},\mb{Y_1}})^{-1}_{l_1} + A &\mf{0}\\ \mf{0}&[K_{\mb{X}|\mb{W},\mb{V},\mb{Y_2}}]^{-1}_{l_2}\end{array} \right)
\end{align*}
in which case
\begin{align*}
&K^{-1}_{\mb{X}|\wt{\mb{W}},\mb{Y_1}} =K^{-1}_{\mb{X}|\wt{\mb{W}},\mb{Y_2}} - K 
= \left(\begin{array}{c c}  (K_{\mb{X}|\mb{W},\mb{U},\mb{Y_1}})^{-1}_{l_1}  &\mf{0}\\ \mf{0}&[K_{\mb{X}|\mb{W},\mb{V},\mb{Y_2}}]^{-1}_{l_2}-B\end{array} \right).
\end{align*}
Thus either way, we may choose $\wt{\mb{W}}$ such that (\ref{ineq_w2}) and (\ref{ineq_w1}) hold, 
and so  $K_{\mb{X}|\wt{\mb{W}},\mb{Y_1}}$ and $K_{\mb{X}|\wt{\mb{W}},\mb{Y_2}}$ are both diagonal.

Next we choose $\wt{\mb{U}}$ and $\wt{\mb{V}}$ such that $(\wt{\mb{W}}, \wt{\mb{U}}, \wt{\mb{V}}) \lra \mb{X} \lra (\mb{Y},\mb{Y_1},\mb{Y_2})$
and 
 \begin{align}
K_{\mb{X}|\wt{\mb{W}},\wt{\mb{U}},\mb{Y_1}} 
& = \left(\begin{array}{c c}  (K_{\mb{X}|\wt{\mb{W}},\mb{Y_1}})_{l_1}  &\mf{0}\\ \mf{0}&\min\{ [K_{\mb{X}|\wt{\mb{W}},\mb{Y_1}}]_{l_2}, [K_{\mb{X}|\mb{W},\mb{U},\mb{Y_1}}]_{l_2}\} \end{array} \right)
\notag \\
\label{lb:ineq_u}
& = \left(\begin{array}{c c}  (K_{\mb{X}|\mb{W},\mb{U},\mb{Y_1}})_{l_1}  &\mf{0}\\ \mf{0}&\min\{ [K_{\mb{X}|\wt{\mb{W}},\mb{Y_1}}]_{l_2}, [K_{\mb{X}|\mb{W},\mb{U},\mb{Y_1}}]_{l_2}\} \end{array} \right)
\end{align}
and
\begin{align}
K_{\mb{X}|\wt{\mb{W}},\wt{\mb{V}},\mb{Y_2}} 
&=\left(\begin{array}{c c}  \min\{ (K_{\mb{X}|\wt{\mb{W}},\mb{Y_2}})_{l_1}, (K_{\mb{X}|\mb{W},\mb{V},\mb{Y_2}})_{l_1}\}  &\mf{0}\\ \mf{0}&[K_{\mb{X}|\wt{\mb{W}},\mb{Y_2}}]_{l_2}\end{array} \right)
\notag \\
\label{lb:ineq_v}
&= \left(\begin{array}{c c}  \min\{ (K_{\mb{X}|\wt{\mb{W}},\mb{Y_2}})_{l_1}, (K_{\mb{X}|\mb{W},\mb{V},\mb{Y_2}})_{l_1}\}  &\mf{0}\\ \mf{0}&[K_{\mb{X}|\mb{W},\mb{V},\mb{Y_2}}]_{l_2}\end{array} \right).
\end{align}
Evidently we have  $\wt{\mb{U}}  \perp (\mb{X})_{l_1}| \wt{\mb{W}},\mb{Y_1}$ and
 $ \wt{\mb{V}} \perp [\mb{X}]_{l_2}| \wt{\mb{W}},\mb{Y_2}$, and $(\wt{\mb{W}}, \wt{\mb{U}}, \wt{\mb{V}})$ satisfy the distortion constraints.
 
 Finally, from (\ref{ineq_w2}) we have
 \begin{align}
 K^{-1}_{\mb{X}|\wt{\mb{W}},\mb{Y}} &= K^{-1}_{\mb{X}|\wt{\mb{W}},\mb{Y_1}} + 
 \left(\begin{array}{c c}  A  &\mf{0}\\ \mf{0}&\mf{0}\end{array} \right)
 \notag \\
 &= \left(\begin{array}{c c} (K^{-1}_{\mb{X}|\mb{W},\mb{U},\mb{Y_1}})_{l_1} + A &\mf{0}\\ \mf{0}&[K^{-1}_{\mb{X}|\mb{W},\mb{V},\mb{Y_2}}]_{l_2} - B\end{array} \right)
 \notag \\
 \label{lb:ineq_WY}
 & =  \left(\begin{array}{c c} (K^{-1}_{\mb{X}|\mb{W},\mb{U},\mb{Y}})_{l_1}  &\mf{0}\\ \mf{0}&[K^{-1}_{\mb{X}|\mb{W},\mb{V},\mb{Y}}]_{l_2}\end{array} \right).
 \end{align}
 
 Similarly, 
  \begin{align*}
 K^{-1}_{\mb{X}|\wt{\mb{W}},\wt{\mb{U}},\mb{Y}} &= K^{-1}_{\mb{X}|\wt{\mb{W}},\wt{\mb{U}},\mb{Y_1}} + 
 \left(\begin{array}{c c}  A  &\mf{0}\\ \mf{0}&\mf{0}\end{array} \right).
 \end{align*}
 
 Substituting (\ref{lb:ineq_u}) into this equation gives,
 \begin{align}
  \label{lb:ineq_UY}
 K_{\mb{X}|\wt{\mb{W}},\wt{\mb{U}},\mb{Y}} &= \left(\begin{array}{c c} (K_{\mb{X}|\mb{W},\mb{U},\mb{Y}})_{l_1}  &\mf{0}\\ \mf{0}& \min \{[K_{\mb{X}|\mb{W},\mb{U},\mb{Y}}]_{l_2}, [K_{\mb{X}|\mb{W},\mb{V},\mb{Y}}]_{l_2} \} \end{array} \right).
 \end{align}
  
 Likewise,
 \begin{align}
  \label{lb:ineq_VY}
 K_{\mb{X}|\wt{\mb{W}},\wt{\mb{V}},\mb{Y}} &= \left(\begin{array}{c c} \min \{(K_{\mb{X}|\mb{W},\mb{U},\mb{Y}})_{l_1}, (K_{\mb{X}|\mb{W},\mb{V},\mb{Y}})_{l_1} \}  &\mf{0}\\ \mf{0}&[K_{\mb{X}|\mb{W},\mb{V},\mb{Y}}]_{l_2}  \end{array} \right).
 \end{align}
 
 From (\ref{lb:ineq_u}), (\ref{lb:ineq_v}), (\ref{lb:ineq_UY}) and (\ref{lb:ineq_VY}), we see that the objective for $(\wt{\mb{W}}, \wt{\mb{U}}, \wt{\mb{V}})$ is equal to the objective for $(\mb{W}, \mb{U}, \mb{V})$. 
\end{proof}

By Lemma \ref{lemma:indp},  we can conclude that 
$\wh{R}_{lo}(\mb{D})$  is equal to
$R^{\tr}_{u}(\mb{D})$.

\section{Concluding Remarks}
\label{sec:conclusion}

Recall that we used the Enhanced-Enhancement lower bound to prove the
converse for the $\RDmse$ the $\RDSI$ problems, while for
the $\RDTr$ problem we used the $\mLB$.
It appears that the other lower bounds are in fact insufficient
for the $\RDTr$ problem.
\begin{conjecture}
There exists an instance of the $\RDTr$ such that the Minimax lower bound is strictly greater than the Maximin lower bound (and hence the Enhanced-Enhancement lower bound and the \DSM).
\end{conjecture}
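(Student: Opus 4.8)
A natural route to settling the conjecture affirmatively would be to exhibit a single explicit instance of the $\RDTr$ problem on which the gap is strict. The key leverage is that, by Proposition~\ref{prop:minmaxtrace} together with Proposition~\ref{prop:ach_trace}, the Minimax bound $R_{\mLB}(\mb{D})$ coincides with the rate--distortion function $R^{\tr}(\mb{D})$, whose closed form is given in Theorem~\ref{thm:trace}; so it suffices to produce $(\mb{X}, \mb{Y_1}, \mb{Y_2})$ and $\mb{D}$ for which $R_{\MLB}(\mb{D}) < R^{\tr}(\mb{D})$. Moreover, since the $\mLB$ converse for $\RDTr$ attains its value already at the canonical \emph{Gaussian} enhancement $\mb{Y}$ of Lemma~\ref{lemma:Yspec}, the restriction of $S$ to $S_G$ costs nothing in the Minimax bound, and hence any strict gap $R_{\MLB}(\mb{D}) < R_{\mLB}(\mb{D})$ is attributable entirely to the interchange of $\inf$ and $\max$ in the objective. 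As Gaussian auxiliary variables are optimal for $R_{\MLB}$ (and for $R^{\tr}$), the comparison reduces to a finite-dimensional optimization.

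For the candidate instance I would take $k = 2$ with $K_{\mb{X}|\mb{Y_1}}$ and $K_{\mb{X}|\mb{Y_2}}$ diagonal, arranged so that $l_1 = l_2 = 1$ in the decomposition~(\ref{relation}) --- that is, the first coordinate is ``degraded toward Decoder~2'' and the second ``toward Decoder~1'' --- together with a common scalar trace budget $D_1 = D_2 = D$ in the intermediate range where the optimizing $(\mb{W},\mb{U},\mb{V})$ of Theorem~\ref{thm:trace} is a genuine compromise rather than a corner point. The intuition is that the $\mLB$-optimal scheme isolated in Lemma~\ref{lemma:indp} is forced to use a \emph{single} diagonal common description $\mb{W}$ whose two diagonal entries must straddle both regions at once, because the lone scalar constraints~(\ref{set1:c5})--(\ref{set1:c6}) couple the two coordinates of the time-averaged error covariance. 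In the $R_{lo1}$ branch of $R_{\MLB}$, by contrast, one may fold $\mb{U}$ into $\mb{W}$ and tailor $\mb{W}$ to Decoder~1's budget alone while letting $\mb{V}$ merely meet Decoder~2's budget; symmetrically the $R_{lo2}$ branch can be tailored to Decoder~2. The extra freedom afforded by the $\max$-of-$\inf$ should push both branches strictly below the $\inf$-of-$\max$ whenever the two tailored optima genuinely differ, which the chosen parameter regime is meant to guarantee.

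Concretely, for a fixed $\mb{Y}\in S_G$ one solves the inner minimization of $R_{lo1} = I(\mb{X};\mb{W},\mb{U}|\mb{Y_1}) + I(\mb{X};\mb{V}|\mb{W},\mb{U},\mb{Y})$ in closed form over Gaussian $(\mb{W},\mb{U},\mb{V})$ in $C_l(\mb{D})$: the first term wants $K_{\mb{X}|\mb{W},\mb{U},\mb{Y_1}}$ as large as the trace-$D_1$ constraint allows, and the second wants $K_{\mb{X}|\mb{W},\mb{U},\mb{V},\mb{Y}}$ reduced only to the level that Decoder~2's trace-$D_2$ constraint on $K_{\mb{X}|\mb{W},\mb{V},\mb{Y_2}}$ demands. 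One then maximizes over $\mb{Y}\in S_G$; a data-processing argument should show that $\inf_{C_l} R_{lo1}$ is non-increasing as $\mb{Y}$ is strengthened in the natural ordering, so the supremum is attained at the strongest admissible enhancement, again the $\mb{Y}$ of Lemma~\ref{lemma:Yspec} with $K^{-1}_{\mb{X}|\mb{Y}} = K^{-1}_{\mb{X}|\mb{Y_1}} + \wh{K} = K^{-1}_{\mb{X}|\mb{Y_2}} + \wt{K}$. Evaluating the resulting scalar expression and comparing it against $R^{\tr}_1(\mb{D})$ and $R^{\tr}_2(\mb{D})$ of Theorem~\ref{thm:trace} --- and doing the symmetric computation for the $R_{lo2}$ branch --- should yield strict inequality for a suitable numerical choice of $(A, B, D)$.

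The main obstacle is carrying the outer $\sup_{S_G}$ through rigorously, i.e.\ ruling out that \emph{some} Gaussian enhancement $\mb{Y}$, not just the canonical one, lets $\inf_{C_l} R_{loj}$ climb back up to $R^{\tr}(\mb{D})$; this requires proving the claimed monotonicity of the inner infimum in the enhancement ordering and confirming that the extremal $\mb{Y}$ is the one of Lemma~\ref{lemma:Yspec}. Once that is secured, the residual content of the conjecture is purely that the $\inf$--$\max$ interchange is strict, and the crux becomes showing that in the chosen regime there is no single diagonal $\mb{W}$ simultaneously near-optimal for both decoders' weak regions. I expect this last point to reduce to a short Lagrangian/convexity analysis of the two scalar trace budgets, with an explicit two-scalar numerical example making the separation quantitative; producing clean parameters for which both branch values provably fall strictly below the closed form of Theorem~\ref{thm:trace} is where most of the care will be needed.
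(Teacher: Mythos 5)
First, note that the statement you are asked to prove is a \emph{conjecture} in the paper: the authors do not prove it, and they offer only numerical support. Their supporting argument is essentially the one you sketch: they observe that $R_{\mLB}(\mb{D})=R^{\tr}(\mb{D})$ (the sandwich between Proposition~\ref{prop:minmaxtrace} and Proposition~\ref{prop:ach_trace}), apply the reduction lemmas of Section~\ref{sec:converse} to each minimization in the $\MLB$ separately to obtain the expression (\ref{eq:minimax_lbtr}) with the $\min$ and $\max$ swapped, and then exhibit a bivariate mismatched-degraded instance with $d_1=d_2$ on which solvers report a small numerical gap ($1.7802127$ vs.\ $1.7808784$). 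So your overall route --- a two-dimensional instance with $l_1=l_2=1$, equal trace budgets, and a comparison of the $\max$-of-$\inf$ against the $\inf$-of-$\max$ --- matches the paper's.

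However, two points deserve emphasis. First, the quantity obtained by swapping $\min$ and $\max$ in (\ref{eq:minimax_lbtr}) is a \emph{lower} bound on $R_{\MLB}(\mb{D})$ (it arises from fixing the canonical $\mb{Y}$ of Lemma~\ref{lemma:Yspec} and then lower-bounding each branch), so showing it falls strictly below $R^{\tr}(\mb{D})$ is consistent with the conjecture but does not prove it; to prove strictness one needs an \emph{upper} bound on $R_{\MLB}(\mb{D})$ valid uniformly over all $\mb{Y}\in S_G$, which is exactly the obstacle you correctly identify and which neither you nor the paper overcomes --- this is why the statement remains a conjecture. Second, your monotonicity claim is internally inconsistent: if $\inf_{C_l}R_{lo1}$ is non-increasing as $\mb{Y}$ is strengthened (which is the plausible direction, since strengthening $\mb{Y}$ only shrinks the term $I(\mb{X};\mb{V}|\mb{W},\mb{U},\mb{Y})$), then the supremum over $S_G$ would be approached at the \emph{weakest} admissible enhancement compatible with $\mb{X}\lra\mb{Y}\lra(\mb{Y_1},\mb{Y_2})$, not the strongest, and it is not evident that the $\mb{Y}$ of Lemma~\ref{lemma:Yspec} is extremal in either sense. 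Pinning down the extremal $\mb{Y}$ (or bounding the supremum another way) is the substantive missing step; everything else in your plan reduces, as you say, to a finite-dimensional computation that the paper carries out only numerically.
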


To support this conjecture, one can apply the same arguments in the 
proof of Proposition \ref{prop:minmaxtrace} to the each 
minimization in the $\MLB$ separately. This way we obtain a lower 
bound, which is the same as in (\ref{eq:minimax_lbtr}) except 
that the minimization and maximization are swapped.
Consider the case where the vectors $\mb{X}$, 
$\mb{Y_1}$, $\mb{Y_2}$ are bivariate Gaussian random vectors such that 
 $$ K_{\mb{X}|\mb{Y_1}} = \left( \begin{array}{c c} \frac{4}{9} & 0 \\ 0 &\frac{4}{9} \end{array}\right),
\ \ \ 
  K_{\mb{X}|\mb{Y_2}} = \left( \begin{array}{c c} \frac{4}{17} & 0 \\  0 & \frac{4}{5} \end{array}\right)$$
and the distortion constraints are $d_1 = d_2 = 0.15$.
When we use \textit{CVX}, a package for solving convex programs
\cite{cvx,cvx2}, and the \textit{sqp} function of Octave \cite{octave} to 
solve for the minimum rate using Theorem~\ref{thm:trace} 
we get a solution of $1.7808784$ 
while we get $1.7802127$ from both solvers 
when we swap the $\min$ and $\max$ in 
(\ref{eq:minimax_lbtr}). Thus it appears
that there are instances for which the added strength provided
by the $\mLB$ is necessary.

\section{Acknowledgment}
This work was supported by Intel, Cisco, and Verizon under the Video-Aware Wireless Networks (VAWN) program and by the National Science Foundation under grants CCF-1117128 and CCF-1218578.
\begin{appendices}
\section{ }
\label{app:ELB}

The aim of this appendix is to prove the following lemma.
\begin{lemma}[Gaussian Variance-Drop Lemma]
\label{lemma:inequality}
 Let $(\mb{W},\mb{W_G},\mb{X},\wt{\mb{Z}},\mb{Z})$ be random vectors
such that $(\mb{W_G},\mb{X},\wt{\mb{Z}},\mb{Z})$ is jointly Gaussian, $(\mb{W},\mb{W_G}) \lra \mb{X} \lra \wt{\mb{Z}} \lra\mb{Z}$
and $K_{\mb{X}|\mb{W},\wt{\mb{Z}}} \succ 0$.
If $K_{\mb{X}|\mb{W_G},\mb{Z}} = K_{\mb{X}|\mb{W},\mb{Z}}$ then $K_{\mb{X}|\mb{W},\wt{\mb{Z}}} \preceq K_{\mb{X}|\mb{W_G},\wt{\mb{Z}}}$. Also,  if $K_{\mb{X}|\mb{W},\wt{\mb{Z}}} = K_{\mb{X}|\mb{W_G},\wt{\mb{Z}}}$ then $K_{\mb{X}|\mb{W_G},\mb{Z}} \preceq K_{\mb{X}|\mb{W},\mb{Z}}$.
\end{lemma}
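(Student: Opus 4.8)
My plan is to reduce the two claims to a single positive semidefinite inequality comparing the ``Fisher‑information gain'' of the general auxiliary $\mb{W}$ with that of its Gaussian surrogate $\mb{W_G}$, and then to prove that inequality by an I--MMSE / de~Bruijn derivative argument. Throughout I read $K_{\mb{X}|\cdot}$ as the MMSE error‑covariance matrix $E[(\mb{X}-E[\mb{X}|\cdot])(\mb{X}-E[\mb{X}|\cdot])^{T}]$, which is a genuine conditional covariance when the conditioning variables are jointly Gaussian with $\mb{X}$. First I would pass to a physically degraded Gaussian representation: since $p(\wt{\mb{Z}}|\mb{X})$ and $p(\mb{Z}|\mb{X})$ are Gaussian, their minimal sufficient statistics for $\mb{X}$ are linear and depend only on the channels, not on the law of $(\mb{W},\mb{W_G},\mb{X})$, so replacing $\wt{\mb{Z}},\mb{Z}$ by these statistics changes none of the conditional covariances in the lemma and keeps the chain $(\mb{W},\mb{W_G})\lra\mb{X}\lra\wt{\mb{Z}}\lra\mb{Z}$. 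After this reduction I may take $\wt{\mb{Z}}=\mb{X}+\mb{N}_1$ and, because $\mb{Z}$ is degraded relative to $\wt{\mb{Z}}$ (equivalently, the associated Fisher‑information gain is PSD), $\mb{Z}=\mb{X}+\mb{N}_1+\mb{N}_2$, with $\mb{N}_1,\mb{N}_2$ independent zero‑mean Gaussian vectors of covariances $\Sigma_1$ and $\Delta\succeq0$, both independent of $(\mb{W},\mb{W_G},\mb{X})$ (rank deficiency is absorbed by restricting to a subspace or by continuity).

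The key structural observation is that both claims follow from the single inequality
\[
K_{\mb{X}|\mb{W},\wt{\mb{Z}}}^{-1}-K_{\mb{X}|\mb{W},\mb{Z}}^{-1}\ \succeq\ K_{\mb{X}|\mb{W_G},\wt{\mb{Z}}}^{-1}-K_{\mb{X}|\mb{W_G},\mb{Z}}^{-1},
\]
which I would prove for an arbitrary joint law of $(\mb{W},\mb{X})$. Indeed, under the first hypothesis $K_{\mb{X}|\mb{W_G},\mb{Z}}=K_{\mb{X}|\mb{W},\mb{Z}}$ the right‑hand side is $K_{\mb{X}|\mb{W_G},\wt{\mb{Z}}}^{-1}-K_{\mb{X}|\mb{W},\mb{Z}}^{-1}$, and cancelling $K_{\mb{X}|\mb{W},\mb{Z}}^{-1}$ gives $K_{\mb{X}|\mb{W},\wt{\mb{Z}}}^{-1}\succeq K_{\mb{X}|\mb{W_G},\wt{\mb{Z}}}^{-1}$, i.e.\ $K_{\mb{X}|\mb{W},\wt{\mb{Z}}}\preceq K_{\mb{X}|\mb{W_G},\wt{\mb{Z}}}$; under the second hypothesis $K_{\mb{X}|\mb{W},\wt{\mb{Z}}}=K_{\mb{X}|\mb{W_G},\wt{\mb{Z}}}$ the first term on each side cancels and one gets $K_{\mb{X}|\mb{W_G},\mb{Z}}^{-1}\succeq K_{\mb{X}|\mb{W},\mb{Z}}^{-1}$, i.e.\ $K_{\mb{X}|\mb{W_G},\mb{Z}}\preceq K_{\mb{X}|\mb{W},\mb{Z}}$. (The hypothesis $K_{\mb{X}|\mb{W},\wt{\mb{Z}}}\succ0$ is what guarantees invertibility of all the matrices below.)

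To prove the displayed inequality I would interpolate the noise: for $t\in[0,1]$ set $\mb{Y}_t=\mb{X}+\mb{N}_1+\mb{N}_{2,t}$ with $\mb{N}_{2,t}$ zero‑mean Gaussian of covariance $t\Delta$ independent of everything else, so that, jointly with $(\mb{W},\mb{X})$, $\mb{Y}_0$ and $\mb{Y}_1$ have the laws of $\wt{\mb{Z}}$ and $\mb{Z}$. Let $P(t)=K_{\mb{X}|\mb{W},\mb{Y}_t}$, let $C_t$ be the random covariance of $\mb{X}$ given $(\mb{W},\mb{Y}_t)$ (so $P(t)=E[C_t]$), and set $G_t=(\Sigma_1+t\Delta)^{-1}\Delta(\Sigma_1+t\Delta)^{-1}\succeq0$. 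The matrix de~Bruijn / I--MMSE identity gives $\dot P(t)=E[C_tG_tC_t]$, hence $-\tfrac{d}{dt}P(t)^{-1}=P(t)^{-1}E[C_tG_tC_t]P(t)^{-1}$. Writing $G_t=H_t^{2}$ with $H_t=G_t^{1/2}$, the random matrix $M_t:=C_tH_t$ satisfies $E[M_tM_t^{T}]=E[C_tG_tC_t]$ and $E[M_tM_t^{T}]\succeq E[M_t]E[M_t]^{T}=E[C_t]G_tE[C_t]$ (the covariance of $M_t$ is PSD); combined with $P(t)=E[C_t]$ this gives $-\tfrac{d}{dt}P(t)^{-1}\succeq G_t$. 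Applied to $\mb{W_G}$, where $C_t$ is deterministic and equals $P_G(t):=K_{\mb{X}|\mb{W_G},\mb{Y}_t}$, the same computation yields $-\tfrac{d}{dt}P_G(t)^{-1}=G_t$ exactly. Subtracting and integrating over $t\in[0,1]$, and using that $\mb{Y}_0,\mb{Y}_1$ realize $\wt{\mb{Z}},\mb{Z}$, produces precisely the displayed inequality.

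The technical heart — and the step I expect to be the main obstacle — is the matrix I--MMSE / de~Bruijn identity $\dot P(t)=E[C_tG_tC_t]$: one must justify differentiating the MMSE matrix under the expectation and establish enough regularity of the conditional mean in the Gaussian‑noise parameter (this is where the ``variance drop'' enters quantitatively). Once that identity is available, the operator‑Jensen step $E[C_tG_tC_t]\succeq E[C_t]G_tE[C_t]$ and the integration are routine. A lesser care point is phrasing the sufficient‑statistic reduction of Step~1 so that it is simultaneously valid for the possibly non‑Gaussian law of $(\mb{W},\mb{X},\wt{\mb{Z}},\mb{Z})$ and for the Gaussian law of $(\mb{W_G},\mb{X},\wt{\mb{Z}},\mb{Z})$, and handling degenerate (rank‑deficient) covariances by a limiting argument.
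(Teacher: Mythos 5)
Your proposal is correct in outline, and its pivotal reduction is in fact identical to the paper's: both arguments boil the two claims down to the single ``superadditivity of information gain'' inequality $K^{-1}_{\mb{X}|\mb{W},\wt{\mb{Z}}}-K^{-1}_{\mb{X}|\mb{W},\mb{Z}}\succeq K^{-1}_{\mb{X}|\mb{W_G},\wt{\mb{Z}}}-K^{-1}_{\mb{X}|\mb{W_G},\mb{Z}}$, from which the two implications follow by the same cancellation you describe (this is exactly inequality (\ref{lemmainequality}) in Appendix~\ref{app:ELB}). Where you genuinely diverge is in how that inequality is established. The paper first invokes Lemma~\ref{lemma:hatZ} to build a Gaussian $\wh{\mb{Z}}$ with $\wh{\mb{Z}}\lra\mb{X}\lra\mb{Z}$ that is informationally equivalent to $\wt{\mb{Z}}$ given $\mb{Z}$, and then upper-bounds $K_{\mb{X}|\mb{W},\mb{Z},\wh{\mb{Z}}}$ by the error covariance of the \emph{linear} estimator of $\mb{X}$ from the pair $(E[\mb{X}|\mb{W},\mb{Z}],\wh{\mb{Z}})$, which the matrix inversion lemma evaluates in closed form as $(K^{-1}_{\mb{X}|\mb{W},\mb{Z}}+K^{-1}_{\mb{X}|\wh{\mb{Z}}}-K^{-1}_{\mb{X}})^{-1}$; the whole argument is static linear algebra plus ``MMSE $\preceq$ LMMSE,'' with no differentiation and no regularity issues. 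Your route instead interpolates the noise and integrates the matrix I--MMSE/de~Bruijn identity $\dot P(t)=E[C_tG_tC_t]$ together with the operator-Jensen step $E[C_tG_tC_t]\succeq E[C_t]G_tE[C_t]$. The derivation of both claims from the key inequality, the sufficient-statistic reduction, and the Jensen step are all sound, and your approach has the virtue of exposing the ``variance drop'' quantitatively as the covariance of the random conditional covariance $C_t$; but the load-bearing step --- the matrix derivative identity and the interchange of differentiation and expectation for a non-Gaussian $\mb{W}$ --- is asserted rather than proven, and you correctly flag it as the main obstacle. That identity is true and citable, so the proposal stands as a valid alternative proof, but note that the paper's LMMSE argument obtains the same inequality without ever needing it.
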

This lemma can be interpreted as follows. We view $\mb{X}$ as 
an underlying source of interest and $\mb{W}$, $\mb{W}_G$, $\tilde{\mb{Z}}$,
and $\mb{Z}$ as ``noisy observations'' of $\mb{X}$. All except possibly
$\mb{W}$ are jointly Gaussian. If $(\mb{W},\mb{Z})$ and $(\mb{W}_G,\mb{Z})$
are equally-good observations, in terms of their error covariance matrix,
then $(\mb{W},\tilde{\mb{Z}})$ can only be better than
$(\mb{W}_G,\tilde{\mb{Z}})$. That is, replacing $\mb{Z}$ with 
$\tilde{\mb{Z}}$ results in a ``variance drop,'' and this drop
is smallest in the Gaussian case.

To prove this result we will make use of the following technical lemma.
\begin{lemma}
\label{lemma:hatZ}
 Let $(\mb{X},\wt{\mb{Z}},\mb{Z})$ be jointly Gaussian random vectors
such that  $\mb{X} \lra \wt{\mb{Z}} \lra{\mb{Z}}$
and $K_{\mb{X}|\wt{\mb{Z}}} \succ 0$.
We can form a $\wh{\mb{Z}}$ such that  $(\mb{X},\wt{\mb{Z}},\mb{Z},\wh{\mb{Z}})$ is jointly Gaussian,  $\wh{\mb{Z}} \lra \mb{X} \lra\mb{Z}$, and  $E[\mb{X} | \mb{Z}, \wh{\mb{Z}}] = E[\mb{X} | \mb{Z},\wt{\mb{Z}}]$ almost surely.
\end{lemma}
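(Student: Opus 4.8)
The plan is to build $\wh{\mb{Z}}$ directly as a deterministic linear function of $(\wt{\mb{Z}},\mb{Z})$, which makes $(\mb{X},\wt{\mb{Z}},\mb{Z},\wh{\mb{Z}})$ jointly Gaussian for free; the form I would use is $\wh{\mb{Z}}=\mb{L}+F\mb{Z}$, where $\mb{L}:=E[\mb{X}|\wt{\mb{Z}}]$ and the matrix $F$ is to be chosen. (One cannot instead take $\wh{\mb{Z}}=G\mb{X}+\mb{N}$ with $\mb{N}$ fresh noise independent of everything: then $E[\mb{X}|\mb{Z},\wh{\mb{Z}}]$ would genuinely depend on $\mb{N}$ and so could not equal $E[\mb{X}|\mb{Z},\wt{\mb{Z}}]$, which does not; building $\wh{\mb{Z}}$ out of $\mb{L}$ and $\mb{Z}$ sidesteps this.)

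First I would record two facts about $\mb{L}$. Since $\mb{X}\lra\wt{\mb{Z}}\lra\mb{Z}$, conditional independence gives $E[\mb{X}|\mb{Z},\wt{\mb{Z}}]=E[\mb{X}|\wt{\mb{Z}}]=\mb{L}$; and because $(\mb{X},\wt{\mb{Z}},\mb{Z})$ is jointly Gaussian, the estimation error $\mb{E}:=\mb{X}-\mb{L}$ is independent of $(\wt{\mb{Z}},\mb{Z})$, with $K_{\mb{E}}=K_{\mb{X}|\wt{\mb{Z}}}$. Now fix any $F$ and set $\wh{\mb{Z}}=\mb{L}+F\mb{Z}$. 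Since $\wh{\mb{Z}}$ is a function of $(\wt{\mb{Z}},\mb{Z})$, $\mb{E}$ is independent of $(\mb{Z},\wh{\mb{Z}})$; and since $\mb{L}=\wh{\mb{Z}}-F\mb{Z}$ is $\sigma(\mb{Z},\wh{\mb{Z}})$-measurable, we get $E[\mb{X}|\mb{Z},\wh{\mb{Z}}]=E[\mb{L}|\mb{Z},\wh{\mb{Z}}]+E[\mb{E}|\mb{Z},\wh{\mb{Z}}]=\mb{L}+0=E[\mb{X}|\mb{Z},\wt{\mb{Z}}]$ almost surely. Thus the estimator identity holds for \emph{every} choice of $F$, and $F$ only has to be chosen so that the Markov chain $\wh{\mb{Z}}\lra\mb{X}\lra\mb{Z}$ holds.

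For jointly Gaussian vectors this Markov chain is equivalent to $K_{\wh{\mb{Z}}\mb{Z}|\mb{X}}=0$, and $K_{\wh{\mb{Z}}\mb{Z}|\mb{X}}=K_{\mb{L}\mb{Z}|\mb{X}}+F\,K_{\mb{Z}|\mb{X}}$, so it suffices to take $F:=-K_{\mb{L}\mb{Z}|\mb{X}}\,K_{\mb{Z}|\mb{X}}^{\dagger}$. This is valid even when $K_{\mb{Z}|\mb{X}}$ is singular: if $v\in\ker K_{\mb{Z}|\mb{X}}$ then $v^{T}(\mb{Z}-E[\mb{Z}|\mb{X}])$ has zero variance, hence $v^{T}\mb{Z}$ is almost surely a function of $\mb{X}$, which forces $K_{\mb{L}\mb{Z}|\mb{X}}v=0$; so the columns of $K_{\mb{L}\mb{Z}|\mb{X}}$ lie in the range of $K_{\mb{Z}|\mb{X}}$, and therefore $F\,K_{\mb{Z}|\mb{X}}=-K_{\mb{L}\mb{Z}|\mb{X}}$ holds exactly. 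With this $F$, the vector $\wh{\mb{Z}}=\mb{L}+F\mb{Z}$ is jointly Gaussian with $(\mb{X},\wt{\mb{Z}},\mb{Z})$, satisfies $\wh{\mb{Z}}\lra\mb{X}\lra\mb{Z}$, and realizes $E[\mb{X}|\mb{Z},\wh{\mb{Z}}]=E[\mb{X}|\mb{Z},\wt{\mb{Z}}]$, which is all that is required. The only genuine step is the choice of the ansatz $\mb{L}+F\mb{Z}$: it makes the estimator condition automatic and collapses the Markov requirement to a single solvable linear equation for $F$, leaving only routine covariance bookkeeping (and the small-print check of the pseudoinverse in the degenerate case).
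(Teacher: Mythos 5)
Your proof is correct, but it takes a genuinely different route from the paper's. The paper first manufactures an auxiliary $\bar{\mb{Z}} = A_{\bar{z}}\mb{X}+\mb{N}_{\bar{z}}$ with noise independent of $(\mb{X},\mb{Z})$, tuned so that $K_{\mb{X}|\mb{Z},\bar{\mb{Z}}}=K_{\mb{X}|\mb{Z},\wt{\mb{Z}}}$; it then uses the orthogonality principle to show that $(\mb{X},\mb{Z},E[\mb{X}|\mb{Z},\bar{\mb{Z}}])$ and $(\mb{X},\mb{Z},E[\mb{X}|\mb{Z},\wt{\mb{Z}}])$ are equal in distribution, and finally transplants $\bar{\mb{Z}}$ onto the original probability space by regenerating it from $E[\mb{X}|\mb{Z},\wt{\mb{Z}}]$ with fresh independent noise --- a coupling/distribution-matching argument. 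You instead construct $\wh{\mb{Z}}$ explicitly as the deterministic linear statistic $E[\mb{X}|\wt{\mb{Z}}]+F\mb{Z}$, exploit the sufficiency of $\mb{L}=E[\mb{X}|\wt{\mb{Z}}]$ (which equals $E[\mb{X}|\mb{Z},\wt{\mb{Z}}]$ by the hypothesis $\mb{X}\lra\wt{\mb{Z}}\lra\mb{Z}$) to make the estimator identity hold for every $F$, and reduce the remaining Markov requirement to the single linear equation $K_{\mb{L}\mb{Z}|\mb{X}}+FK_{\mb{Z}|\mb{X}}=0$, solved with a pseudoinverse; your kernel argument correctly handles the degenerate case (a minor wording slip: it is the rows, not the columns, of $K_{\mb{L}\mb{Z}|\mb{X}}$ that lie in the range of $K_{\mb{Z}|\mb{X}}$, but the claim you actually verify, $K_{\mb{L}\mb{Z}|\mb{X}}v=0$ for $v$ in the kernel, is exactly what is needed). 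Your version is more elementary and avoids both the distributional-equality step and the enlargement of the probability space with auxiliary noise. What the paper's construction buys is that $\wh{\mb{Z}}$ comes packaged in the innovations form $A_{\wh{z}}\mb{X}+\mb{N}_{\wh{z}}$ with $\mb{N}_{\wh{z}}$ independent of $(\mb{X},\mb{Z})$, which is the representation invoked immediately afterward in the proof of Lemma~\ref{lemma:inequality} and in the converse for $\RDTr$; your $\wh{\mb{Z}}$ admits the same representation (joint Gaussianity together with $\wh{\mb{Z}}\lra\mb{X}\lra\mb{Z}$ imply that $\wh{\mb{Z}}-E[\wh{\mb{Z}}|\mb{X}]$ is independent of $(\mb{X},\mb{Z})$), but that one extra observation would need to be recorded before your lemma can be slotted into those later arguments.
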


\begin{proof}
Given such $(\mb{X},\wt{\mb{Z}},\mb{Z})$, we can create a $\bar{\mb{Z}}$ such that $\bar{\mb{Z}} = A_{\bar{z}}\mb{X} + \mb{N_{\bar{z}}}$ where $\mb{N_{\bar{z}}}$ is Gaussian, independent of $(\mb{X},\mb{Z})$ and $K_{\mb{X}|\mb{Z}, \mb{\bar{Z}}} = K_{\mb{X}|\mb{Z}, \mb{\wt{Z}}} = K_{\mb{X}| \mb{\wt{Z}}}$. Since $(\mb{X}, \mb{Z}, \mb{\bar{Z}}, E[\mb{X}| \mb{Z}, \mb{\bar{Z}}])$ are jointly Gaussian, we can write
\begin{align*}
\bar{\mb{Z}} = B \left(\begin{array}{c}  \mb{X}\\  \mb{Z} \\
E[\mb{X}| \mb{Z}, \mb{\bar{Z}}] \end{array} \right)
+ \mb{N_{\bar{z}}}',
\end{align*}
for some matrix $B$ where $\mb{N_{\bar{z}}}'$ is independent of $(\mb{X}, \mb{Z},  E[\mb{X}| \mb{Z}, \mb{\bar{Z}}])$ and  Gaussian with some covariance matrix $K_{\mb{N_{\bar{z}}}'}$.

Observe that the orthogonality principle and the equation $K_{\mb{X}|\mb{Z}, \mb{\bar{Z}}} = K_{\mb{X}|\mb{Z}, \mb{\wt{Z}}}$ together imply that
\begin{align}
\label{lemma:nvar_eq1}
& K_{E[\mb{X}|\mb{Z}, \mb{\bar{Z}}]} = K_{E[\mb{X}|\mb{Z}, \mb{\wt{Z}}]}.
\end{align}
Orthogonality also implies that 
$K_{\mb{X}E[\mb{X}|\mb{Z}, \mb{\bar{Z}}]} = K_{E[\mb{X}|\mb{Z}, \mb{\bar{Z}}]}$ and $K_{\mb{X}E[\mb{X}|\mb{Z}, \mb{\wt{Z}}]} = K_{E[\mb{X}|\mb{Z}, \mb{\wt{Z}}]}$. Hence,
\begin{align}
\label{lemma:nvar_eq2}
&K_{\mb{X}E[\mb{X}|\mb{Z}, \mb{\bar{Z}}]} = K_{\mb{X}E[\mb{X}|\mb{Z}, \mb{\wt{Z}}]}.
\end{align}
Likewise, orthogonality implies that $K_{E[\mb{X}|\mb{Z}, \mb{\wt{Z}}]\mb{Z}} = K_{\mb{X}\mb{Z}}$ and $K_{E[\mb{X}|\mb{Z}, \mb{\bar{Z}}]\mb{Z}} = K_{\mb{X}\mb{Z}}$. Thus, 
\begin{align}
\label{lemma:nvar_eq3}
K_{E[\mb{X}|\mb{Z}, \mb{\wt{Z}}]\mb{Z}} = K_{E[\mb{X}|\mb{Z}, \mb{\bar{Z}}]\mb{Z}}.
\end{align}

Then (\ref{lemma:nvar_eq1}), (\ref{lemma:nvar_eq2}), and (\ref{lemma:nvar_eq3}) imply that $(\mb{X},\mb{Z},E[\mb{X}|\mb{Z}, \mb{\bar{Z}}])$ and $(\mb{X},\mb{Z},E[\mb{X}|\mb{Z}, \mb{\wt{Z}}])$ are equal in distribution. Now given $(\mb{X}, \mb{\wt{Z}}, \mb{Z})$, create $\mb{\wh{Z}}$ via
\begin{align*}
\mb{\wh{Z}} = B \left(\begin{array}{c}  \mb{X}\\  \mb{Z} \\
E[\mb{X}| \mb{Z}, \mb{\wt{Z}}] \end{array} \right)
+ \mb{N_{\wh{z}}}',
\end{align*}
where $\mb{N_{\wh{z}}}'$ is Gaussian with covariance matrix $K_{\mb{N_{\bar{z}}}'}$ and is independent of $(\mb{X}, E[\mb{X}|\mb{Z}, \mb{\wt{Z}}], \mb{Z})$. Then,
\begin{align*}
(\mb{X},\mb{Z},\mb{\wh{Z}},E[\mb{X}|\mb{Z}, \mb{\wt{Z}}]) = (\mb{X},\mb{Z},\mb{\bar{Z}},E[\mb{X}|\mb{Z}, \mb{\bar{Z}}]), \mbox{in distribution}
\end{align*}
and so  $\wh{\mb{Z}} \lra \mb{X} \lra\mb{Z}$, and  $E[\mb{X} | \mb{Z}, \wh{\mb{Z}}] = E[\mb{X} | \mb{Z},\wt{\mb{Z}}]$ almost surely.
\end{proof}

\begin{proof}[Proof of Lemma~\ref{lemma:inequality}]
Let $(\mb{W},\mb{W_G},\mb{X},\wt{\mb{Z}}, \mb{Z})$ be as in the statement.
Then by Lemma \ref{lemma:hatZ}, we can form a random vector $\wh{\mb{Z}} = A_{\wh{z}}\mb{X} + \mb{N_{\wh{z}}}$, where $\mb{N_{\wh{z}}}$ is independent of $(\mb{X},\mb{Z})$,  such that  $\wh{\mb{Z}} \lra \mb{X}\lra\mb{Z}$, $K_{\mb{X}|\mb{Z},\wh{\mb{Z}}} = K_{\mb{X}|\mb{Z},\wt{\mb{Z}}}=K_{\mb{X}|\wt{\mb{Z}}}$ and $E[\mb{X}|\mb{Z}, \wt{\mb{Z}}] = E[\mb{X}| \mb{Z}, \wh{\mb{Z}}]$ almost surely. Since for any $\mb{W}$ such that $\mb{W} \lra \mb{X} \lra (\wt{\mb{Z}}, \wh{\mb{Z}}, \mb{Z})$ we have $K_{\mb{X}|\mb{W},\wt{\mb{Z}}} = K_{\mb{X}|\mb{W},\mb{Z},\wt{\mb{Z}}} =
K_{\mb{X}|\mb{W},E[\mb{X}|\mb{Z},\wt{\mb{Z}}]} = K_{\mb{X}|\mb{W},E[\mb{X}|\mb{Z},\wh{\mb{Z}}]} = K_{\mb{X}|\mb{W},\mb{Z},\wh{\mb{Z}}}$, it suffices to prove the result for the special case in which $\wt{\mb{Z}} = (\wh{\mb{Z}},\mb{Z})$ so we shall assume that $\wt{\mb{Z}}$ has this form.
 Also, let $\wh{\mb{X}} = E[\mb{X}|\mb{W},\mb{Z}]$. We will write the covariance matrix of the best linear estimate of $\mb{X}$ using $\wh{\mb{X}}$ and $\wh{\mb{Z}}$ in terms of $K_{\mb{X}|\mb{W},\mb{Z}}$ and $K_{\mb{X}|\wh{\mb{Z}}}$ by applying the procedure of \cite{Wang}.  Then we can write
\begin{align*}
&K(\mb{X},\wh{\mb{X}},\wh{\mb{Z}})=
 \left( \begin{array}{c c c} K_\mb{X} &K_{\wh{\mb{X}}} & K_\mb{X}A^{T}_{\wh{z}}\\ K_{\wh{\mb{X}}}& K_{\wh{\mb{X}}} &K_{\wh{\mb{X}}}A^{T}_{\wh{z}} \\ A_{\wh{z}}K_\mb{X}&A_{\wh{z}}K_{\wh{\mb{X}}} &A_{\wh{z}}K_\mb{X}A^{T}_{\wh{z}} + K_{\mb{N_{\wh{z}}}}\end{array} \right)
\end{align*}
where
$K_{\wh{\mb{X}}}= (K_{\mb{X}}-K_{\mb{X}|\mb{W},\mb{Z}})$. Note that $K_{\wh{\mb{X}}}$ may not be invertible, meaning that some of the elements of $\wh{\mb{X}}$ can be determined as linear combinations of  others. Thus it is enough to consider only the components of $\wh{\mb{X}}$ or linear combinations of them, denoted by $\bar{\mb{X}} = Q\wh{\mb{X}}$, such that the resulting covariance matrix, denoted as $K_{\bar{\mb{X}}} = QK_{\wh{\mb{X}}}Q^T$, is invertible. Then we can write,
\begin{align*}
&K(\mb{X},\bar{\mb{X}},\wh{\mb{Z}})=
 \left( \begin{array}{c c c} K_\mb{X} &K_{\wh{\mb{X}}}Q^T & K_\mb{X}A^{T}_{\wh{z}}\\ QK_{\wh{\mb{X}}}& K_{\bar{\mb{X}}} &QK_{\wh{\mb{X}}}A^{T}_{\wh{z}} \\ A_{\wh{z}}K_\mb{X}&A_{\wh{z}}K_{\wh{\mb{X}}}Q^T &A_{\wh{z}}K_\mb{X}A^{T}_{\wh{z}} + K_{\mb{N_{\wh{z}}}}\end{array} \right).
\end{align*}
The covariance matrix of a linear estimate of $\mb{X}$ using $\wh{\mb{X}}$ and $\wh{\mb{Z}}$ is
 \begin{align*}
&K_{(\mb{X}|\wh{\mb{X}},\wh{\mb{Z}})_L}= K_{(\mb{X}|\bar{\mb{X}},\wh{\mb{Z}})_L} =
K_\mb{X} - \left( \begin{array}{c c}  K_{\wh{\mb{X}}}Q^T & K_\mb{X}A^{T}_{\wh{z}} \end{array} \right)C^{-1}\left( \begin{array}{c}  QK_{\wh{\mb{X}}}\\ A_{\wh{z}}K_\mb{X}\end{array} \right)
\end{align*}
where 
\begin{align*}
C=\left( \begin{array}{c c } K_{\bar{\mb{X}}} &QK_{\wh{\mb{X}}}A^{T}_{\wh{z}} \\ A_{\wh{z}}K_{\wh{\mb{X}}}Q^T &A_{\wh{z}}K_\mb{X}A^{T}_{\wh{z}} + K_{\mb{N_{\wh{z}}}}\end{array} \right).
\end{align*}

By the matrix inversion lemma we have
\begin{align*}
&K^{-1}_{(\mb{X}|\wh{\mb{X}},\wh{\mb{Z}})_L}=K^{-1}_\mb{X} +
 K^{-1}_\mb{X} \left( \begin{array}{c c}  K_{\wh{\mb{X}}}Q^T & K_\mb{X}A^{T}_{\wh{z}} \end{array} \right)E^{-1}\left( \begin{array}{c}  QK_{\wh{\mb{X}}}\\ A_{\wh{z}}K_\mb{X}\end{array} \right)K^{-1}_\mb{X} 
\end{align*}
where 
\begin{align*}
E &= C - \left(\begin{array}{c}  QK_{\wh{\mb{X}}}\\ A_{\wh{z}}K_\mb{X}\end{array} \right)K^{-1}_\mb{X} \left( \begin{array}{c c}  K_{\wh{\mb{X}}}Q^T& K_\mb{X}A^{T}_{\wh{z}} \end{array} \right)
 \\
&= C - \left(\begin{array}{c}  Q(I-K_{\mb{X}|\mb{W},\mb{Z}}K^{-1}_\mb{X})\\ A_{\wh{z}}\end{array} \right) \left( \begin{array}{c c}  K_{\wh{\mb{X}}}Q^T & K_\mb{X}A^{T}_{\wh{z}} \end{array} \right) 
\\
&= C- \left(\begin{array}{c c}  Q(K_{\wh{\mb{X}}}-K_{\mb{X}|\mb{W},\mb{Z}}K^{-1}_\mb{X}K_{\wh{\mb{X}}})Q^T & Q\wh{K}_xA^{T}_{\wh{z}}  \\ A_{\wh{z}}K_{\wh{\mb{X}}}Q^T & A_{\wh{z}}K_{\mb{X}}A^{T}_{\wh{z}}\end{array} \right) 
\\
&= \left( \begin{array}{c c } K_{\bar{\mb{X}}} &QK_{\wh{\mb{X}}}A^{T}_{\wh{z}} \\ A_{\wh{z}}K_{\wh{\mb{X}}}Q^T &A_{\wh{z}}K_\mb{X}A^{T}_{\wh{z}} + K_{\mb{N_{\wh{z}}}}\end{array} \right)
- \left(\begin{array}{c c}  Q(K_{\wh{\mb{X}}}-K_{\mb{X}|\mb{W},\mb{Z}}K^{-1}_\mb{X}K_{\wh{\mb{X}}})Q^T & QK_{\wh{\mb{X}}}A^{T}_{\wh{z}}  \\ A_{\wh{z}}K_{\wh{\mb{X}}}Q^T & A_{\wh{z}}K_{\mb{X}}A^{T}_{\wh{z}}\end{array} \right) 
 \\
&= \left( \begin{array}{c c } Q(K_{\mb{X}|\mb{W},\mb{Z}} -K_{\mb{X}|\mb{W},\mb{Z}}K^{-1}_\mb{X}K_{\mb{X}|\mb{W},\mb{Z}})Q^T &0 \\ 0&K_{\mb{N_{\wh{z}}}}\end{array} \right).
\end{align*}

Then
\begin{align}
K^{-1}_{(\mb{X}|\wh{\mb{X}},\wh{\mb{Z}})_L}&=K^{-1}_\mb{X} +
 K^{-1}_\mb{X} \left( \begin{array}{c c}  K_{\wh{\mb{X}}}Q^T & K_\mb{X}A^{T}_{\wh{z}} \end{array} \right) 
\left( \begin{array}{c c } K_{(\bar{\mb{X}}|\mb{X})_L}^{-1}&0 \\ 0&K^{-1}_{\wh{\mb{Z}}|\mb{X}}\end{array} \right) 
\left( \begin{array}{c}  QK_{\wh{\mb{X}}}\\ A_{\wh{z}}K_\mb{X}\end{array} \right)K^{-1}_\mb{X}  
\notag \\
&=K^{-1}_\mb{X} + K^{-1}_\mb{X} \left( \begin{array}{c c}  K_{\wh{\mb{X}}}Q^TK_{(\bar{\mb{X}}|\mb{X})_L}^{-1}& K_\mb{X}A^{T}_{\wh{z}}K^{-1}_{\wh{\mb{Z}}|\mb{X}} \end{array} \right) 
\left( \begin{array}{c}  QK_{\wh{\mb{X}}}\\ A_{\wh{z}}K_\mb{X}\end{array} \right)K^{-1}_\mb{X}  
\notag \\
&= K^{-1}_\mb{X} + K^{-1}_\mb{X} \left( \begin{array}{c c}  K_{\wh{\mb{X}}}Q^TK_{(\bar{\mb{X}}|\mb{X})_L}^{-1}QK_{\wh{\mb{X}}} & K_\mb{X}A^{T}_{\wh{z}}K^{-1}_{\wh{\mb{Z}}|\mb{X}}A_{\wh{z}}K_\mb{X} \end{array} \right) K^{-1}_\mb{X}
 \notag \\
&= K^{-1}_{(\mb{X}|\bar{\mb{X}})_L} + K_{\mb{X}|\wh{\mb{Z}}} - K^{-1}_\mb{X}, \mbox{ by matrix inversion lemma}
 \notag \\
&= K^{-1}_{\mb{X}|\bar{\mb{X}}} + K_{\mb{X}|\wh{\mb{Z}}} - K^{-1}_\mb{X}
\notag \\
&= K^{-1}_{\mb{X}|\wh{\mb{X}}} + K_{\mb{X}|\wh{\mb{Z}}} - K^{-1}_\mb{X}
\notag \\
&= K^{-1}_{\mb{X}|\mb{W},\mb{Z}} + K_{\mb{X}|\wh{\mb{Z}}} - K^{-1}_\mb{X}.
\notag
\end{align}

Hence we have 
\begin{align}
K^{-1}_{(\mb{X}|\wh{\mb{X}},\wh{\mb{Z}})_L}&=
 K^{-1}_{\mb{X}|\mb{W},\mb{Z}}+K^{-1}_{\mb{X}|\wh{\mb{Z}}} -K^{-1}_{\mb{X}} 
+ K^{-1}_{\mb{X}|\mb{W_G},Z} -K^{-1}_{\mb{X}|\mb{W_G},\mb{Z}} \notag \\
\label{cor}
&=  K^{-1}_{\mb{X}|\mb{W_G},\mb{Z},\wh{\mb{Z}}} +K^{-1}_{\mb{X}|\mb{W},\mb{Z}} - K^{-1}_{\mb{X}|\mb{W_G},\mb{Z}}.
\end{align}

Note that $K_{\mb{X}|\mb{W},\mb{Z},\wh{\mb{Z}}} \preceq K_{({\mb{X}|\wh{\mb{X}},\wh{\mb{Z}}})_L}$ so  $K^{-1}_{\mb{X}|\mb{W},\mb{Z},\wh{\mb{Z}}} \succeq K^{-1}_{({\mb{X}|\wh{\mb{X}},\wh{\mb{Z}}})_L}$.  Then,  from (\ref{cor}) we have 
\begin{align}
\label{lemmainequality}
K^{-1}_{\mb{X}|\mb{W},\mb{Z},\wh{\mb{Z}}} \succeq K^{-1}_{\mb{X}|\mb{W_G},\mb{Z},\wh{\mb{Z}}} +K^{-1}_{\mb{X}|\mb{W},\mb{Z}} - K^{-1}_{\mb{X}|\mb{W_G},\mb{Z}}.
\end{align}

Thus, by (\ref{lemmainequality}) if  $K_{\mb{X}|\mb{W_G},\mb{Z}} = K_{\mb{X}|\mb{W},\mb{Z}}$ then $K_{\mb{X}|\mb{W},\wt{\mb{Z}}} \preceq K_{\mb{X}|\mb{W_G},\wt{\mb{Z}}}$ and if $K_{\mb{X}|\mb{W},\wt{\mb{Z}}} = K_{\mb{X}|\mb{W_G},\wt{\mb{Z}}}$ then $K_{\mb{X}|\mb{W_G},\mb{Z}} \preceq K_{\mb{X}|\mb{W},\mb{Z}}$.
\end{proof}
Lemma \ref{lemma:inequality} leads us to the following corollary.
\begin{corollary}
\label{corr}
Let $(\mb{W},\mb{X},\mb{Z},\wt{\mb{Z}})$ be random vectors such that
$\mb{X}$, $\mb{Z}$, and $\wt{\mb{Z}}$ are jointly Gaussian,
$\mb{W} \lra \mb{X} \lra \wt{\mb{Z}}\lra \mb{Z}$ and
$K_{\mb{X}|\mb{W},\wt{\mb{Z}}} \succ 0$.
Also, let $\wt{D} = (D^{-1} + K_{\mb{X}|\wt{\mb{Z}}}^{-1} -K_{\mb{X}|\mb{Z}}^{-1})^{-1}$. If $K_{\mb{X}|\mb{W},\mb{Z}}= D$ then $K_{\mb{X}|\mb{W},\wt{\mb{Z}}} \preceq \wt{D}$. 
\end{corollary}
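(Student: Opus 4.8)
The plan is to reduce the corollary to the Gaussian Variance-Drop Lemma (Lemma~\ref{lemma:inequality}). The obstacle the lemma is designed to handle is that $\mb{W}$ need not be Gaussian; so I would first build a jointly Gaussian companion $\mb{W_G}$ that is \emph{exactly as informative about $\mb{X}$ as $\mb{W}$ is, once combined with $\mb{Z}$}, and then compute the resulting conditional covariance explicitly using the standard Gaussian ``information‑combining'' identity $K_{\mb{X}|\mb{A},\mb{B}}^{-1}=K_{\mb{X}|\mb{A}}^{-1}+K_{\mb{X}|\mb{B}}^{-1}-K_{\mb{X}}^{-1}$, which holds whenever $(\mb{A},\mb{X},\mb{B})$ is jointly Gaussian and $\mb{A}\lra\mb{X}\lra\mb{B}$.

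First I would note that, since conditioning on more variables cannot enlarge the error covariance, $D=K_{\mb{X}|\mb{W},\mb{Z}}\preceq K_{\mb{X}|\mb{Z}}$, and hence $D^{-1}-K_{\mb{X}|\mb{Z}}^{-1}\succeq 0$ (this also shows $\wt{D}^{-1}=(D^{-1}-K_{\mb{X}|\mb{Z}}^{-1})+K_{\mb{X}|\wt{\mb{Z}}}^{-1}\succ 0$, since $K_{\mb{X}|\wt{\mb{Z}}}\succeq K_{\mb{X}|\mb{W},\wt{\mb{Z}}}\succ 0$, so $\wt{D}$ is well defined). Let $A_G$ be a matrix with $A_G^TA_G=D^{-1}-K_{\mb{X}|\mb{Z}}^{-1}$, let $\mb{N_G}$ be Gaussian with $K_{\mb{N_G}}=I$ and independent of $(\mb{W},\mb{X},\mb{Z},\wt{\mb{Z}})$, and set $\mb{W_G}=A_G\mb{X}+\mb{N_G}$. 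Then $(\mb{W_G},\mb{X},\wt{\mb{Z}},\mb{Z})$ is jointly Gaussian, $(\mb{W},\mb{W_G})\lra\mb{X}\lra\wt{\mb{Z}}\lra\mb{Z}$, and $K_{\mb{X}|\mb{W},\wt{\mb{Z}}}\succ 0$ by hypothesis, so the setup of Lemma~\ref{lemma:inequality} is in force. Moreover $K_{\mb{X}|\mb{W_G}}^{-1}=K_{\mb{X}}^{-1}+A_G^TA_G=K_{\mb{X}}^{-1}+D^{-1}-K_{\mb{X}|\mb{Z}}^{-1}$, so the information‑combining identity (with $\mb{A}=\mb{W_G}$, $\mb{B}=\mb{Z}$) gives $K_{\mb{X}|\mb{W_G},\mb{Z}}^{-1}=K_{\mb{X}|\mb{W_G}}^{-1}+K_{\mb{X}|\mb{Z}}^{-1}-K_{\mb{X}}^{-1}=D^{-1}$; that is, $K_{\mb{X}|\mb{W_G},\mb{Z}}=D=K_{\mb{X}|\mb{W},\mb{Z}}$, exactly the matching condition required by Lemma~\ref{lemma:inequality}.

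Applying Lemma~\ref{lemma:inequality} then yields $K_{\mb{X}|\mb{W},\wt{\mb{Z}}}\preceq K_{\mb{X}|\mb{W_G},\wt{\mb{Z}}}$, and it remains only to identify the right-hand side. Using the same identity with $\mb{B}=\wt{\mb{Z}}$ in place of $\mb{Z}$, $K_{\mb{X}|\mb{W_G},\wt{\mb{Z}}}^{-1}=K_{\mb{X}|\mb{W_G}}^{-1}+K_{\mb{X}|\wt{\mb{Z}}}^{-1}-K_{\mb{X}}^{-1}=D^{-1}-K_{\mb{X}|\mb{Z}}^{-1}+K_{\mb{X}|\wt{\mb{Z}}}^{-1}=\wt{D}^{-1}$, so $K_{\mb{X}|\mb{W_G},\wt{\mb{Z}}}=\wt{D}$ and therefore $K_{\mb{X}|\mb{W},\wt{\mb{Z}}}\preceq\wt{D}$, as claimed. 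The only genuinely delicate point is the very first step — that $\mb{W_G}$ can be constructed at all, i.e. that $D^{-1}-K_{\mb{X}|\mb{Z}}^{-1}\succeq 0$ — which as noted is immediate from $K_{\mb{X}|\mb{W},\mb{Z}}\preceq K_{\mb{X}|\mb{Z}}$; everything else is bookkeeping with Schur complements, with the substantive content delegated to Lemma~\ref{lemma:inequality}.
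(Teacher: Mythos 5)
Your proof is correct and follows essentially the same route as the paper: construct a jointly Gaussian $\mb{W_G}$ with $K_{\mb{X}|\mb{W_G},\mb{Z}} = K_{\mb{X}|\mb{W},\mb{Z}}$, invoke the Gaussian variance-drop lemma (Lemma~\ref{lemma:inequality}) to get $K_{\mb{X}|\mb{W},\wt{\mb{Z}}} \preceq K_{\mb{X}|\mb{W_G},\wt{\mb{Z}}}$, and identify the latter as $\wt{D}$ via the Gaussian information-combining identity. The only difference is that you spell out the explicit construction $\mb{W_G}=A_G\mb{X}+\mb{N_G}$ and verify $D^{-1}-K_{\mb{X}|\mb{Z}}^{-1}\succeq 0$, details the paper leaves implicit.
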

\begin{proof}
We can find $\mb{W_G}$ jointly Gaussian with
$(\mb{X},\wt{\mb{Z}},\mb{Z})$
such that $(\mb{W},\mb{W_G}) \lra \mb{W} \lra
   \wt{\mb{Z}} \lra \mb{Z}$, and 
$K_{\mb{X}|\mb{W_G},\mb{Z}} = K_{\mb{X}|\mb{W},\mb{Z}} =D.$
Then
$K_{\mb{X}|\mb{W_G},\wt{\mb{Z}}}^{-1}  = K_{\mb{X}|\mb{W_G},\mb{Z}}^{-1} + K_{\mb{X}|\wt{\mb{Z}}}^{-1}
                                    - K_{\mb{X}|\mb{Z}}^{-1}$ 
      $= K_{\mb{X}|\mb{W},\mb{Z}}^{-1} + K_{\mb{X}|\wt{\mb{Z}}}^{-1} - K_{\mb{X}|\mb{Z}}^{-1}$.
Lemma \ref{lemma:inequality} then implies the result.
\end{proof}
\section{ }
\label{app:ELB_bound}

\begin{proof}[ Proof of Lemma \ref{lemma:gauss_ELB}]
We show that without loss of optimality, the auxiliary random vectors can be chosen to be jointly Gaussian with $(\mb{X},\mb{Y_1},\mb{Y_2},\mb{Y})$ in $\DSM $ and Enhanced-$\DSM$. Let $\mb{Y} \in S_G$, $(\mb{W},\mb{U}, \mb{V}) \in \wt{C}_{l1}$  ($(\mb{W},\mb{U}, \mb{V}) \in \bar{C}_{l1}$  for Enhanced-$\DSM$) be given and  $R_{lo1} = I(\mb{X};\mb{W},\mb{U}|\mb{Y_1}) + I(\mb{X};\mb{V}|\mb{W},\mb{U},\mb{Y})$ as defined before. Note that  without loss of generality we can write $\mb{Y} = A_{\mb{Y}} \mb{X} + B_{\mb{Y}} \mb{Y_1} + \mb{N}_{\mb{Y}}$,
where $\mb{N}_{\mb{Y}}$ is a Gaussian vector that is independent of the pair
$(\mb{X},\mb{Y_1})$. Then
\begin{align*}
R_{lo1} &= h(\mb{X}|\mb{Y_1}) - h(\mb{X}|\mb{W},\mb{U},\mb{Y_1}) + h(\mb{X}|\mb{W},\mb{U},\mb{Y}) 
- h(\mb{X}|\mb{W},\mb{U},\mb{V},\mb{Y}) 
\\
&= h(\mb{X}|\mb{Y_1}) - h(\mb{X}|\mb{W},\mb{U},\mb{Y_1}) + h(\mb{X}|\mb{W},\mb{U},\mb{Y_1},\mb{Y}) 
- h(\mb{X}|\mb{W},\mb{U},\mb{V},\mb{Y_1},\mb{Y}), \text{ since } \mb{X}-\mb{Y}-\mb{Y_1} 
\\
&=  h(\mb{X}|\mb{Y_1}) -I(\mb{X};\mb{Y}|\mb{W},\mb{U},\mb{Y_1}) - h(\mb{X}|\mb{W},\mb{U},\mb{V},\mb{Y_1},\mb{Y}) 
\\
&=  h(\mb{X}|\mb{Y_1}) +h(\mb{Y}|\mb{X},\mb{Y_1}) -h(\mb{Y}|\mb{W},\mb{U},\mb{Y_1}) 
- h(\mb{X}|\mb{W},\mb{U},\mb{V},\mb{Y_1},\mb{Y}) 
\\
&=  h(\mb{X}|\mb{Y_1}) +h(\mb{Y}|\mb{X},\mb{Y_1}) -h(A_{\mb{Y}}\mb{X} +  \mb{N}_{\mb{Y}}|\mb{W},\mb{U},\mb{Y_1}) 
\\
& \quad
- h(\mb{X}|\mb{W},\mb{U},\mb{V},\mb{Y_1},\mb{Y}),  \mbox{ since } \mb{Y} = A_{\mb{Y}} \mb{X} + B_{\mb{Y}} \mb{Y_1} + \mb{N}_{\mb{Y}}
 \\
&\ge \frac{1}{2}\log{\frac{|K_{\mb{X}|\mb{Y_1}}||K_{\mb{Y}|\mb{X},\mb{Y_1}}|}{|K_{A_{\mb{Y}}\mb{X} +\mb{N}_{\mb{Y}}|\mb{W},\mb{U},\mb{Y_1}}||K_{\mb{X}|\mb{W},\mb{U},\mb{V},\mb{Y_1},\mb{Y}}|}}
\end{align*}
where $K_{A_{\mb{Y}}\mb{X} + \mb{N}_{\mb{Y}}|\mb{W},\mb{U},\mb{Y_1}}
=A_{\mb{Y}}K_{\mb{X}|\mb{W},\mb{U},\mb{Y_1}}A^T_{\mb{Y}} + K_{\mb{N}_{\mb{Y}}}$ and equality holds if $(\mb{W},\mb{U},\mb{V})$ is Gaussian. We can find $(\mb{W_G},\mb{U_G})$ that are jointly Gaussian with $ (\mb{X}, \mb{Y_1}, \mb{Y_2}, \mb{Y})$
such that $(\mb{W_G}, \mb{U_G}) \lra \mb{X} \lra \mb{Y} \lra (\mb{Y_1}, \mb{Y_2})$ and 
$K_{\mb{X}|\mb{W_G},\mb{U_G},\mb{Y_1}} = K_{\mb{X}|\mb{W},\mb{U},\mb{Y_1}}$.
Then by Lemma \ref{lemma:inequality},
$K_{\mb{X}|\mb{W_G},\mb{U_G},\mb{Y}} \succeq K_{\mb{X}|\mb{W},\mb{U},\mb{Y}} \succeq K_{\mb{X}|\mb{W},\mb{U},\mb{V},\mb{Y}}$.
Thus we can find a $\mb{V_G}$ that is jointly Gaussian with $(\mb{W_G},\mb{U_G},\mb{X}, \mb{Y_1}, \mb{Y_2}, \mb{Y})$ such that $(\mb{W_G},\mb{U_G},\mb{V_G}) \lra \mb{X} \lra  \mb{Y} \lra (\mb{Y_1}, \mb{Y_2})$
and $K_{\mb{X}|\mb{W_G},\mb{U_G},\mb{V_G},\mb{Y}} = K_{\mb{X}|\mb{W},\mb{U},\mb{V},\mb{Y}}$, giving $(\mb{W_G}, \mb{U_G}, \mb{V_G}) \in \wt{C}_{l1}$, ($(\mb{W_G}, \mb{U_G}, \mb{V_G}) \in \bar{C}_{l1}$ for Enhanced-$\DSM$).
Therefore, one can choose the auxiliary random vectors to be jointly Gaussian with $(\mb{X}, \mb{Y_1}, \mb{Y_2}, \mb{Y})$
without loss of optimality in $R_{lo1}$. The same argument applies to $R_{lo2}$ as well. 
\end{proof}
\section{}
\label{app:convexity}
\begin{lemma}
\label{lemma:convexity}
$R'_{lo}(\mb{D}, \mb{Y})$ is a convex function with respect to $\mb{D}$.
\end{lemma}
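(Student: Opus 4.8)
The plan is to prove convexity straight from the definition, using a time-sharing (randomization) argument, and treating $R'_{lo}(\cdot,\mb{Y})$ as an extended-real-valued function (value $+\infty$ when $C_l(\mb{D})$ is empty) so that infeasible distortion levels are handled automatically. Recall that $R'_{lo}(\mb{D},\mb{Y})=\inf_{C_l(\mb{D})}\max\{R_{lo1},R_{lo2}\}$, where $C_l(\mb{D})$ and $R_{lo1},R_{lo2}$ are as in Theorem~\ref{thm:minmax}; an additive constant in the definition, if present, does not affect convexity. Fix $\mb{Y}\in S$, distortion pairs $\mb{D}^{(0)}$ and $\mb{D}^{(1)}$, and $\lambda\in(0,1)$, and write $\mb{D}^{(\lambda)}=\lambda\mb{D}^{(0)}+(1-\lambda)\mb{D}^{(1)}$; the cases $\lambda\in\{0,1\}$ and the case in which one of $R'_{lo}(\mb{D}^{(0)},\mb{Y})$, $R'_{lo}(\mb{D}^{(1)},\mb{Y})$ is infinite are trivial, so assume both are finite. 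Given $\eta>0$, choose $(\mb{W}^{(j)},\mb{U}^{(j)},\mb{V}^{(j)})\in C_l(\mb{D}^{(j)})$ with $\max\{R_{lo1},R_{lo2}\}\le R'_{lo}(\mb{D}^{(j)},\mb{Y})+\eta$ for $j=0,1$. Since $R_{lo1}$, $R_{lo2}$ and the covariance constraints in $C_l$ depend only on the joint law of $(\mb{X},\mb{Y_1},\mb{Y_2},\mb{Y})$ together with one triple at a time, we are free to couple the two triples by taking them conditionally independent given $\mb{X}$, which preserves the Markov chains and yields $(\mb{W}^{(0)},\mb{U}^{(0)},\mb{V}^{(0)},\mb{W}^{(1)},\mb{U}^{(1)},\mb{V}^{(1)})\lra\mb{X}\lra(\mb{Y_1},\mb{Y_2},\mb{Y})$. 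Introduce a time-sharing variable $Q$ independent of everything so far with $P(Q=0)=\lambda$, and set $\mb{W}=(\mb{W}^{(Q)},Q)$, $\mb{U}=\mb{U}^{(Q)}$, $\mb{V}=\mb{V}^{(Q)}$.

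The first thing to check is that $(\mb{W},\mb{U},\mb{V})\in C_l(\mb{D}^{(\lambda)})$. The Markov chain $(\mb{W},\mb{U},\mb{V})\lra\mb{X}\lra(\mb{Y_1},\mb{Y_2},\mb{Y})$ is immediate from the coupling and the independence of $Q$. For the distortion constraints, the key point is that, because $\mb{W}$ carries $Q$, the relevant MMSE error-covariance matrices split across $Q$: conditioning on $Q=j$ gives $E[\mb{X}\mid\mb{W},\mb{U},\mb{Y_1},Q=j]=E[\mb{X}\mid\mb{W}^{(j)},\mb{U}^{(j)},\mb{Y_1}]$, so $K_{\mb{X}|\mb{W},\mb{U},\mb{Y_1}}=\lambda K_{\mb{X}|\mb{W}^{(0)},\mb{U}^{(0)},\mb{Y_1}}+(1-\lambda)K_{\mb{X}|\mb{W}^{(1)},\mb{U}^{(1)},\mb{Y_1}}$, and similarly $K_{\mb{X}|\mb{W},\mb{V},\mb{Y_2}}$ is the corresponding convex combination. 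Applying linearity of $\Gamma_i$ (property~1 of Definition~\ref{defn:gamma}) and then $\Gamma_i(K_{\mb{X}|\mb{W}^{(j)},\cdot,\mb{Y_i}})\preceq D_i^{(j)}$ gives $\Gamma_1(K_{\mb{X}|\mb{W},\mb{U},\mb{Y_1}})\preceq\lambda D_1^{(0)}+(1-\lambda)D_1^{(1)}=D_1^{(\lambda)}$ and likewise for $\Gamma_2$, so $(\mb{W},\mb{U},\mb{V})$ is feasible for $\mb{D}^{(\lambda)}$.

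Next I would evaluate the objective at $(\mb{W},\mb{U},\mb{V})$. Since $Q\perp(\mb{X},\mb{Y_1})$, the chain rule gives
\begin{align*}
I(\mb{X};\mb{W},\mb{U}\mid\mb{Y_1}) &= I(\mb{X};Q\mid\mb{Y_1}) + I(\mb{X};\mb{W}^{(Q)},\mb{U}^{(Q)}\mid Q,\mb{Y_1}) \\
&= \lambda\, I(\mb{X};\mb{W}^{(0)},\mb{U}^{(0)}\mid\mb{Y_1}) + (1-\lambda)\, I(\mb{X};\mb{W}^{(1)},\mb{U}^{(1)}\mid\mb{Y_1}),
\end{align*}
and, conditioning on $Q$ in the same way, $I(\mb{X};\mb{V}\mid\mb{W},\mb{U},\mb{Y})$ equals the corresponding convex combination of $I(\mb{X};\mb{V}^{(j)}\mid\mb{W}^{(j)},\mb{U}^{(j)},\mb{Y})$; adding, $R_{lo1}(\mb{W},\mb{U},\mb{V})=\lambda R_{lo1}^{(0)}+(1-\lambda)R_{lo1}^{(1)}$, and likewise $R_{lo2}(\mb{W},\mb{U},\mb{V})=\lambda R_{lo2}^{(0)}+(1-\lambda)R_{lo2}^{(1)}$, where the superscript $(j)$ denotes the value at $(\mb{W}^{(j)},\mb{U}^{(j)},\mb{V}^{(j)})$. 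Using $\max\{\lambda a_0+(1-\lambda)a_1,\ \lambda b_0+(1-\lambda)b_1\}\le\lambda\max\{a_0,b_0\}+(1-\lambda)\max\{a_1,b_1\}$, we obtain $\max\{R_{lo1},R_{lo2}\}(\mb{W},\mb{U},\mb{V})\le\lambda(R'_{lo}(\mb{D}^{(0)},\mb{Y})+\eta)+(1-\lambda)(R'_{lo}(\mb{D}^{(1)},\mb{Y})+\eta)$. Taking the infimum over $C_l(\mb{D}^{(\lambda)})$ and then letting $\eta\downarrow0$ yields $R'_{lo}(\mb{D}^{(\lambda)},\mb{Y})\le\lambda R'_{lo}(\mb{D}^{(0)},\mb{Y})+(1-\lambda)R'_{lo}(\mb{D}^{(1)},\mb{Y})$, as claimed.

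I expect no genuine obstacle here: the only step that needs care is the verification of the distortion constraint, where it is essential that $\Gamma_i$ is \emph{linear} rather than merely monotone, and where one must be careful to fold the time-sharing variable $Q$ into $\mb{W}$ so that both the mutual informations in $R_{lo1},R_{lo2}$ and the conditional covariance matrices decompose affinely over $Q$. Everything else is the standard time-sharing bookkeeping.
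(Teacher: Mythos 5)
Your proposal is correct and follows essentially the same route as the paper's proof in Appendix~\ref{app:convexity}: a time-sharing variable folded into $\mb{W}$, the observation that the conditional error covariances combine affinely so that linearity of $\Gamma_i$ yields feasibility at the interpolated distortion, and the $\max$-of-convex-combinations inequality. The only differences are cosmetic (explicit coupling of the two triples and the $+\infty$ convention for infeasible $\mb{D}$, which the paper leaves implicit).
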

\begin{proof}[Proof of Lemma \ref{lemma:convexity}]
To prove the lemma, we use a similar argument to \cite{wyner}.
 Let $\epsilon > 0$ be given. We can find $(\wt{\mb{W}},\wt{\mb{U}},\wt{\mb{V}})$ and $(\wh{\mb{W}},\wh{\mb{U}},\wh{\mb{V}})$ in $C_l(\wt{\mb{D}})$ and $C_l(\wh{\mb{D}})$ respectively such that
\begin{align*}
& \max \{I(\mb{X};\wt{\mb{W}},\wt{\mb{U}}|\mb{Y_1}) + I(\mb{X};\wt{\mb{V}}|\wt{\mb{W}},\wt{\mb{U}},\mb{Y}),  I(\mb{X};\wt{\mb{W}},\wt{\mb{V}}|\mb{Y_2}) + I(\mb{X};\wt{\mb{U}}|\wt{\mb{W}},\wt{\mb{V}},\mb{Y}) \}
 \le  R'_{lo}(\mb{\wt{D}}, \mb{Y}) + \epsilon
 \\
 & \max \{I(\mb{X};\wh{\mb{W}},\wh{\mb{U}}|\mb{Y_1}) + I(\mb{X};\wh{\mb{V}}|\wh{\mb{W}},\wh{\mb{U}},\mb{Y}),  I(\mb{X};\wh{\mb{W}},\wh{\mb{V}}|\mb{Y_2}) + I(\mb{X};\wh{\mb{U}}|\wh{\mb{W}},\wh{\mb{V}},\mb{Y}) \}
 \le  R'_{lo}(\mb{\wh{D}}, \mb{Y}) + \epsilon.
\end{align*}
Now we construct $(\mb{W},\mb{U},\mb{V})$ and show that it is in  $C_l(\lambda\wt{\mb{D}} + (1-\lambda)\wh{\mb{D}})$. Let $T$ be a binary random variable with $P(T=1) = \lambda$ and independent of $(\wt{\mb{W}},\wt{\mb{U}},\wt{\mb{V}},\wh{\mb{W}},\wh{\mb{U}},\wh{\mb{V}}, \mb{X},\mb{Y_1},\mb{Y_2},\mb{\mb{Y}})$. Then we define 
\begin{align*}
&\mb{W} = (\wt{\mb{W}}, T) \mbox{ if } T=1, &\mb{W} = (\wh{\mb{W}}, T) \mbox{ if } T=0,
\\
&\mb{U} = (\wt{\mb{U}}, T) \mbox{ if } T=1, &\mb{U} = (\wh{\mb{U}}, T) \mbox{ if } T=0,
\\
&\mb{V} = (\wt{\mb{V}}, T) \mbox{ if } T=1, &\mb{V} = (\wh{\mb{V}}, T) \mbox{ if } T=0
\end{align*}
and
\begin{align*}
&g_1(\mb{W},\mb{U},\mb{Y_1}) = E[\mb{X}|\wt{\mb{W}},\wt{\mb{U}},\mb{Y_1}] \mbox{ if } T=1, 
&g_1(\mb{W},\mb{U},\mb{Y_1}) = E[\mb{X}|\wh{\mb{W}},\wh{\mb{U}},\mb{Y_1}] \mbox{ if } T=0,
\\
&g_2(\mb{W},\mb{V},\mb{Y_2}) = E[\mb{X}|\wt{\mb{W}},\wt{\mb{V}},\mb{Y_2}] \mbox{ if } T=1, 
&g_2(\mb{W},\mb{V},\mb{Y_2}) = E[\mb{X}|\wh{\mb{W}},\wh{\mb{V}},\mb{Y_2}] \mbox{ if } T=0.
\end{align*}
Note that $K_{\mb{X}| g_1(\mb{W},\mb{U},\mb{Y_1})} = \lambda K_{\mb{X}| \wt{\mb{W}},\wt{\mb{U}},\mb{Y_1}} + (1-\lambda) K_{\mb{X}| \wh{\mb{W}},\wh{\mb{U}},\mb{Y_1}}$ and since $\Gamma_1$ is a linear operator, $ \Gamma_1(K_{\mb{X}| g_1(\mb{W},\mb{U},\mb{Y_1})}) \preceq
\lambda \wt{D}_1 + (1- \lambda)\wh{D}_1$. Similarly,   that 
$K_{\mb{X}| g_2(\mb{W},\mb{V},\mb{Y_2})} = \lambda K_{\mb{X}| \wt{\mb{W}},\wt{\mb{V}},\mb{Y_2}} + (1-\lambda) K_{\mb{X}| \wh{\mb{W}},\wh{\mb{V}},\mb{Y_2}}$ gives $\Gamma_2(K_{\mb{X}| g_2(\mb{W},\mb{V},\mb{Y_2})})\preceq
\lambda \wt{D}_2 + (1- \lambda)\wh{D}_2$. Hence, $(\mb{W},\mb{U},\mb{V}) \in C_l(\lambda \wt{\mb{D}} + (1- \lambda)\wh{\mb{D}})$. We can write
\begin{align*}
&R'_{lo}(\lambda \wt{\mb{D}} + (1- \lambda)\wh{\mb{D}}, \mb{Y})
\\
& \le \max \{I(\mb{X};\mb{W},\mb{U}|\mb{Y_1}) + I(\mb{X};\mb{V}|\mb{W},\mb{U},\mb{Y}),  I(\mb{X};\mb{W},\mb{V}|\mb{Y_2}) + I(\mb{X}; \mb{U}|\mb{W},\mb{V},\mb{Y}) \}
\\
& = \max \{I(\mb{X};\mb{W},\mb{U},T|\mb{Y_1}) + I(\mb{X};\mb{V}|\mb{W},\mb{U},T,\mb{Y}),  I(\mb{X};\mb{W},\mb{V},T|\mb{Y_2}) + I(\mb{X}; \mb{U}|\mb{W},\mb{V},T,\mb{Y}) \}
\\
& = \max \{I(\mb{X};\mb{W},\mb{U}|\mb{Y_1},T) + I(\mb{X};\mb{V}|\mb{W},\mb{U},T,\mb{Y}),  I(\mb{X};\mb{W},\mb{V}|\mb{Y_2},T) + I(\mb{X}; \mb{U}|\mb{W},\mb{V},T,\mb{Y}) \}
\\
& =  \max \{\lambda I(\mb{X};\wt{\mb{W}},\wt{\mb{U}}|\mb{Y_1}) + (1-\lambda) I(\mb{X};\wh{\mb{W}},\wh{\mb{U}}|\mb{Y_1})+ \lambda I(\mb{X};\wt{\mb{V}}|\wt{\mb{W}},\wt{\mb{U}},\mb{Y}) + (1-\lambda)I(\mb{X};\wh{\mb{V}}|\wh{\mb{W}},\wh{\mb{U}},\mb{Y}), 
\\
&\quad \quad \quad \quad \lambda I(\mb{X};\wt{\mb{W}},\wt{\mb{V}}|\mb{Y_2}) + (1-\lambda) I(\mb{X};\wh{\mb{W}},\wh{\mb{V}}|\mb{Y_2})+ \lambda I(\mb{X};\wt{\mb{U}}|\wt{\mb{W}},\wt{\mb{V}},\mb{Y}) + (1-\lambda)I(\mb{X};\wh{\mb{U}}|\wh{\mb{W}},\wh{\mb{V}},\mb{Y}) \}
\\
& \le \lambda\max \{I(\mb{X};\wt{\mb{W}},\wt{\mb{U}}|\mb{Y_1}) + I(\mb{X};\wt{\mb{V}}|\wt{\mb{W}},\wt{\mb{U}},\mb{Y}),  I(\mb{X};\wt{\mb{W}},\wt{\mb{V}}|\mb{Y_2}) + I(\mb{X};\wt{\mb{U}}|\wt{\mb{W}},\wt{\mb{V}},\mb{Y}) \}
\\
& \quad +(1-\lambda)\max \{I(\mb{X};\wh{\mb{W}},\wh{\mb{U}}|\mb{Y_1}) + I(\mb{X};\wh{\mb{V}}|\wh{\mb{W}},\wh{\mb{U}},\mb{Y}),  I(\mb{X};\wh{\mb{W}},\wh{\mb{V}}|\mb{Y_2}) + I(\mb{X};\wh{\mb{U}}|\wh{\mb{W}},\wh{\mb{V}},\mb{Y}) \}
\\
&\le \lambda R'_{lo}(\wt{\mb{D}}, \mb{Y}) + (1-\lambda)R'_{lo}(\wh{\mb{D}}, \mb{Y}) + \epsilon.
\end{align*}
By letting $\epsilon \rightarrow 0$, we conclude that $R'_{lo}(\mb{D}, \mb{Y})$ is a convex function of $\mb{D}$. 
\end{proof}

\section{}
\label{app:matrix_ineq}
\begin{proof}[Proof of Lemma \ref{lemma:matrixineq}]
First we consider  $A \succ 0$. Using the matrix inversion lemma, we can write 
\begin{align*}
([M^{-1} + A]^{-1})_{diag}& = (A^{-1} - A^{-1}[M + A^{-1}]^{-1}A^{-1} )_{diag} 
\\
& = A^{-1} - A^{-1}([M + A^{-1}]^{-1})_{diag}A^{-1}
\\
&\preceq  A^{-1} - A^{-1}[M_{diag} + A^{-1}]^{-1}A^{-1},
\end{align*}
since $(M_{diag})^{-1} \preceq (M^{-1})_{diag}$,  \cite[Theorem 7.7.8]{Horn}. By the matrix inversion lemma, the right hand side of the last inequality is $[(M_{diag})^{-1} + A]^{-1}$.

Now, we consider   $A \succeq 0$. Without loss of generality we can assume that all positive diagonal entries are on the upper left corner of $A$. Hence we can write 
\begin{align*}
&A= \left(\begin{array}{c c}  A_1&\mf{0}\\ \mf{0}&\mf{0}\end{array} \right), 
\end{align*}
where $A_1 \succ 0$ is an $m_1 \times m_1$ matrix, where $m_1 \le m$.
Also we can represent $M$ in terms of block matrices, 
\begin{align*}
M= \left(\begin{array}{c c}  M_1& M_2\\ M^T_2 & M_3 \end{array} \right), 
\end{align*}
where $M_1 \succ 0$, is an $m_1 \times m_1$ matrix, and $M_3 \succ 0$ is an $(m-m_1) \times (m-m_1)$ matrix.
Then we can write inverse of $M$ as
\begin{align*}
M^{-1}= \left(\begin{array}{c c}  \bar{M}_1& \bar{M}_2\\ \bar{M}^T_2 & \bar{M}_3 \end{array} \right), 
\end{align*}
where 
\begin{align*}
&\bar{M}_1 = (M_1 - M_2M^{-1}_3M^T_2)^{-1}
\\
&\bar{M}_3 = (M_3 - M^T_2 M_3^{-1}M_2)^{-1}
\\
&\bar{M}_2 = -M^{-1}_1M_2(M_1 - M_2M^{-1}_3M^T_2)^{-1}.
\end{align*}

Also,
\begin{align*}
[M^{-1} + A]= \left(\begin{array}{c c}  \bar{M}_1 + A_1& \bar{M}_2\\ \bar{M}_2 & \bar{M}_3 \end{array} \right).
\end{align*}

When we take the inverse of $[M^{-1} + A]$ we have,
\begin{align*}
[M^{-1} + A]^{-1}= \left(\begin{array}{c c}  \wt{M}_1 & \wt{M}_2\\ \wt{M}_2 & \wt{M}_3 \end{array} \right).
\end{align*}
where $\wt{M_2}$ is a matrix in terms of $\bar{M}_1$, $\bar{M}_2$, $\bar{M}_3$, $A$ and 
\begin{align*}
&\wt{M}_1 = (\bar{M}_1 + A_1 - \bar{M}_2\bar{M}^{-1}_3\bar{M}^T_2)^{-1}
\\
&\wt{M}_3 = (\bar{M}_3 - \bar{M}^T_2( \bar{M}_1 + A_1)^{-1}\bar{M}_2)^{-1}.
\end{align*}

Since $M_1 =  (\bar{M}_1 - \bar{M}_2\bar{M}^{-1}_3\bar{M}^T_2)^{-1}$  and $M_3 =  (\bar{M}_3 - \bar{M}^T_2\bar{M}_1^{-1}\bar{M}_2)^{-1}$, we can write
\begin{align*}
&\wt{M}_1 = [M^{-1}_1 + A_1]^{-1}
\\
&\wt{M}_3 \preceq M_3.
\end{align*}

Then utilizing the inequalities above we can write
 \begin{align*}
([M^{-1} + A]^{-1})_{diag}&= \left(\begin{array}{c c}  \wt{M}_1 & \wt{M}_2\\ \wt{M}_2 & \wt{M}_3 \end{array} \right)_{diag}
\\
& \preceq \left(\begin{array}{c c}  ([M^{-1}_1 + A_1]^{-1})_{diag} & \mf{0}\\ \mf{0} & (M_3)_{diag} \end{array} \right)
\\
& \preceq 
[(M_{diag})^{-1} + A]^{-1}.
\end{align*}
\end{proof}


\end{appendices}

\bibliographystyle{IEEEtran}
\bibliography{IEEEabrv,references}

\end{document}